\newcommand{\eat}[1]{}
\newcommand{\ie}{{\em i.e.}}
\newcommand{\eg}{{\em e.g.}}
\newcommand{\tightlist}{\itemsep=-3pt}
\newcommand{\rbox}{\hfill $\Box$}
\newtheorem{definition}{Definition}[section]
\newtheorem{proposition}[definition]{Proposition}
\newtheorem{lemma}[definition]{Lemma}
\newtheorem{theorem}[definition]{Theorem}
\newtheorem{example}[definition]{Example}
\newtheorem{property}[definition]{Property}
\begin{document}


\title{Robust Group Linkage}

%
%
%
%

\numberofauthors{3}
\author{}

\author{
\newcounter{savecntr}
\newcounter{restorecntr}
\alignauthor Pei Li\\
       \affaddr{University of Zurich}\\
       \email{\small peili@ifi.uzh.ch}
\alignauthor Xin Luna Dong \setcounter{savecntr}{\value{footnote}}\thanks{Research conducted at AT\&T Labs--Research.} \\
\affaddr{Google Inc.}\\
\email{\small lunadong@google.com}
\alignauthor Songtao Guo \setcounter{restorecntr}{\value{footnote}}%
  \setcounter{footnote}{\value{savecntr}}\footnotemark
  \setcounter{footnote}{\value{restorecntr}}\\
\affaddr{LinkedIn Corp.}\\
\email{\small songtao.gg@gmail.com}
\and\alignauthor Andrea Maurino\\
       \affaddr{University of Milan-Bicocca}\\
       \email{\small maurino@disco.unimib.it}
\alignauthor
Divesh Srivastava\\
       \affaddr{AT\&T Labs-Research}\\
       \email{\small divesh@research.att.com}
}
\maketitle

\begin{abstract}
{\small
We study the problem of {\em group linkage}: linking
records that refer to entities in the same group.
Applications for group linkage include finding businesses
in the same chain, finding conference attendees from the
same affiliation, finding players from the same team, etc.
Group linkage faces challenges not present for traditional
record linkage. First, although different members in the
same group can share some similar {\em global} values of an attribute,
they represent different entities so can also have
distinct {\em local} values for the same or different attributes, requiring a high {\em tolerance}
for value diversity. Second, groups can be huge (with tens of thousands of records), requiring high \emph{scalability} even after using good blocking strategies.}

{\small
We present a two-stage algorithm:
the first stage identifies {\em cores} containing
records that are very likely to belong to the same group, while being robust to possible erroneous values;
the second stage collects strong evidence from
the cores and leverages it for merging more records into the
same group, while being tolerant to differences in local values of an attribute. 
Experimental results show the high effectiveness and efficiency of our algorithm
on various real-world data sets.
}
\end{abstract}

\section{Introduction}\label{sec:intro}

{\em Record linkage} aims at linking records that refer to the
same real-world entity and it has been extensively studied
in the past years (surveyed in~\cite{EIV07,KSS06}).
In this paper we study a related but different problem that we
call {\em group linkage}: linking
records that refer to entities in the same group.

One major motivation for our work comes from identifying
{\em business chains}--connected business entities that share
a brand name and provide similar products and services
(\eg, {\em Walmart, McDonald's}).
With the advent of the Web and mobile devices,
we are observing a boom in {\em local search}; that is, searching
local businesses under geographical constraints.
Local search engines include {\em Google Maps, Yahoo! Local,
YellowPages, yelp, ezlocal,} etc.
The knowledge of business chains can have a big economic value
to local search engines, as it
allows users to search by business chain,
allows search engines to render the returned results by chains,
allows data collectors to clean and enrich information within
the same chain, allows the associated review system to connect reviews
on branches of the same chain, and allows sales people to target
potential customers. Business listings are rarely associated with
specific chains explicitly in real-world business-listing collections, so we need to 
identify the chains. Sharing the same name, phone number, or
URL domain name can all serve as evidence of belonging to the
same chain. However, for US businesses alone
there are tens of thousands of chains
and as we show soon, we cannot easily develop any rule set that applies to all chains.

We are also motivated by applications where we need to find
people from the same organization, such as counting
conference attendees from the same affiliation,
counting papers by authors from the same institution,
and finding players of the same team. The organization information is often
missing, incomplete, or simply too heterogeneous to
be recognized as the same (\eg, ``International Business
Machines Corporation'', ``IBM Corp.'', ``IBM'', ``IBM Research Labs'',
``IBM-Almaden'', etc., all refer to the same organization).
Contact phones, email addresses, and mailing addresses of people all
provide extra evidence for group linkage, but they can also vary
for different people even in the same organization.

Group linkage faces challenges not present for traditional
record linkage. First, although different members in the
same group can share some similar {\em global} values of an attribute,
they represent different entities so can also have
distinct {\em local} values for the same or different attributes. For example, different branches
in the same business chain can provide different local
phone numbers, different addresses, etc. It is non-trivial
to distinguish such differences from
various representations for the same value and
sometimes erroneous values in the data. Second,
there are often millions of records for group linkage, and a group
can contain tens of thousands of members.
A good blocking strategy should
put these tens of thousands of records in the same block;
but performing record linkage via traditional pairwise comparisons within such huge blocks can be very expensive.
Thus, {\em scalability} is a big challenge.
We use the following example of identifying business chains
throughout the paper for illustration.

\begin{table}
\vspace{-.1in}
 \scriptsize
  \centering
  \caption{\small Identified top-5 US business chains. For each chain,
  we show the number of stores, distinct business names,
  distinct phone numbers, distinct URL domain names,
  and distinct categories. \label{tbl:chain}}
  \begin{tabular}{|c|c|c|c|c|c|}
  \hline
  {\bf Name} &{\bf \#Store}& {\bf \#Name} & {\bf \#Phn}&{\bf \#URL} & {\bf \#Cat}\\
  \hline
  {\em SUBWAY} & 21,912 & 772& 21, 483& 6& 23\\
  {\em Bank of America} & 21,727 & 48 & 6,573 & 186 & 24\\
  {\em U-Haul} & 21,638 & 2,340& 18,384 & 14& 20\\
  {\em USPS - United State Post Office} & 19,225 & 12,345&5,761 &282 &22 \\
  {\em McDonald's} & 17,289 & 2401& 16,607& 568& 47\\
  \hline
\end{tabular}
\vspace{-.2in}
\end{table}
\begin{table}
 \scriptsize
  \centering
\vspace{-.1in}
\caption{\small Real-world business listings. We show only state for
{\sf location} and simplify names of {\sf category}.
There is a wrong value in italic font.  \label{tab:motiv}}
  \begin{tabular}{c|c|c|c|c|c}
  \hline
  {\sf RID} &{\sf name}& {\sf phone}&{\sf URL} (domain)&{\sf location}&{\sf category}\\
  \hline
  $r_{1}$&Home Depot, The&808&&NJ&furniture\\
  $r_{2}$&Home Depot, The&808&&NY&furniture\\
  $r_{3}$&Home Depot, The&808&homedepot&MD&furniture\\
  $r_{4}$&Home Depot, The&808&homedepot&AK&furniture\\
  $r_{5}$&Home Depot, The&808&homedepot&MI&furniture\\
  $r_{6}$&Home Depot, The&101&homedepot&IN&furniture\\
  $r_{7}$&Home Depot, The&102&homedepot&NY&furniture\\
  $r_{8}$&Home Depot, USA&103&homedepot&WV&furniture\\
  $r_{9}$&Home Depot USA&808&&SD&furniture\\
  $r_{10}$&Home Depot - Tools&808&&FL&furniture\\
\hline
  $r_{11}$&Taco Casa&&tacocasa&AL&restaurant\\
  $r_{12}$&Taco Casa&900&tacocasa&AL&restaurant\\
  $r_{13}$&Taco Casa&900&tacocasa,&AL&restaurant\\
    &&&{\em tacocasatexas}&&\\
  $r_{14}$&Taco Casa&900&&AL&restaurant\\
  $r_{15}$&Taco Casa&900&&AL&restaurant\\
\hline
  $r_{16}$&Taco Casa&701&tacocasatexas&TX&restaurant\\
  $r_{17}$&Taco Casa&702&tacocasatexas&TX&restaurant\\
  $r_{18}$&Taco Casa&703&tacocasatexas&TX&restaurant\\
\hline
  $r_{19}$&Taco Casa&704&&NY&food store\\
\hline
  $r_{20}$&Taco Casa&&tacodelmar&AK&restaurant\\
  \hline
\end{tabular}
\vspace{-.1in}
\end{table}

\vspace{-.1in}
\begin{example}
\label{ex:motivation}
We consider a set of 18M real-world business listings in the
US extracted from {\em Yellowpages.com}, each describing a business
by its name, phone number, URL domain name, location,
and category. Our algorithm automatically finds 600K
business chains and 2.7M listings that belong to
these chains. Table~\ref{tbl:chain} lists the largest five
chains we found. We observe that (1) each chain contains
up to 22K different branch stores, (2) different
branches from the same chain can have
a large variety of names, phone numbers,
and URL domain names, and (3) even chains of similar
sizes can have very different numbers of distinct URLs 
(same for other attributes). Thus, rule-based linkage
can hardly succeed and scalability is a necessity.

Table~\ref{tab:motiv} shows a set of 20 business listings
(with some abstraction) in this data set. After
investigating their webpages manually, we find that $r_1-r_{18}$ belong to
three business chains: $\mbox{Ch}_1=\{r_1-r_{10}\},
\mbox{Ch}_2=\{r_{11}-r_{15}\}$,
and $\mbox{Ch}_3=\{r_{16}-r_{18}\}$; $r_{19}$ and $r_{20}$ do not
belong to any chain. Note the slightly different names
for businesses in chain $\mbox{Ch}_1$; also note that $r_{13}$
is integrated from different sources and contains two URLs,
one ({\em tacocasatexas}) being wrong.

Simple linkage rules do not work well on this data set.
For example, if we require only high similarity on {\sf name} for chain
identification, we may wrongly decide that $r_{11}-r_{20}$
all belong to the same chain as they share a popular
restaurant name {\em Taco Casa}. Traditional linkage strategies
do not work well either. If we apply Swoosh-style
linkage~\cite{WMK+09} and iteratively merge records with high similarity on
{\sf name} and shared {\sf phone} or {\sf URL}, we can
wrongly merge $\mbox{Ch}_2$ and $\mbox{Ch}_3$ because of the
wrong URL from $r_{13}$.
If we require high similarity between listings on
{\sf name, phone, URL, category}, we may either split
$r_6-r_8$ out of chain $\mbox{Ch}_1$
because of their different local phone numbers,
or learn a low weight for {\sf phone} but split
$r_9-r_{10}$ out of chain $\mbox{Ch}_1$ since sharing the
same phone number, the major evidence, is downweighted. \rbox
\end{example}

The key idea in our solution is to find strong evidence
that can glue group members together,
while being tolerant to differences in values specific
for individual group members.
For example, we wish to reward sharing of primary values, such as \emph{primary phone numbers}
or \emph{URL domain names} for chain identification, but would not penalize differences
from local values, such as \emph{locations}, \emph{\emph{local phone numbers}}, and even \emph{\emph{categories}}.
For this purpose, our algorithm proceeds in two stages.
First, we identify {\em cores} containing records
that are very likely to belong to the same group.
Second, we collect strong evidence from the resulting cores,
such as primary phone numbers and URL domain names in business chains,
based on which we cluster the cores and remaining
records into groups.
The use of cores and strong evidence distinguishes our
clustering algorithm from traditional clustering techniques
for record linkage. In this process, it is crucial that
core generation makes very few false positives even in the presence
of erroneous values, such that we can avoid ripple
effect on clustering later.
Our algorithm is designed to ensure efficiency and scalability.

The group linkage problem we study in this paper
is different from the group linkage in~\cite{10.1109/ICSC.2010.26, 4221698},
which decides similarity between \emph{pre-specified} groups of records.
Our goal is to find records that belong to the same group
and we make three contributions.
\begin{enumerate}\tightlist
  \item We study core generation in presence of erroneous
    data. Our core is {\em robust} in the sense that even if we remove
    a few possibly erroneous records from a core, we still have
    strong evidence that the rest of the records in the core must belong to
    the same group. 
  \item We then reduce the group linkage problem into
    clustering cores and remaining records. Our clustering
    algorithm leverages strong evidence collected from cores and
    meanwhile is {\em tolerant} to value variety of records in the same group.
  \item We conducted experiments on two real-world data sets
    in different domains,
    showing high efficiency and effectiveness of our algorithms.
\eat{
  containing
    6.8 million of business listings; our algorithm finished in
    2.4 hours and obtained a precision and recall of over .95,
    improving over traditional record linkage techniques,
    which obtained a precision of .8 and a recall of .5.}
\end{enumerate}

Note that we assume prior to group linkage, we first conduct
record linkage (\eg, \cite{DBLP:journals/pvldb/GuoDSZ10}).  
Our experiments show that minor mistakes for record linkage do not significantly affect the results of
group linkage, and records that describe the same entity
but fail to be merged in the record-linkage step are often put into the same group.
We plan to study how to combine record linkage and group linkage
to improve the results of both in the future.

In the rest of the paper, Section~\ref{sec:related} discusses related
work. Section~\ref{sec:problem} defines the problem
and provides an overview of our solution.
Sections~\ref{sec:core}-\ref{sec:sate}
describe the two stages in our solution.
Section~\ref{sec:experiment} describes
experimental results.
Section~\ref{sec:conclude} concludes.

\section{Related Work}\label{sec:related}
Record linkage has been extensively studied in the past
(surveyed in~\cite{EIV07, KSS06}). Traditional linkage
techniques aim at linking records that refer to the same
real-world entity, so implicitly assume value consistency
between records that should be linked. Group linkage
is different in that it aims at linking records that
refer to different entities in the same group. The variety
of individual entities requires better use of
strong evidence and tolerance on different
values even within the same group. These two features differentiate
our work from any previous linkage technique.

For record clustering in linkage, existing work
may apply the transitive rule~\cite{Hernandez98real-worlddata},
or do match-and-merge~\cite{WMK+09},
or reduce it to an optimization problem~\cite{Hassanzadeh09frameworkfor}. 
Our work is different in that our core-identification algorithm
aims at being robust to a few erroneous records; and our
clustering algorithm emphasizes leveraging the
strong evidence collected from the cores.

For record-similarity computation, existing work
can be rule based\\~\cite{Hernandez98real-worlddata},
classification based~\cite{fellegi69},
or distance based~\cite{Dey:2008:EMH:1326361.1326464}.
There has also been work on weight (or model) learning
from labeled data~\cite{fellegi69, Winkler02methodsfor}.
Our work is different in that in addition to learning a 
weight for each attribute, we also learn a weight for each value
based on whether it serves as important evidence for the group.
Note that some previous works are also tolerant to different
values but leverage evidence that may not be available in our contexts:
\cite{Fan:2009:RRM:1687627.1687674} is tolerant to schema 
heterogeneity from different relations by specifying matching rules;
\cite{DBLP:journals/pvldb/GuoDSZ10} is tolerant to possibly
false values by considering agreement between different data providers;
\cite{LDMS11} is tolerant to out-of-date
values by considering time stamps;
we are tolerant to diversity within the same group.

Two-stage clustering has been proposed in the IR and machine learning
community~\cite{DBLP:conf/vldb/BansalCKT07, Larsen:1999:FET:312129.312186, 
Liu:2002:DCC:564376.564411, Clustering07ricochet:a, Yoshida:2010:PND:1835449.1835454}; however, they identify cores in different ways.
Techniques in~\cite{Larsen:1999:FET:312129.312186, Clustering07ricochet:a}
consider a core as a single record, either randomly selected or
selected according to the weighted degrees of nodes in the graph.
Techniques in~\cite{Yoshida:2010:PND:1835449.1835454} generate
cores using agglomerative clustering but can be too conservative
and miss strong evidence. Techniques in~\cite{DBLP:conf/vldb/BansalCKT07}
identify cores as {\em bi-connected components}, where
removing any node would not disconnect the graph.
Although this corresponds to the {\em 1-robustness}
requirement in our solution (defined in Section~\ref{sec:core}),
they generate overlapping clusters; it is not obvious how to
derive non-overlapping clusters in applications such as business-chain
identification and how to extend their techniques to guarantee
$k$-robustness.
Finally, techniques in~\cite{Larsen:1999:FET:312129.312186, Liu:2002:DCC:564376.564411}
require knowledge of the number of clusters for one of the stages,
so do not directly apply in our context.
We compare with these methods whenever applicable in experiments
(Section~\ref{sec:experiment}), showing that our algorithm is
robust in presence of erroneous values and consistently generates
high-accuracy results on data sets with different features.

Finally, we distinguish our work from the {\em group linkage} 
in~\cite{10.1109/ICSC.2010.26, 4221698}, which has different goals.
On et al.~\cite{4221698} decided similarity between pre-specified groups of records
and the group-entity relationship is many-to-many
(\eg, authors and papers). Huang~\cite{10.1109/ICSC.2010.26} decided whether
two pre-specified {\em groups} of records from different data sources
refer to the same group by analysis of social network. Our goal is 
to find records that belong to the same group.

\section{Overview} \label{sec:problem}
This section formally defines the group linkage problem and
provides an overview of our solution.

\subsection{Problem definition}
Let $\bf R$ be a set of records that describe real-world entities by a set of attributes $\bf A$.
For each record $r \in \bf R$, we 
denote by $r.A$ its value on attribute $A \in \bf A$.
Sometimes a record may contain erroneous or missing values.

We consider the {\em group linkage} problem;
that is, finding records that represent entities
belonging to the same real-world group. As an example application, we wish to find
{\em business chains}--a set of business entities
with the same or highly similar names that provide similar products and services (e.g., {\em Walmart}, {\em Home Depot}, {\em Subway} and {\em McDonald's}).\footnote{\small http://en.wikipedia.org/wiki/Chain$\_$store.}
We focus on non-overlapping groups, which often hold in applications.

\begin{definition}[Group linkage]
Given a set $\bf R$ of records, {\em group linkage} identifies a set of clusters $\bf CH$ of records
in $\bf R$, such that (1) records that represent real-world entities
in the same group belong to one cluster,
and (2) records from different groups belong to different clusters. \rbox
\end{definition}

\begin{example}
Consider records in Example~\ref{ex:motivation}, where each record describes a business store (at a distinct location) by attributes {\sf name}, {\sf phone}, {\sf URL}, {\sf location}, and {\sf category}.

The ideal solution to the group linkage problem contains 5 clusters: $\mbox{Ch}_1=\{r_1-r_{10}\}$, $\mbox{Ch}_2=\{r_{11}-r_{15}\}$,
$\mbox{Ch}_3=\{r_{16}-r_{18}\}$, $\mbox{Ch}_4=\{r_{19}\}$,
and $\mbox{Ch}_5=\{r_{20}\}$.
Among them, $\mbox{Ch}_2$ and $\mbox{Ch}_3$ represent two different chains
with the same name. \rbox
\end{example}

\subsection{Overview of our solution}
Group linkage is related to but different from traditional
record linkage because it essentially looks for records that
represent entities in the same group, 
rather than records that represent exactly the same
entity. Different members in the same
group often share a certain amount of commonality
(\eg, common name, primary phone, and URL domain of chain stores), but meanwhile
can also have a lot of differences
(\eg, different addresses, local phone numbers, and local URL domains);
thus, we need to allow much higher
variety in some attribute values to avoid false negatives.
On the other hand, as we have shown
in Example~\ref{ex:motivation}, simply lowering our requirement
on similarity of records or similarity of a few attributes
in clustering can lead to a lot of false positives.

The key intuition of our solution is to distinguish between {\em strong}
evidence and {\em weak} evidence. For example, different branches
in the same business chain often share the same URL domain name
and those in North America often share the same 1-800 phone
number. Thus, a URL domain or phone number shared among many
business listings with highly similar names can serve as strong
evidence for chain identification. In contrast,
a phone number shared by only a couple of business entities
is much weaker evidence, since one might be an erroneous or
out-of-date value. 

To facilitate leveraging strong evidence, our solution consists of two stages.
The first stage collects records that are highly likely to
belong to the same group; for example, a set of business listings
with the same name and phone number are very likely to
be in the same chain. We call the
results {\em cores} of the groups; from them we can collect
strong evidence such as name, primary phone number,
and primary URL domain of chains. The key goal of this stage is to be robust
against erroneous values and make as few false positives as possible,
so we can avoid identifying strong evidence wrongly and
causing incorrect ripple effect later; however, we need to keep in mind that
being too strict can miss important strong evidence.

The second stage clusters cores and remaining records into groups
according to the discovered strong evidence.
It decides whether several cores belong to the same group,
and whether a record that does not belong to any core actually
belongs to some group. It also employs weak evidence, but treats it
differently from strong evidence.
The key intuition of this stage is to leverage
the strong evidence and meanwhile be tolerant to
diversity of values in the same group, so we can reduce false
negatives made in the first stage.

We next illustrate our approach for business-chain identification.
\begin{example}\label{eg:illu}
Continue with the motivating example. In the first stage we
generate three cores: $\mbox{Cr}_1=\{r_{1}-r_{7}\},
\mbox{Cr}_2=\{r_{14},r_{15}\},
\mbox{Cr}_3=\{r_{16}-r_{18}\}$. Records $r_{1}-r_{7}$ are in the same core
because they have the same name, five of them ($r_{1}-r_{5}$)
share the same phone number {\em 808} and five of them ($r_{3}-r_{7}$)
share the same URL {\em homedepot}.
Similar for the other two cores. Note that
$r_{13}$ does not belong to any core, because one of its URLs
is the same as that of $r_{11}-r_{12}$, and one is the same as that of $r_{16}-r_{18}$,
but except name, there is no other common information between these two
groups of records. To avoid mistakes,
we defer the decision on $r_{13}$. Indeed, recall that
{\em tacocasatexas} is a wrong value for $r_{13}$.
For a similar reason, we defer the decision on $r_{12}$.

In the second stage, we generate groups--business chains.
We merge $r_{8}-r_{10}$ with core $\mbox{Cr}_1$, because
they have similar names and share either the primary phone number or the primary URL.
We also merge $r_{11}-r_{13}$ with core $\mbox{Cr}_2$, because
(1) $r_{12}-r_{13}$ share the primary phone {\em 900} with $\mbox{Cr}_2$, and
(2) $r_{11}$ shares the primary URL {\em tacocasa} with $r_{12}-r_{13}$.
We do not merge $\mbox{Cr}_2$ and $\mbox{Cr}_3$ though, because
they share neither the primary phone nor the primary URL.
We do not merge $r_{19}$ or $r_{20}$ to any core, because
there is again not much strong evidence. We thus obtain the
ideal result. \rbox
\end{example}

To facilitate this two-stage solution, we find attributes
that provide evidence for group identification and
classify them into three categories. 
\begin{itemize}\tightlist
  \item {\em Common-value attribute:} We call an attribute $A$
a common-value attribute if all entities
in the same group have the same or highly similar $A$-values.
Such attributes include {\sf business-name} for chain identification
and {\sf organization} for organization linkage.
  \item {\em Dominant-value attribute:} We call an attribute $A$
a dominant-value attribute if entities in the
same group often share one or a few primary $A$-values
(but there can also exist other less-common values),
and these values are seldom used by entities outside the group.
Such attributes include {\sf phone} and {\sf URL-domain}
for chain identification, and {\sf office-address},
{\sf phone-prefix}, and {\sf email-server} for organization linkage.
  \item {\em Multi-value attribute:} We call the rest of
the attributes mutli-value attributes as there is often
a many-to-many relationship between groups and values of
these attributes. Such attributes include
{\sf category} for chain identification.
\end{itemize}
\noindent
The classification can be either learned from training data 
based on cardinality of attribute values, or 
performed by domain experts since there are typically
only a few such attributes.

We describe core identification
in Section~\ref{sec:core} and group linkage
in Section~\ref{sec:sate}. Our algorithms require common-value
and dominant-value attributes, which typically exist for groups
in practice. While we present the algorithms for the setting of one machine, a lot of components of our algorithms can be easily parallelized in Hadoop infrastructure~\cite{Shvachko:2010:HDF:1913798.1914427, Chambers:2010:FEE:1806596.1806638}; it is not the focus of the paper and we briefly describe the opportunities in Section~\ref{sec:sum}. 



\section{Core Identification}\label{sec:core}
The first stage of our solution creates cores consisting of records
that are very likely to belong to the same group.
The key goal in core identification is to be robust to possible erroneous
values. This section starts with presenting the
criteria we wish the cores to meet (Section~\ref{sub-sec:def.}),
then describes how we efficiently construct similarity graphs
to facilitate core finding (Section~\ref{sub-sec:graph}),
and finally gives the algorithm for core identification
(Section~\ref{sub-sec:split}). Note that the notations in this section 
can be slightly different from those in Graph Theory.

\subsection{Criteria for a core}\label{sub-sec:def.}
At the first stage we wish to make only decisions that are highly
likely to be correct; thus, we require that each core contains only
highly similar records, and different cores are fairly different and
easily distinguishable from each other.
In addition, we wish that our results are robust even in the presence
of a few erroneous values in the data.
In the motivating example, $r_{1}-r_{7}$ form a good core,
because {\em 808} and {\em homedepot}
are very popular values among these records.
In contrast, $r_{13}-r_{18}$ do not form a good core,
because records $r_{14}-r_{15}$ and $r_{16}-r_{18}$ do not share
any phone number or URL domain; the only ``connector''
between them is $r_{13}$,
so they can be wrongly merged if $r_{13}$ contains erroneous values.
Also, considering $r_{13}-r_{15}$ and $r_{16}-r_{18}$
as two different cores is risky, because
(1) it is not very clear whether $r_{13}$ is in the same chain
as $r_{14}-r_{15}$ or as $r_{16}-r_{18}$, and (2) these two cores
share one URL domain name so are not fully distinguishable.

We capture this intuition with {\em connectivity of
a similarity graph}. We define the {\em similarity graph} of
a set $\bf R$ of records as an undirected graph,
where each node represents a record in {\bf R}, and an edge
indicates high similarity between the connected records
(we describe later what we mean by {\em high similarity}).
Figure~\ref{fig:graph} shows the similarity
graph for the motivating example.

Each core would correspond to a connected sub-graph
of the similarity graph. We wish such a sub-graph to be {\em robust}
such that even if we remove a few nodes the sub-graph is still
connected; in other words, even if there are some erroneous
records, without them we
still have enough evidence showing that the rest of the records
should belong to the same group. The formal definition goes
as follows.

\begin{definition}[$k$-robustness]
A graph $G$ is \emph{$k$-robust} if after removing arbitrary $k$ nodes
and edges to these nodes, $G$ is still connected.
A clique or a single node is $k$-robust for any $k$. \rbox
\end{definition}

In Figure~\ref{fig:graph}, the subgraph with nodes $r_{1}-r_{7}$ is
2-robust. That with $r_{11}-r_{18}$ is not 1-robust, as removing
$r_{13}$ can disconnect it.

According to the definition, we can partition the similarity graph
into a set of $k$-robust subgraphs. As we do not wish to split
any core unnecessarily, we require the {\em maximal $k$-robust
partitioning}: 

\begin{definition}[Maximal $k$-robust partitioning]\label{def:core-clustering}
Let $G$ be a similarity graph. A partitioning of $G$ is
a {\em maximal $k$-robust partitioning} if it
satisfies the following properties.
\begin{enumerate}\tightlist
\item Each node belongs to one and only one partition.
\item Each partition is $k$-robust.
\item The result of merging any partitions is not $k$-robust. \rbox
\end{enumerate}
\end{definition}

Note that a data set can have more than one maximal $k$-robust partitioning. Consider $r_{11}-r_{18}$ in Figure~\ref{fig:graph}.
There are three maximal 1-robust partitionings:
$\{\{r_{11}\}, \{r_{12}, r_{14}-r_{15}\}, \{r_{13}, r_{16}-r_{18}\}\}$;
$\{\{r_{11}-r_{12}\}, \{r_{14}-r_{15}\}, \{r_{13}, r_{16}-r_{18}\}\}$;
and $\{\{r_{11}-r_{15}\}, \{r_{16}-r_{18}\}\}$.
If we treat each partitioning as a possible world, records that
belong to the same partition in all possible worlds have high
probability to belong to the same group and so form a core.
\eat{In these partitionings, $r_{13}$ is merged either with $r_{16}-r_{18}$
or with $r_{11}, r_{12}, r_{14}, r_{15}$. We wish to defer decision on $r_{13}$
so we exclude it from any core.
}Accordingly, we define a core as follows and can prove its 
$k$-robustness.
\begin{definition}[$k$-Core]
Let $\bf R$ be a set of records and $G$ be the similarity
graph of $\bf R$. The records that
belong to the same subgraph in every maximal $k$-robust partitioning
of $G$ form a {\em $k$-core} of $\bf R$. A core contains at least 2
records.
\rbox
\end{definition}
\begin{property}
A $k$-core is $k$-robust. \rbox
\end{property}

\begin{proof}
If a $k$-core $C_{r}$ of $G$ is not $k$-robust, there exists a maximal $k$-robust partitioning in $G$, where two nodes $r$ and $r'$ in $C_r$ are in different partitions of this partitioning (proved by Lemma~\ref{lem:separator}). This conflicts with the fact that records in $C_r$ belong to the same partition in every maximal $k$-robust partitioning of $G$. Therefore, a $k$-core is $k$-robust.
\end{proof}

\begin{figure}[t]
\centering
\vspace{-0.1in}
\includegraphics[scale=.35]{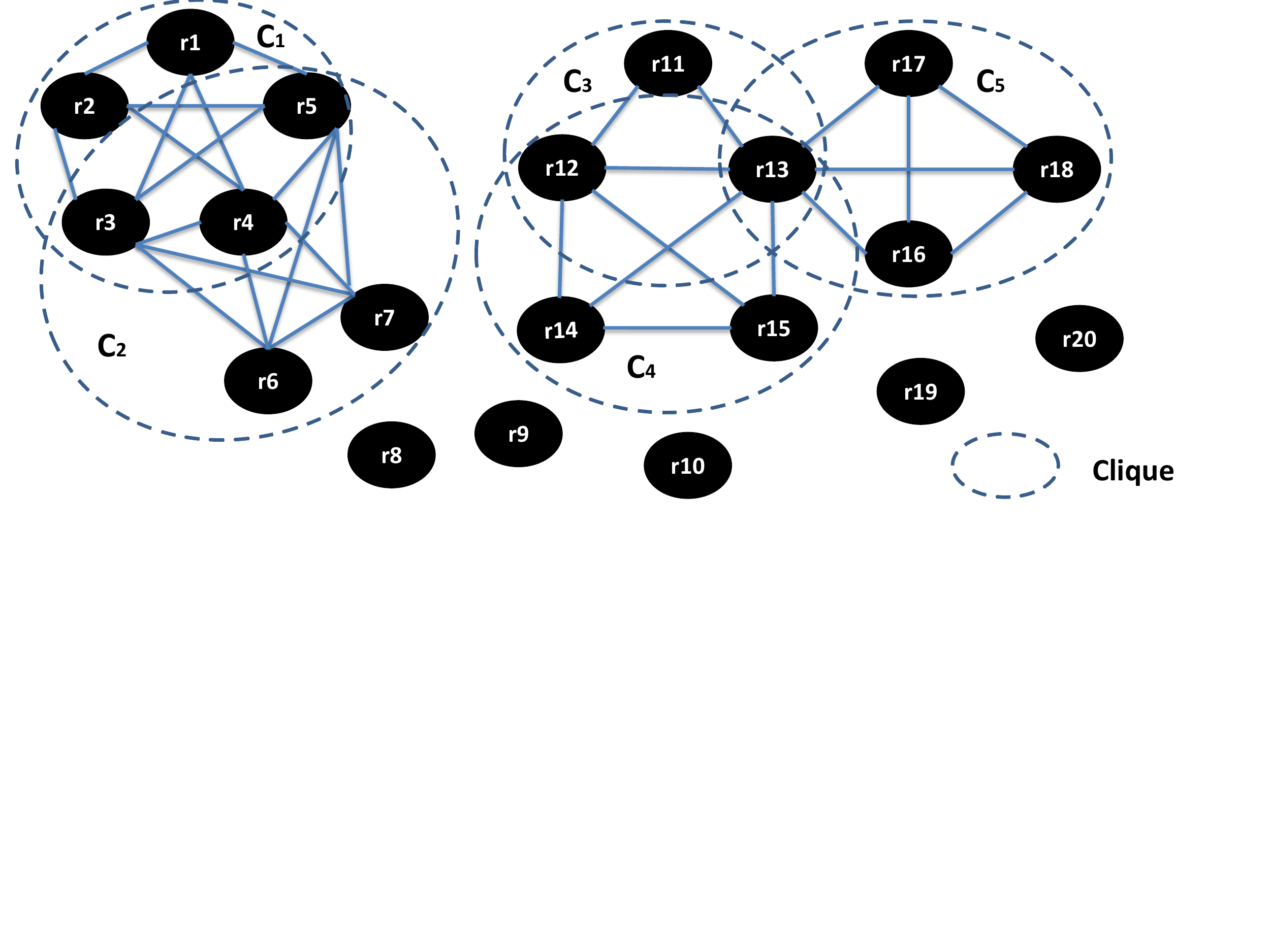}
\vspace{-1.5in} {\small\caption{\label{fig:graph}Similarity graph for
records in Table~\ref{tab:motiv}.}}
\vspace{-.15in}
\end{figure}
\vspace{-.2in}
\begin{example}
\label{ex:core}
Consider Figure~\ref{fig:graph} and assume $k=1$.
There are two connected sub-graphs.
For records $r_{1}-r_{7}$, the subgraph is 1-robust, so they form a $1$-core.
For records $r_{11}-r_{18}$, there are three maximal 1-robust
partitionings for the subgraph, as we have shown.
Two subsets of records belong to the same subgraph in each partitioning:
$\{r_{14}-r_{15}\}$ and $\{r_{16}-r_{18}\}$; they form 2 $1$-cores.\rbox
\end{example}

\begin{table}
\vspace{-.1in}
 \scriptsize
  \centering
  \caption{Simplified inverted index for
the similarity graph in Figure~\ref{fig:graph}.\label{tbl:inverted}}
 \begin{tabular}{|c|l|c|}
\hline
Record & \multicolumn{1}{|c|}{V-Cliques} & Represent\\
\hline
$r_{1/2}$ & $C_1$ & $r_{1}-r_{2}$\\
$r_{3}$ & $C_1, C_2$ & $r_{3}$ \\
$r_{4}$ & $C_1, C_2$ & $r_{4}$ \\
$r_{5}$ & $C_1, C_2$ & $r_{5}$ \\
$r_{6/7}$ & $C_2$ & $r_{6}-r_{7}$ \\
$r_{11}$ & $C_3$ & $r_{11}$\\
$r_{12}$ & $C_3, C_4$ & $r_{12}$\\
$r_{13}$ & $C_3, C_4, C_5$ & $r_{13}$\\
$r_{14/15}$ & $C_4$ & $r_{14}-r_{15}$\\
$r_{16/17/18}$ & $C_5$ & $r_{16}-r_{18}$ \\
\hline
\end{tabular}
 \vspace{-.2in}
\end{table}
%
%
%
%
%
\subsection{Constructing similarity graphs}\label{sub-sec:graph}
Generating the cores requires analysis on the similarity graph.
Even after blocking, a block can contain tens of thousands of records,
so it is not scalable to compare every pair of records in the same block
and create edges accordingly.
We next describe how we construct and represent
the similarity graph in a scalable way.

We add an edge between two records if
they have the same value for each
common-value attribute and share at least one value
on a dominant-value attribute\footnote{\small In practice,
we require only highly similar values for common-value attributes
and apply the transitive rule on similarity (\ie, if $v_1$ and
$v_2$ are highly similar, and so are $v_2$ and $v_3$,
we consider $v_1$ and $v_3$ highly similar).};
our experiments show advantages of this method over
other edge-adding strategies (Section~\ref{sec:exp-core}).
All records that share values on the common-value attributes
and share the same value on a dominant-value attribute
form a clique, which we call a {\em v-clique}.
We can thus represent the graph with a set of v-cliques,
denoted by $\bf C$; for example,
the graph in Figure~\ref{fig:graph}
can be represented by 5 v-cliques ($C_1-C_5$).
In addition, we maintain an {\em inverted index} $\bar L$, where each entry
corresponds to a record $r$ and contains the v-cliques that $r$
belongs to. 
Whereas the size of the
similarity graph can be quadratic in the number of the nodes,
the size of the inverted index is only linear in that number.
The inverted index also makes it easy to find {\em adjacent v-cliques}
(\ie, v-cliques that share nodes), as they appear in the same entry.

Graph construction is then reduced to v-clique finding,
which can be done by scanning values of dominant-value
attributes. In this process, we wish to prune a v-clique if it is
a sub-clique of another one. Pruning by checking every pair
of v-cliques can be very expensive since the number
of v-cliques is also huge. Instead, we do it together with v-clique finding.
Specifically, our algorithm {\sc GraphConstruction} takes
$\bf R$ as input and outputs $\bf C$ and $\bar L$.
We start with ${\bf C}=\bar L=\emptyset$.
For each value $v$ of a dominant-value attribute,
we denote the set of records with $v$ by $\bar R_v$ and
do the following.

\begin{enumerate}\tightlist
\item Initialize the v-cliques for $v$ as $\bf C_v=\emptyset$.
Add a single-record cluster for each record $r\in\bar R_v$
to a working set $\bar T$. Mark each cluster as ``unchanged''.

\item For each $r \in \bar R_v$, scan $\bar L$ and consider
each v-clique $C \in \bar L(r)$ that has not been considered yet.
For all records in $C\cap R_v$, merge their clusters.
Mark the merged cluster as ``changed'' if the result is not a
proper sub-clique of $\bar C$.
If $C \subseteq \bar R_v$, remove $C$ from $\bf C$.
This step removes the v-cliques that must be sub-cliques of
those we will form next.

\item For each cluster $C \in \bar T$, if there exists $C' \in \bf C_v$
such that $C$ and $C'$ share the same value
for each common-value attribute, remove $C$ and $C'$ from
$\bar T$ and $\bf C_v$ respectively, add $C\cup C'$ to $\bar T$
and mark it as ``changed'';
otherwise, move $C$ to $\bf C_v$. This step merges clusters that
share values on common-value attributes. At the end,
$\bf C_v$ contains the v-cliques with value $v$.

\item Add each v-clique with mark ``changed'' in $\bf C_v$
to $\bf C$ and update $\bar L$ accordingly. The marking
prunes size-1 v-cliques and the
sub-cliques of those already in $\bf C$.
\end{enumerate}

\begin{proposition}\label{pro:graphconstruct}
Let $\bf R$ be a set of records. Denote by
$n(r)$ the number of values on dominant-value attributes
from $r \in \bf R$. Let $n=\sum_{r \in \bf R}n(r)$ and
$m=\max_{r \in \bf R}n(r)$. Let $s$ be the maximum v-clique size.
Algorithm {\sc GraphConstruction} (1) runs in time $O(ns(m+s))$,
(2) requires space $O(n)$,
and (3) its result is independent of the order in which we consider
the records. \rbox
\end{proposition}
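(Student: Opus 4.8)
The plan is to establish the three claims separately, leaning throughout on two elementary facts. First, the total number of record--value incidences is $\sum_v |\bar R_v| = n$: each record $r$ contributes its $n(r)$ dominant values, and $r$ lies in $\bar R_v$ exactly for those values $v$. Second, a single record lies in at most $m$ v-cliques, because once its common-value attributes are fixed its v-clique memberships are indexed by its dominant values, so $|\bar L(r)| \le n(r) \le m$. These two facts let me convert all per-value work into a bound expressed in $n$, $m$, and $s$.

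For the running time I would charge all work to the $n$ incidences. Step~1 costs $O(|\bar R_v|)$ per value and hence $O(n)$ in total. For Steps~2--4 I charge $O(s(m+s))$ to each incidence $(r,v)$ with $r \in \bar R_v$. The $sm$ term covers Step~2: for $r$ we scan its at most $m$ v-cliques listed in $\bar L(r)$, and for each considered v-clique we intersect it with $\bar R_v$ and merge the clusters of the records involved, touching $O(s)$ records since every v-clique has size at most $s$. The $s^2$ term covers the cluster-union and proper-sub-clique containment tests used in Steps~2--4: the cluster holding $r$ can grow to size $s$, and over the processing of $v$ it takes part in unions and in containment checks against stored v-cliques, each of size $O(s)$. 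Summing $O(s(m+s))$ over all $n$ incidences and adding the $O(n)$ from Step~1 yields $O(ns(m+s))$.

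For the space bound, the dominant structures are ${\bf C}$ and $\bar L$. Each pair (record, v-clique it belongs to) is recorded once inside $\bar L$ and once inside the corresponding v-clique of ${\bf C}$, and the number of such pairs is $\sum_r |\bar L(r)| \le \sum_r n(r) = n$; hence both ${\bf C}$ and $\bar L$ occupy $O(n)$. The per-value scratch structures $\bar T$ and ${\bf C}_v$ only ever hold a partition of $\bar R_v$, so they use $O(|\bar R_v|) = O(n)$ space, reused across iterations. The total is therefore $O(n)$.

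The main obstacle is claim (3), order independence, which I would prove by giving an order-free description of the output and then showing the algorithm computes it. Call a set of records a \emph{canonical v-clique} if, for some common-value tuple $\vec a$ and dominant value $v$, it is exactly the set of records whose common-value attributes equal $\vec a$ and that carry value $v$; this family is fixed by the data alone. I claim that, viewed as a set of record-cliques, ${\bf C}$ terminates equal to the canonical v-cliques of size at least $2$ that are not a proper subset of any other canonical v-clique, and that $\bar L$ is then determined by ${\bf C}$. I would prove this in two layers. First, for a fixed value $v$, Steps~2--3 merge two records of $\bar R_v$ precisely when they agree on all common-value attributes (directly in Step~3, or indirectly via a shared earlier v-clique in Step~2); as this is the closure of an equivalence relation, the partition ${\bf C}_v$ into canonical v-cliques for $v$ is independent of the order in which records of $\bar R_v$ are visited. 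Second, I would argue by induction on the processed values that after each value ${\bf C}$ holds exactly the maximal canonical v-cliques of size $\ge 2$ among the values seen so far: the ``changed'' mark suppresses any freshly formed clique that is a proper sub-clique of some stored $\bar C$, while the rule deleting every stored $C \subseteq \bar R_v$ removes exactly those stored cliques subsumed by the clique now being built. The delicate case to verify is a pair $C_1 \subseteq C_2$ of canonical v-cliques carried by different dominant values: whichever value is processed first, the smaller clique ends up absent and the larger present (and when $C_1$ and $C_2$ coincide as record sets, a single copy survives), so distinct processing orders agree on ${\bf C}$ as a set of record-cliques, which is what fixes the similarity graph and the inverted index. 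Since both the per-value partition and the global surviving set are order-independent, so is the final output.
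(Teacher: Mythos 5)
Your proof is correct. For claims (1) and (2) it is essentially the paper's own accounting: the paper also bounds Step~2 by $O(nsm)$ (all record--value incidences are scanned, a record is touched at most $m$ times, each merge costs $O(s)$) and Step~3 by $O(ns^2)$, and gets $O(n)$ space from $\bar L$ plus the per-value scratch sets; your observation that ${\bf C}$ itself stores at most $n$ record--clique pairs is a small but welcome addition the paper omits. The genuine divergence is claim (3). The paper disposes of it in two sentences: Step~2 is a transitive-closure merge, hence order-free, and the order independence of the match-and-merge in Step~3 is delegated entirely to a citation of Benjelloun et al.'s generic-entity-resolution results. You instead give a self-contained argument by exhibiting an order-free description of the output --- the surviving cliques are exactly the maximal canonical v-cliques of size at least two, each determined by a common-value tuple and a dominant value --- and proving by induction over processed values that the ``changed'' marking and the rule deleting stored cliques $C \subseteq \bar R_v$ maintain precisely this set. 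This buys a proof that does not lean on an external result and that additionally pins down \emph{what} the algorithm computes, which is useful for the later structural claims about the similarity graph; the price is having to treat the nested-clique case across different dominant values explicitly, which you correctly flag and resolve as the delicate point. Both routes are sound.
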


\begin{proof}
\label{proof:graph}
We first prove that {\sc GraphConstruction} runs in time $O(ns(m+s))$. Step 2 of the algorithm takes in time $O(nsm)$, where it takes in time $O(ns)$ to scan all records for a dominant-value attribute, and a record can be scanned maximally $m$ times. Step 3 takes in time $O(ns^2)$. Thus, the algorithm runs in time $O(ns(m+s))$.

We next prove that {\sc GraphConstruction} requires space $O(n)$. For each value $v$ of a dominate-value attribute, the algorithm keeps three data sets: $\bar L$ that takes in space $O(n)$, $\bf C_v$ and $\bar T$ that require space in total no greater than $O(\bf |R|)$. Since $O(n)\geq O(\bf |R|)$, the algorithm requires space $O(n)$.

We now prove that the result of {\sc GraphConstruction} is order independent. Given $\bar L$ and $\bar R_v$, Step 2 scan $\bar L$ and apply transitive rule to merge clusters of records in $C \cap \bar R_v$, for each v-clique $C \in \bar L$. The process is independent from the order in which we consider the records in $\bar R_v$. The order independence of the result in Step 3 is proven in~\cite{DBLP:journals/vldb/BenjellounGMSWW09}. Therefore, the final result is independent from the order in which we consider the records.
\end{proof}


%
%
%
%
\begin{example}\label{ex:graph}
Consider graph construction for records in
Table~\ref{tab:motiv}. Figure\ref{fig:graph} shows the
similarity graph and Table~\ref{tbl:inverted}(a)
shows the inverted list.
We focus on records $r_1-r_8$ for illustration.

First, $r_{1}-r_{5}$ share the same name and phone number {\em 808},
so we add v-clique $C_1=\{r_{1}-r_{5}\}$ to $\bf C$.
Now consider URL {\em homedepot} where $\bar R_v=\{r_{3}-r_{8}\}$.
Step 1 generates 6 clusters, each marked ``unchanged'', and
$\bar T = \{\{r_{3}\}, \dots, \{r_{8}\}\}$.
Step 2 looks up $\bar L$ for each record in $\bar R_v$.
Among them, $r_{3}-r_{5}$ belong to v-clique $C_1$, so it
merges their clusters and marks the result $\{r_3-r_5\}$ ``unchanged''
($\{r_3-r_5\} \subset C_1$); then,
$\bar T = \{\{r_{3}-r_{5}\}, \{r_{6}\}, \{r_{7}\}, \{r_8\}\}$.
Step 3 compares these clusters and merges the first three
as they share the same name, marking the result as ``changed''.
At the end, ${\bf C}_v=\{\{r_{3}-r_{7}\}, \{r_8\}\}$.
Finally, Step 4 adds $\{r_{3}-r_{7}\}$ to $\bf C$
and discards $\{r_8\}$ since it is marked ``unchanged''. \rbox
\end{example}

Given the sheer number of records in $\bf R$,
the inverted index can still be huge.
In fact, according to the following theorem, records in the same v-clique
but not any other v-clique must belong to the same core,
so we do not need to distinguish them.
Thus, we simplify the inverted index such that
for each v-clique we keep only a {\em representative}
for nodes belonging only to this v-clique.
Table~\ref{tbl:inverted} shows the simplified index for
the similarity graph in Figure~\ref{fig:graph}.

\begin{theorem}\label{thm:simplify}
Let $G$ be a similarity graph and $G'$ be a graph derived from $G$
by merging nodes that belong to only one and the same $v$-clique. 
Two nodes belong to the same core of $G'$ if and only if
they belong to the same core of $G$. \rbox
\end{theorem}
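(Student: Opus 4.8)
The plan is to show that contracting the private nodes preserves the entire collection of maximal $k$-robust partitionings of the similarity graph, so that cores---defined as the sets of nodes that stay together across all such partitionings---are preserved as well. I would phrase the whole argument as a bijection between the maximal $k$-robust partitionings of $G$ and those of $G'$, from which core equivalence is immediate by the definition of a $k$-core.

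First I would record the structural fact that the merged nodes are \emph{true twins}: a record $r$ belonging to only the v-clique $C$ has as its neighbors exactly $C\setminus\{r\}$, so two such records have identical closed neighborhoods (both equal to $C$) and are adjacent. Hence each merged group $T_C$ sits entirely inside the clique $C$ and is attached to the rest of the graph only through the shared nodes $C\setminus T_C$, which are themselves pairwise adjacent. The intuition is that such a bundle of parallel copies is redundant for connectivity: between any two shared members of $C$ there is already a direct edge, so the twins never supply an essential path, and their only role is to stay attached to the clique.

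The core technical step I would isolate is a \emph{robustness-transfer lemma}: for any node set $S$ that contains either all or none of each twin group, $G[S]$ is $k$-robust if and only if $G'[\phi(S)]$ is $k$-robust, where $\phi$ replaces each contained group by its representative. Since contracting the (clique-connected) groups preserves and reflects connectivity, the hard part is not connectivity per se but the node count built into the definition of $k$-robustness: collapsing $t$ twins to one representative lets an adversary delete that part of $C$ at cost $1$ instead of $t$. This is the step I expect to be the main obstacle. I would resolve it by proving that a twin group, being pendant to the clique $C$, can be deleted from any connected twin-respecting subgraph without disconnecting the remaining nodes (a shared node of $C$ must survive whenever $G[S]$ is connected and $S\neq T_C$). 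That lets me match a cheap deletion of a representative in $G'$ with a harmless deletion of the whole group in $G$, and conversely pad a partial deletion of twins in $G$ up to deleting the single representative in $G'$, so neither graph is spuriously more robust than the other.

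Next I would use maximality to show every maximal $k$-robust partitioning of $G$ is twin-respecting: if such a partitioning split a group $T_C$ across blocks $P\ni r$ and $P'\ni r'$, then using the twin structure together with Lemma~\ref{lem:separator} I would argue that $P\cup P'$ is again $k$-robust, contradicting the maximality condition that merging blocks never yields a $k$-robust subgraph. With this, $\phi$ and its inverse give the desired bijection: the transfer lemma sends $k$-robust blocks to $k$-robust blocks and also carries over the ``merging any blocks is not $k$-robust'' condition, so maximality is preserved in both directions, and two nodes of $G'$ share a block exactly when their preimages share a block in $G$. Consequently, a set of $G'$-nodes lies together in \emph{every} maximal $k$-robust partitioning of $G'$ iff its preimage does so in $G$, which by the definition of a $k$-core is precisely the claimed equivalence. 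Finally, twin-respecting-ness shows that the records of any single group $T_C$ always share a block, hence the same core of $G$, which is exactly the remark stated just before the theorem.
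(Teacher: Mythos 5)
Your proposal is correct and follows essentially the same route as the paper's own proof: both arguments transfer maximal $k$-robust partitionings between $G$ and $G'$ by showing that nodes belonging to a single v-clique are simplicial (their neighborhood is a clique), hence never essential for connectivity, so contracting them preserves $k$-robustness of each block and the maximality condition, and core equivalence follows from the definition of a $k$-core. Your write-up is in fact somewhat more careful than the paper's, since you explicitly isolate the budget-counting subtlety in the robustness-transfer lemma and the need to prove that every maximal $k$-robust partitioning keeps each twin group together (the well-definedness of the induced partitioning), a step the paper's proof uses implicitly without justification.
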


\begin{proof}
\label{proof:simplify}
We need to prove that (1) if two nodes $r$ and $r'$ belong to the same core in $G'$, they are in the same core of $G$, and (2) if two nodes $r$ and $r'$ belong to the same core of $G$, they are in the same core of $G'$. 

We first prove that if two nodes $r$ and $r'$ belong to the same core in $G'$, they are in the same core of $G$. Suppose there does not exist any core in $G$ that contains both $r$ and $r'$. It means that there exists a maximal $k$-robust partitioning in $G$, where $r$ and $r'$ are in different partitions. Let $P$ be such a partitioning of $G$ and we consider partitioning $P'$ of $G'$, where each pair of nodes in the same partition $C$ of $P$ are in the same partition $C'$ of $P'$ and vice versa. We prove that $P'$ is a maximal $k$-robust partitioning in $G'$. (1) It is obvious that each node in $P'$ belongs to one and only one partition. (2) For each partition $C'$ in $P'$, removing any $k$ nodes in $C'$ is equivalent to removing $n+m$ nodes in $C$, where $n$ nodes belong to more than one $v$-cliques in $C$, $m$ nodes belong to single $v$-cliques in $C$, and $n \leq k$. Since removing $m$ nodes that belong to single $v$-cliques do not disconnect $C$ and we know $n \leq k$, removing the $n+m$ nodes does not disconnect $C$. It in turn proves that removing $k$ nodes in $C'$ does not disconnect $C'$, and $C'$ is $k$-robust. (3) Similarly, we have that the result of of merging any partitions in $P'$ is not $k$-robust. Therefore, $P'$ is a maximal $k$-robust partitioning in $G'$. Given that $r$ and $r'$ are in different partitions of $P'$, there does not exist a core of $G'$ that contains both $r$ and $r'$. This conflicts with the fact that $r$ and $r'$ belong to the same core in $G'$, and further proves that $r$ and $r'$ are in the same core of $G$. 

We next prove that if two nodes $r$ and $r'$ belong to the same core of $G$, they are in the same core of $G'$. Suppose there does not exist any core in $G'$ that contains both $r$ and $r'$. It means that there exists a maximal $k$-robust partitioning in $G'$, where $r$ and $r'$ are in different partitions. Let $P'$ be such a partitioning of $G'$ and we consider partitioning $P$ of $G$, where each pair of nodes in the same partition $C'$ of $P'$ are in the same partition $C$ of $P$ and vice versa. In similar ways as above, we have that $P$ is a maximal $k$-robust partitioning in $G$. Given that $r$ and $r'$ are in different partitions of $P$, there does not exist a core of $G$ that contains both $r$ and $r'$. This conflicts with the fact that $r$ and $r'$ belong to the same core in $G$, and further proves that $r$ and $r'$ are in the same core of $G'$.
%
%
%
\end{proof}

\eat{
\begin{proposition}\label{prop:simplify}
Let $G$ be a similarity graph.
Let $r$ and $r'$ be two nodes in $G$ that belong to the same
v-clique $C$ and not any other v-clique. Then, $r$ and $r'$ must
belong to the same partition in any maximal $k$-robust partitioning.\rbox
\end{proposition}
}


\smallskip
\noindent
{\bf Case study:} On a data set with 18M records (described in Section~\ref{sec:experiment}), our graph-construction algorithm finished in 1.9 hours. The original similarity graph contains 18M nodes and 4.2B edges. The inverted index is of size 89MB, containing 3.8M entries, each associated with at most 8 v-cliques; in total there are 1.2M v-cliques. The simplified inverted index is of size 34MB, containing 1.5M entries, where an entry can represent up to 11K records. Therefore, the simplified inverted index reduces the size of the similarity graph by 3 orders of magnitude.

\subsection{Identifying cores}\label{sub-sec:split}
We solve the core-identification problem by reducing it to a Max-flow/Min-cut Problem.
However, computing the max flow for a given graph $G$
and a source-destination pair
takes time $O(|G|^{2.5})$, where $|G|$ denotes the
number of nodes in $G$; even the simplified inverted index can still contain
millions of entries, so it can be very expensive.
We thus first merge certain v-cliques according to a sufficient
(but not necessary) condition for $k$-robustness
and consider them as a whole in core identification;
we then split the graph into subgraphs according to a necessary
(but not sufficient) condition for $k$-robustness. We apply reduction only on
the resulting subgraphs, which are substantially smaller as
we show at the end of this section. Section~\ref{sub-sec:prun} describes
screening before reduction, Section~\ref{subsubsec:split}
describes the reduction, and Section~\ref{subsubsec:core-detection}
gives the full algorithm, which iteratively applies screening
and the reduction. 

\begin{figure}[t]
\centering
\includegraphics[scale=.35]{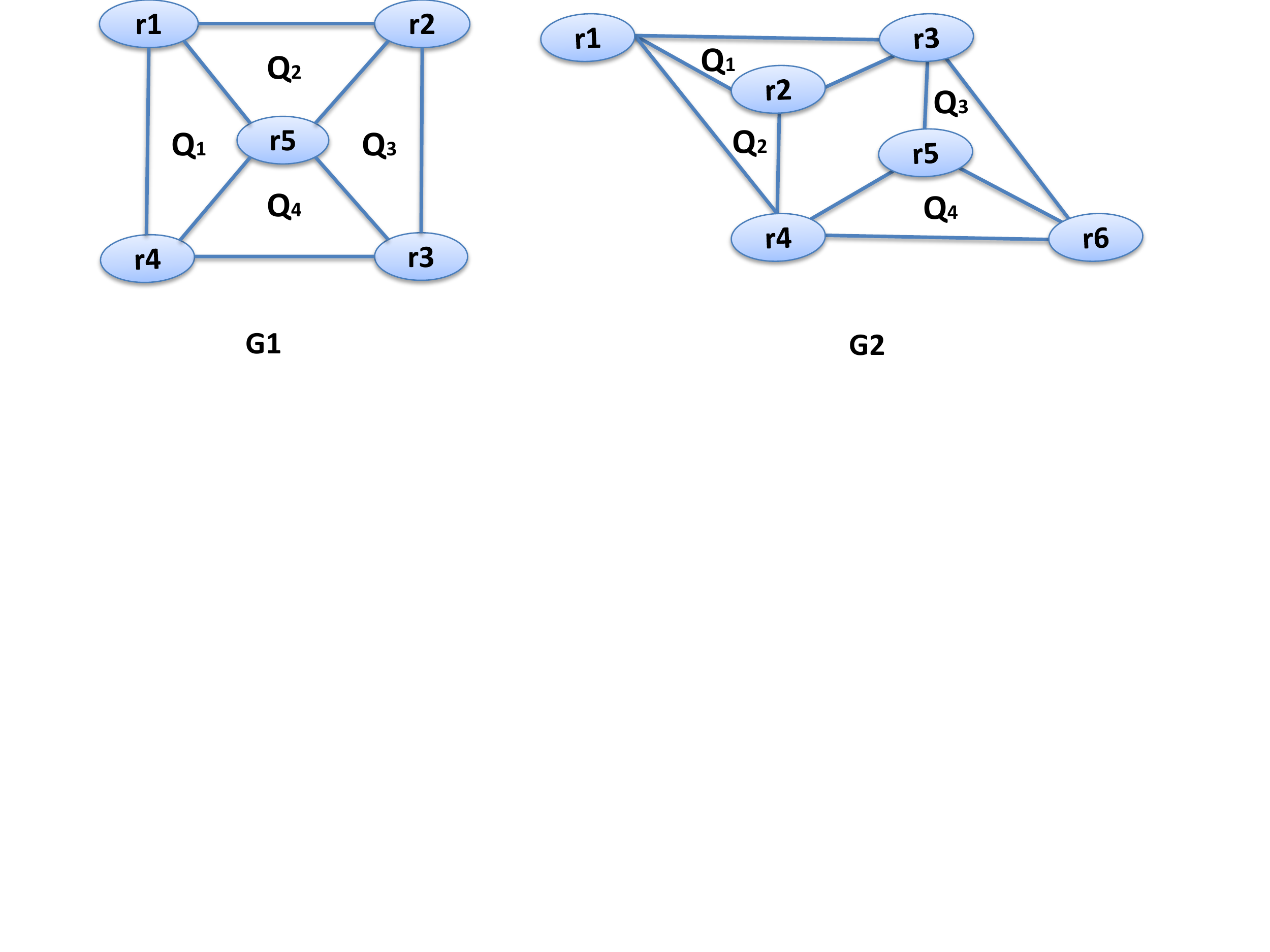}
\vspace{-1.9in}
{\small\caption{Two example graphs.\label{fig:merge-loop}}}
\vspace{-.15in}
\end{figure}
\subsubsection{Screening}\label{sub-sec:prun}
A graph can be considered as a union of
v-cliques, so essentially we need to decide if a
union of v-cliques is $k$-robust.
First, we can prove the following sufficient condition for
$k$-robustness.

\begin{theorem}[$(K+1)$-connected condition]\label{thm:union}
Let $G$ be a graph consisting of a union $Q$ of v-cliques.
If for every pair of v-cliques $C, C' \in Q$, there is a path
of v-cliques between $C$ and $C'$ and every pair of adjacent
v-cliques on the path share at least $k+1$ nodes,
graph $G$ is $k$-robust.\rbox
\end{theorem}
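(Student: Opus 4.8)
The plan is to work directly from the definition of $k$-robustness: I would fix an arbitrary set $X$ of $k$ nodes, delete it together with all incident edges, and show that $G - X$ remains connected. If $Q$ consists of a single v-clique then $G$ is itself a clique and is $k$-robust for any $k$ by definition, so I would assume $|Q| \geq 2$. Under this assumption every v-clique $C \in Q$ has at least $k+1$ nodes: since $|Q| \geq 2$, $C$ lies on a connecting path to some other v-clique and is therefore adjacent on that path to a neighbor with which it shares at least $k+1$ nodes, and those shared nodes already lie inside $C$. Hence $C \setminus X$ is nonempty, and being a clique it is connected.

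The central step is to show that adjacent v-cliques survive as glued-together connected pieces. For any two v-cliques $C, C'$ that are adjacent on a path, they share at least $k+1$ nodes; removing the $k$ nodes of $X$ leaves at least $k+1-k = 1$ of these shared nodes intact. That surviving common node belongs to both $C \setminus X$ and $C' \setminus X$, so their union is connected. Chaining this observation along an entire path of v-cliques, every consecutive pair shares a surviving node, and therefore the union of all v-cliques on the path, minus $X$, is connected.

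To finish, I would fix a reference v-clique $C_0$. By hypothesis every other v-clique $C'$ is joined to $C_0$ by a path on which adjacent v-cliques share at least $k+1$ nodes, so by the previous step $C' \setminus X$ lies in the same connected component as $C_0 \setminus X$. Since $G$ is the union of the v-cliques in $Q$, every surviving node lies in some $C' \setminus X$ and hence in that single component, so $G - X$ is connected; as $X$ was arbitrary, $G$ is $k$-robust.

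I expect the only delicate point to be the bookkeeping rather than any deep idea: the pigeonhole bound $k+1-k \geq 1$ must be applied to the \emph{shared} nodes of each adjacent pair and not to a v-clique's total size, the possibility of an empty surviving v-clique must be excluded via the $|C| \geq k+1$ argument, and the fact that every node lies in some v-clique must be invoked to cover nodes that are not on a particular chosen path. Dispatching the single-v-clique boundary case first is what makes the $|C| \geq k+1$ claim valid.
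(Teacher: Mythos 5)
Your proof is correct, but it takes a genuinely different route from the paper's. The paper invokes Menger's theorem and argues that every pair of nodes $r,r'$ is joined by at least $k+1$ internally disjoint paths, splitting into the case where $r,r'$ share a v-clique (using the clique's size to produce one direct edge and $k$ length-two paths) and the case where they lie in different v-cliques (asserting that the chain of $(k+1)$-node overlaps yields $k+1$ disjoint paths). You instead verify the definition of $k$-robustness directly: delete an arbitrary set $X$ of $k$ nodes, note that each surviving v-clique remnant $C\setminus X$ is a nonempty connected clique (nonemptiness coming from $|C|\ge k+1$, which you correctly derive from the overlap hypothesis after dispatching the single-v-clique case), and observe that any two adjacent v-cliques on a path retain at least one common surviving node by pigeonhole, so the remnants glue into a single component containing every surviving node. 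Your argument is more elementary and, arguably, more airtight: the paper's second case asserts the existence of $k+1$ disjoint paths through a chain of overlaps without constructing them, and actually threading $k+1$ vertex-disjoint paths through successive overlap sets takes some care, whereas your version only ever needs \emph{one} surviving shared node per adjacent pair. What the paper's Menger-based phrasing buys is continuity with the rest of Section 4, where $k$-robustness is repeatedly identified with the value $\kappa(a,b)$ and ultimately reduced to a max-flow computation; your proof does not set up that correspondence, but for this theorem it is not needed.
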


\begin{proof}
\label{proof:k-connected}
Given Menger's Theorem~\cite{Bruhn:2005:MTI:1384589.1384594}, graph $G$ is $k$-robust if for any pair of nodes $r, r'$ in $G$, there exists at least $k+1$ independent paths that do not share any nodes other than $r, r'$ in $G$. We now prove that for any pair of nodes $r, r'$ in graph $G$ that satisfies $(k+1)$-connected condition, there exists at least $k+1$ independent paths between $r, r'$. We consider two cases, 1) $r, r'$ are adjacent such that there exists a v-clique in $G$ that contains $r, r'$; 2) $r, r'$ are not adjacent such that there exists no v-clique in $G$ that contains $r, r'$.

We first consider Case 1 where there exists a v-clique $C$ containing $r, r'$. Since each v-clique in $G$ has more than $k+1$ nodes, there exist at least $k$ 2-length paths and one 1-length path between $r, r' \in C$. It proves that there exists at least $k+1$ independent paths between $r$ and $r'$.

We next consider Case 2 where there exists no v-clique containing $r, r'$ in $G$. Suppose $r\in C, r'\in C'$, where $C, C'$ are different v-cliques in $G$. Since there exists a path of v-cliques between $C$ and $C'$ where every pair of adjacent v-cliques in the path share at least $k+1$ nodes, there exists at least $k+1$ independent paths between $r$ and $r'$.

Given the above two cases, we have that there exist at least $k+1$ independent paths between every pair of nodes in $G$, therefore $G$ is $k$-robust.
\end{proof}

We call a single v-clique or
a union of v-cliques that satisfy the $(k+1)$-connected
condition a {\em $(k+1)$-connected v-union}.
A $(k+1)$-connected v-union must be $k$-robust but
not vice versa.
In Figure~\ref{fig:graph}, subgraph $\{r_{1}-r_{7}\}$ is
a 3-connected v-union, because the only two v-cliques,
$C_1$ and $C_2$, share 3 nodes. Indeed, it is 2-robust.
On the other hand, graph $G_1$ in Figure~\ref{fig:merge-loop}
is 2-robust but not 3-connected (there are 4 v-cliques, where each pair of
adjacent v-cliques share only 1 or 2 nodes).
Accordingly, we can consider a v-union
as a whole in core identification.

Next, we present a necessary condition for $k$-robustness.
\begin{theorem}[$(K+1)$-overlap condition]\label{thm:necessary}
Graph $G$ is $k$-robust only if for every $(k+1)$-connected
v-union $Q\in G$, $Q$ shares at least
$k+1$ common nodes with the subgraph consisting of
the rest of the v-unions. \rbox
\end{theorem}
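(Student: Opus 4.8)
The plan is to establish the necessary condition by contraposition: assuming the $(k+1)$-overlap condition fails for some $(k+1)$-connected v-union $Q$, I would produce a set of at most $k$ nodes whose removal disconnects $G$, so that $G$ cannot be $k$-robust. Let $Q'$ denote the subgraph induced by all v-cliques of $G$ that do not belong to $Q$, and set $S = V(Q)\cap V(Q')$. Failure of the condition means $|S|\le k$. I would then partition the nodes of $G$ into $A = V(Q)\setminus S$, the shared set $S$, and $B = V(Q')\setminus S$, assuming (see below) that both $A$ and $B$ are nonempty.

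The crux is a structural lemma: $S$ separates $A$ from $B$, i.e. no edge of $G$ joins a node of $A$ to a node of $B$. Here I would invoke the construction of the similarity graph from Section~\ref{sub-sec:graph}, under which every edge is contained in some v-clique. A node $a\in A$ occurs only in v-cliques of $Q$, while a node $b\in B$ occurs only in v-cliques of $Q'$; an edge $ab$ would require a single v-clique containing both, which is impossible since no v-clique lies in both $Q$ and $Q'$. Consequently every path from $a$ to $b$ must meet $S$, and in particular $a$ and $b$ are non-adjacent.

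To convert this into a contradiction I would reuse the Menger-based characterization already employed in the proof of Theorem~\ref{thm:union}: a $k$-robust graph contains at least $k+1$ internally node-disjoint paths between any pair of its nodes. Fix $a\in A$ and $b\in B$. Since every path from $a$ to $b$ passes through $S$ and internally disjoint paths use distinct nodes of $S$, there can be at most $|S|\le k$ internally disjoint paths from $a$ to $b$. This contradicts $k$-robustness and proves the contrapositive.

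The main obstacle is not the separator argument but the boundary bookkeeping around it. I must argue that $A$ and $B$ are genuinely nonempty: $B\neq\emptyset$ asks that the remaining v-cliques contribute at least one node outside $Q$, and $A\neq\emptyset$ asks that $Q$ owns at least one private node not shared with the rest; when either fails, $Q$ either coincides with the whole graph or contributes no private node, so the overlap condition is vacuously satisfied and these degenerate cases may be excluded at the outset. Using the Menger formulation rather than a direct minimum-cut argument is deliberate: it lets me bypass padding $S$ up to exactly $k$ removed nodes and instead contradict $k$-robustness directly through the path count, and it also sidesteps the clique exception in the definition of $k$-robustness, since the pair $a,b$ we contradict on is already non-adjacent. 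This mirrors the earlier proof and keeps the argument self-contained.
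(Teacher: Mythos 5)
Your proof is correct and follows essentially the same route as the paper's: both argue the contrapositive by observing that the at most $k$ shared nodes form a separator whose removal disconnects $Q$ from the rest of the graph. You simply flesh out what the paper states in one line---justifying via the v-clique structure that no edge crosses from $V(Q)\setminus S$ to the rest, handling the degenerate nonemptiness cases, and phrasing the conclusion through Menger's theorem rather than the raw separator count---so the substance is the same.
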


\begin{proof}
\label{proof:k-overlap}
We prove that if graph $G$ contains a $(k+1)$-connected v-union $Q$ that shares at most $k$ common nodes with the rest of the graph, $G$ is not $k$-robust. Since $Q$ shares at most $k$ common nodes with the subgraph consisting of the rest of the v-unions, removing the common nodes will disconnect $Q$ from $G$, it proves that $G$ is not $k$-robust. Thus, $(k+1)$-overlap condition holds.
\end{proof}

We call a graph $G$ that satisfies the $(k+1)$-overlap condition
a {\em $(k+1)$-overlap graph}. A $k$-robust graph
must be a $(k+1)$-overlap graph but not vice versa.
In Figure~\ref{fig:graph},
subgraph $\{r_{11}-r_{18}\}$ is not a 2-overlap graph, because
there are two 2-connected v-unions,
$\{r_{11}-r_{15}\}$ and $\{r_{13}, r_{16}-r_{18}\}$,
but they share only one node; indeed, the subgraph is not 1-robust.
On the other hand,
graph $G_2$ in Figure~\ref{fig:merge-loop} satisfies the 3-overlap
condition, as it contains four 3-connected v-unions (actually four v-cliques),
$Q_1-Q_4$, and each v-union shares 3 nodes in total with the others;
however, it is not 2-robust (removing $r_3$ and $r_4$ disconnects it).
Accordingly, for $(k+1)$-overlap graphs we still
need to check $k$-robustness by reduction to a Max-flow Problem.

Now the problem is to find $(k+1)$-overlap subgraphs.
Let $G$ be a graph where a $(k+1)$-connected v-union
overlaps with the rest of the v-unions on no more than $k$ nodes.
We split $G$ by removing these overlapping nodes.
For subgraph $\{r_{11}-r_{18}\}$ in Figure~\ref{fig:graph},
we remove $r_{13}$ and
obtain two subgraphs $\{r_{11}-r_{12}, r_{14}-r_{15}\}$ and
$\{r_{16}-r_{18}\}$ (recall from Example~\ref{ex:core}
that $r_{13}$ cannot belong to any core).
Note that the result subgraphs may not
be $(k+1)$-overlap graphs (\eg,
$\{r_{11}-r_{12}, r_{14}-r_{15}\}$ contains two
v-unions that share only one node), so we need to
further screen them.

We now describe our screening algorithm, {\sc Screen} (details in Algorithm~\ref{alg:prune}),
which takes a graph $G$, represented by $\bf C$ and $\bar L$, as input,
finds $(k+1)$-connected
v-unions in $G$ and meanwhile decides if $G$ is a
$(k+1)$-overlap graph. If not, it splits $G$ into subgraphs
for further examination.

\begin{enumerate}\tightlist
  \item If $G$ contains a single node,
  output it as a core if the node represents
  multiple records that belong only to one v-clique.
  \item For each v-clique $C \in \bf C$, initialize a v-union.
  We denote the set of v-unions by $\bar Q$, the v-union that
  $C$ belongs to by $Q(C)$, and the overlapping nodes of $C$ and $C'$
  by $\bar B(C,C')$.
  \item For each v-clique $C \in \bf C$, we merge v-unions
  as follows.

  (a) For each record $r \in C$ that has not been
    considered, for every pair of v-cliques $C_1$ and $C_2$ in $r$'s index entry,
    if they belong to different v-unions, add $r$ to
    overlap $\bar B(C_1,C_2)$.

  (b) For each v-union $Q \ne Q(C)$ where there exist
    $C_1 \in Q$ and $C_2 \in Q(C)$ such that
    $|\bar B(C_1, C_2)|\geq k+1$, merge $Q$ and $Q(C)$.

  At the end, $\bar Q$ contains all $(k+1)$-connected v-unions.
  \item For each v-union $Q \in \bar Q$, find
  its border nodes as $\bar B(Q)=\cup_{C \in Q, C' \not\in Q}\bar B(C,C')$.
  If $|\bar B(Q)|\leq k$, split the subgraph it belongs to,
  denoted by $G(Q)$, into two subgraphs $Q\setminus \bar B(Q)$ and
  $G(Q)\setminus Q$. \eat{If any subgraph contains a single
  node, discard it if it represents a single record,
  and output it as a core if it represents multiple records.}
  \item Return the remaining subgraphs.
\end{enumerate}

{\small
\begin{algorithm}[t]
\caption{{\sc Screening($G, \bar C, \bar L, k$)}\label{alg:prune}
}
\begin{algorithmic}[1]
\REQUIRE $G$: Simplified similarity graph. \\
$\bar C$: Set of $k$-cores.\\
$\bar L$: Inverted list of the similarity graph. \\
$k$: Robustness requirement.

\ENSURE $\bar G$ Set of subgraphs in $G$.

\IF{$G$ contains a single node $r$}

\IF{$r$ represent multiple records}

\STATE add $r$ to $\bar C$.

\ENDIF

\STATE return $\bar G=\phi$.

\ELSE

\STATE initialize v-union $Q(C)$ for each v-clique $C$ and add $Q(C)$ to $\bar Q$.

\STATE // find v-union

\FOR{{\bf each} v-clique $C\in G$}

\FOR{{\bf each} record $r\in C$ that is not proceeded}

\FOR{{\bf each} v-clique pair $C_1, C_2 \in \bar L(r)$}

\IF{$C_1, C_2$ are in different v-unions}

\STATE add $r$ to overlap $\bar B(Q(C_1),Q(C_2))$.

\ENDIF

\ENDFOR

\ENDFOR

\FOR{{\bf each} v-union $Q$ where $\bar B(Q, Q(C))\geq k$}

\STATE merge $Q$ and $Q(C)$ as $Q_m$.

\FOR{{\bf each} v-union $Q'\neq Q, Q'\neq Q(C)$}

\STATE set $\bar B(Q', Q_m)=\bar B(Q',Q)\cup \bar B(Q',Q(C))$

\ENDFOR

\ENDFOR

\ENDFOR

\STATE // screening

\FOR{{\bf each} v-union $Q \in \bar Q$}

\STATE compute $\bar B(Q)=\cup_{Q' \in \bar Q}\bar B(Q,Q')$.

\IF{$|\bar B(Q)|<k$}

\STATE add subgraphs $Q \setminus \bar B(Q)$ and $G(Q)\setminus Q$ into $\bar G$

\ENDIF

\ENDFOR

\ENDIF

\RETURN $\bar G$;
\end{algorithmic}
\end{algorithm}
}

\begin{proposition}\label{prop:screen}
Denote by $|\bar L|$ the number of entries in input $\bar L$.
Let $m$ be the maximum number of values from dominant-value attributes
of a record, and $a$ be the maximum number of adjacent v-unions
that a v-union has. Algorithm {\sc Screen}
finds $(k+1)$-overlap subgraphs in time $O((m^2+a)\cdot|\bar L|)$
and the result is independent of the order
in which we examine the v-cliques.\rbox
\end{proposition}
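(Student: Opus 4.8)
The plan is to prove the two assertions separately, treating order-independence first because it pins down what {\sc Screen} actually computes and then makes the timing analysis transparent. Throughout, the guiding observation is that for any two v-cliques $C,C'$ the set of shared records $C\cap C'$ is a fixed combinatorial quantity, completely independent of processing order; the quantities $\bar B(\cdot,\cdot)$ maintained by Algorithm~\ref{alg:prune} are merely incremental approximations that converge to (subsets of) these intersections.

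For order-independence I would characterize the final v-union partition $\bar Q$ as a purely combinatorial object and show Algorithm~\ref{alg:prune} always recovers it. Introduce the auxiliary graph $H$ whose vertices are the v-cliques of $G$, with $C$ adjacent to $C'$ exactly when $|C\cap C'|\geq k+1$; by the definition underlying Theorem~\ref{thm:union}, the $(k+1)$-connected v-unions are precisely the connected components of $H$. I claim {\sc Screen} merges two v-cliques into a common v-union iff they lie in the same component of $H$, regardless of order. The key structural fact is that union-find merges are \emph{monotone}: once two v-cliques share a v-union they never separate. Hence if two v-cliques end in \emph{different} v-unions, they were in different v-unions at every earlier instant, so every record witnessing their overlap was recorded; this yields $\bar B(C,C')=C\cap C'$ for every terminal cross-v-union pair, so a direct merge fires whenever $|C\cap C'|\geq k+1$. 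Conversely a merge is triggered only when a recorded overlap reaches $k+1$, i.e. along an edge of $H$. Therefore $\bar Q$ is exactly the component partition of $H$. Given $\bar Q$, each border set $\bar B(Q)=\cup_{C\in Q,\,C'\notin Q}(C\cap C')$ and the split test $|\bar B(Q)|\leq k$ (the negation of the $(k+1)$-overlap condition of Theorem~\ref{thm:necessary}) are deterministic functions of this fixed structure, so the returned family of subgraphs $\bar G$ is independent of the order in which v-cliques are examined.

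For the running time I would charge the work to the three loops of Algorithm~\ref{alg:prune}. The overlap-collection phase (Step~3(a)) processes each of the $|\bar L|$ index entries once and, for the at most $m$ v-cliques listed in an entry, enumerates all $O(m^2)$ pairs while adding to the relevant $\bar B$ in $O(1)$ amortized time each; this totals $O(m^2|\bar L|)$. The merge phase maintains one v-union per v-clique, and since the number of v-cliques (hence of v-unions) is $O(|\bar L|)$ there are $O(|\bar L|)$ merges overall; each merge recombines border sets only with the at most $a$ neighboring v-unions at cost $O(a)$, and the border computation in Step~4 likewise visits the $\leq a$ neighbors of each v-union, for a combined $O(a|\bar L|)$. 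Summing and absorbing the $O(1)$ per-node work of Step~1 gives the claimed $O((m^2+a)|\bar L|)$ bound.

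The step I expect to demand the most care is the order-independence argument, specifically the claim that the incrementally maintained $\bar B(C,C')$ coincide with the true intersections for every pair whose v-cliques stay separated. This rests entirely on monotonicity of the union-find merges, and the delicate point is to exclude a scenario in which a premature transitive merge under one order suppresses the recording of an overlap that a different order would have recorded: one must argue that such suppression can happen only for pairs that are ultimately co-located, and so can never alter a merge decision or a terminal border set. Once this monotonicity lemma is stated cleanly the complexity bound follows by routine amortized accounting.
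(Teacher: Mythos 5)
Your proposal is correct and follows essentially the same route as the paper's proof: order-independence via the observation that the v-unions are the connected components of the auxiliary graph on v-cliques with edges for intersections of size at least $k+1$ (the paper's $G_A$ is your $H$), and the same $O(m^2|\bar L|)$ plus $O(a\cdot|\mathbf{C}|)$ accounting for the running time. Your explicit monotonicity lemma --- that overlaps can fail to be recorded only for pairs of v-cliques that end up co-located, so terminal cross-v-union border sets equal the true intersections --- is in fact a more careful treatment of a point the paper's proof passes over with ``once all nodes are scanned, all edges in $G_A$ are added.''
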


\begin{proof}
\label{proof:screen}
We first prove the time complexity of {\sc Screen}. It takes in time $O(m^2|\bar L|)$ to scan all entries in $\bar L$ and find common nodes between each pair of adjacent v-cliques (Step 3(a)). It takes in time $O(a|\bf C|)$ to merge v-unions, where $|\bf C|$ is the number of v-cliques in $G$ (Step 3(b)). Since $|\bf C|$ $<|\bar L|$, the algorithm runs in time $O(m^2+a)\cdot |\bar L|$.

We next prove that the result of $\sc Screen$ is independent of the order in which we examine the v-cliques, that is, 1) finding all maximal $(k+1)$-connected v-unions in $G$ is order independent; 2) removing all nodes in $\bar B(Q)$ from $G$ where $|\bar B(Q)|\leq k$ is order independent.

Consider order independency of finding all v-unions in $G$. To find all v-unions in $G$ is conceptually equivalent to find all connected components in an abstract graph $G_A$, where each node in $G_A$ is a v-clique in $G$ and two nodes in $G_A$ are connected if the two corresponding v-cliques share more than $k$ nodes. {\sc Screen} checks whether each node in $G$ is a common node between two v-cliques (Step 3(a)), and if two cliques share more than $k$ nodes, merges their v-unions (Step 3(b)), which is equivalent to connect two nodes in $G_A$. Once all nodes in $G$ is scanned, all edges in $G_A$ are added, and the order in which we examine nodes in $G$ is independent from the structure of $G_A$ and the connected components in $G_A$. Therefore, finding all v-unions in $G$ is order independent.

Consider order independency of removing nodes in $G$. Suppose $Q_1, Q_2, ..., Q_m, m>0$ are all v-unions in $G$ with $|\bar B(Q_i)|\leq k, i\in[1, m]$. Since $G$ is finite, $Q_i$ is finite and unique; thus, removing all nodes in $\bar B(Q))$ from $G$ where $|\bar B(Q)|\leq k$ is order independent.
\end{proof}

\eat{
Note first that each result subgraph is not necessarily a
$k$-overlap subgraph, as we have just illustrated,
so we need to apply this algorithm iteratively.
Note also that a result $k$-overlap subgraph is not necessarily
$(k-1)$-robust, since $k$-overlap condition is only a necessary
condition, so we still need to check its $k$-robustness.
In that checking we can treat a v-union as a whole. }

Note that $m$ and $a$ are typically very small, so {\sc Screen}
is basically linear in the size of the inverted index.
Finally, we have results similar to Theorem~\ref{thm:simplify}
for v-unions, so we can further simplify the graph
by keeping for each v-union a single representative
for all nodes that only belong to it.
Each result $k$-overlap subgraph is typically very small.

\begin{example}
Consider Table~\ref{tbl:inverted} as input and $k=1$.
Step 2 creates five v-unions $Q_1-Q_5$ for the five
v-cliques in the input.

Step 3 starts with v-clique $C_1$. It has 4 nodes (in the simplified
inverted index), among which 3 are shared with $C_2$.
Thus, $\bar B(C_1, C_2)=\{r_{3}-r_{5}\}$ and
$|\bar B(C_1,C_2)| \geq 2$, so we merge $Q_1$ and $Q_2$ into $Q_{1/2}$.
Examining $C_2$ reveals no other shared node.

Step 3 then considers v-clique $C_3$. It has three nodes,
among which $r_{12}-r_{13}$ are shared with $C_4$ and $r_{13}$
is also shared with $C_5$. Thus, $\bar B(C_3, C_4)=\{r_{12}-r_{13}\}$
and $\bar B(C_3, C_5)=\{r_{13}\}$. We merge $Q_3$ and $Q_4$
into $Q_{3/4}$. Examining $C_4$ and $C_5$ reveals no other shared
node. We thus obtain three 2-connected v-unions:
$\bar Q=\{Q_{1/2}, Q_{3/4}, Q_5\}$.

Step 4 then considers each v-union. For $Q_{1/2}$,
$\bar B(Q_{1/2})=\emptyset$ and we thus split subgraph $Q_{1/2}$ out
and merge all of its nodes to one $r_{1/\dots/7}$.
For $Q_{3/4}$, $\bar B(Q_{3/4})=\{r_{13}\}$ so $|\bar B(Q_{3/4})|<2$.
We split $Q_{3/4}$ out and obtain $\{r_{11}-r_{12}, r_{14/15}\}$
($r_{13}$ is excluded).
Similar for $Q_5$ and we obtain $\{r_{16/17/18}\}$.
Therefore, we return three subgraphs for further screening. \rbox
\end{example}

\subsubsection{Reduction}\label{subsubsec:split}
Intuitively, a graph $G(V,E)$ is $k$-robust if and only if between any two nodes
$a,b \in V$, there are more than $k$ paths that do not share any node
except $a$ and $b$. We denote the number of non-overlapping
paths between nodes $a$ and $b$ by $\kappa(a,b)$. We can reduce
the problem of computing $\kappa(a,b)$ into a Max-flow Problem.

For each input $G(V,E)$ and nodes $a,b$, we construct
the (directed) {\em flow network} $G'(V',E')$ as follows.
\begin{enumerate}\tightlist
  \item Node $a$ is the source and $b$ is the sink
      (there is no particular order between $a$ and $b$).
  \item For each $v \in V, v \ne a, v \ne b$, add two nodes
  $v',v''$ to $V'$, and two directed edges $(v',v''), (v'',v')$ to $E'$.
  If $v'$ represents $n$ nodes, the edge $(v',v'')$ has
  weight $n$, and the edge $(v'',v')$ has weight $\infty$.
\item For each edge $(a,v) \in E$, add edge $(a,v')$ to $E'$;
      for each edge $(u,b) \in E$, add edge $(u'',b)$ to $E'$;
      for each other edge $(u,v) \in E$, add two edges $(u'',v')$
      and $(v'',u')$ to $E'$. Each edge has capacity $\infty$.
\end{enumerate}

\begin{lemma}
The max flow from source $a$ to sink $b$ in $G'(V',E')$ is equivalent
to $\kappa(a,b)$ in $G(V,E)$. \rbox
\end{lemma}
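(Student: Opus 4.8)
The plan is to prove the two inequalities $f^* \ge \kappa(a,b)$ and $f^* \le \kappa(a,b)$ separately, where $f^*$ denotes the value of a maximum $a$--$b$ flow in $G'(V',E')$, by exhibiting an explicit correspondence between integral $a$--$b$ flows in $G'$ and families of internally node-disjoint $a$--$b$ paths in $G(V,E)$. Two standard facts drive the argument. First, since every finite capacity in $G'$ is an integer (the internal edges $(v',v'')$ carry integer weight $n$, and every remaining capacity is $\infty$, which may be replaced by any integer exceeding $|V'|$ without changing the optimum), there is an \emph{integral} maximum flow. Second, any integral flow admits a \emph{flow decomposition} into a set of unit directed $a$--$b$ paths together with unit directed cycles. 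The equivalence then follows from the max-flow min-cut theorem combined with this correspondence.

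For the direction $f^* \ge \kappa(a,b)$, I would take a maximum family of internally node-disjoint paths in $G$ and route each through $G'$: a path $a, u_1, \dots, u_\ell, b$ becomes the directed path $a \to u_1' \to u_1'' \to u_2' \to \cdots \to u_\ell'' \to b$. Such a routed path pushes one unit along each traversed internal edge $(u_i', u_i'')$ and along the $\infty$-capacity edges encoding the original (undirected) edges, where the two orientations of an undirected edge $(u,v)$ are available as $(u'',v')$ and $(v'',u')$. Because the chosen paths share no internal node of $G$, the internal edge of any node is used at most once (more generally at most $n$ times for a representative standing for $n$ records, which is exactly its capacity), so all capacity constraints hold and the resulting flow has value $\kappa(a,b)$.

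For the direction $f^* \le \kappa(a,b)$, I would start from an integral maximum flow, discard the cycles produced by its decomposition, and keep only the unit $a$--$b$ paths; this is precisely where the reverse internal edges $(v'',v')$ and the back edges $(v'',u')$ can contribute only cancelable cycles and never to the net value. Projecting each remaining directed path back to $G$ by contracting every consecutive pair $v' \to v''$ to the node $v$ yields an $a$--$b$ walk, which becomes a simple path after deleting any repeated node. The capacity $n$ on $(v',v'')$ guarantees that at most $n$ of these paths pass through node $v$, so they are node-disjoint in the required multiplicity-respecting sense, and their number equals the flow value. Hence $\kappa(a,b) \ge f^*$, and together with the first inequality the two quantities coincide.

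The main obstacle I anticipate is the bookkeeping around the infinite and reverse edges together with the multiplicity weights. I must argue rigorously that no maximum flow is forced to saturate an $\infty$-capacity edge in a minimum cut, so that a minimum cut consists only of internal node edges $(v',v'')$; this is what makes the cut correspond to node removals in $G$ and hence to $\kappa(a,b)$. I must also show the decomposition can be taken acyclic, so the reverse edges never inflate the path count. Integrality is essential throughout, since only integral flows decompose into whole unit paths, and this is exactly what lets the flow value be read off as a count of node-disjoint paths.
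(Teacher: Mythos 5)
Your proof is correct, and it takes a genuinely different --- and considerably more self-contained --- route than the paper's. The paper disposes of this lemma in two sentences: it invokes Menger's theorem to identify $\kappa(a,b)$ with the minimum vertex cut separating $a$ from $b$, and then cites an external reference for the classical vertex-splitting reduction of vertex connectivity to max-flow. You instead prove the two inequalities directly: routing a maximum family of internally disjoint paths through the split network gives $f^*\ge\kappa(a,b)$, and integrality plus flow decomposition (discard the cycles, project the surviving unit paths back by contracting each $v'\to v''$ to $v$) gives $f^*\le\kappa(a,b)$; Menger's theorem is never needed, and in fact your argument reproves its nontrivial direction as a byproduct. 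Your route also buys something the paper's citation does not literally cover: the construction here is \emph{not} the textbook one, since an internal edge $(v',v'')$ carries weight $n$ when $v$ is a representative of $n$ merged records, and there is an extra reverse edge $(v'',v')$ of capacity $\infty$. Your explicit observations that the reverse edges can only carry cancelable circulation (so an acyclic decomposition ignores them) and that disjointness must be read in the multiplicity-respecting sense (each representative usable up to $n$ times) are precisely the points where a bare appeal to the unit-capacity, no-reverse-edge version of the reduction falls short; the paper is silent on both. Two small touch-ups: replacing $\infty$ by an integer exceeding $|V'|$ is not safe when some multiplicity $n$ exceeds $|V'|$ (use $1+\sum_{v}n_v$, or simply note that every $a$--$b$ path in $G'$ must cross some finite-capacity internal edge, so the maximum flow is finite and an integral optimum exists without any replacement); and the worry you raise about minimum cuts avoiding $\infty$-edges is not actually needed for this lemma, which concerns only the flow \emph{value} --- it becomes relevant only later, when the algorithm reads the separator $\bar S$ off the minimum cut.
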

\begin{proof}
\label{proof:max-flow}
According to Menger's Theorem~\cite{Bruhn:2005:MTI:1384589.1384594}, the minimum number of nodes whose removal disconnects $a$ and $b$, that is $\kappa(a,b)$, is equal to the maximum number of independent paths between $a$ and $b$.  The authors in~\cite{citeulike:2529515} proves that the maximum number of independent paths between $a$ and $b$ in an undirected graph $G(V,E)$ is equivalent to the maximal value of flow from $a$ to $b$ or the minimal capacity of an $a-b$ cut, the set of nodes such that any path from $a$ to $b$ contains a member of the cut, in $G'(V',E')$.
\end{proof}
\begin{example}\label{ex:maxflow}
Consider nodes $r_{1}$ and $r_6$ of graph $G_2$ in
Figure~\ref{fig:merge-loop}.
Figure~\ref{fig:maxflow} shows the corresponding flow network,
where the dash line (across edges $(r_3',r_3''), (r_4',r_4'')$)
in the figure cuts the flow from $r_1$ to
$r_6$ with a minimum cost of 2.
The max flow/min cut has value 2. Indeed, $\kappa(r_{1},r_6)=2$.\rbox
\end{example}

\begin{figure}[t]
\centering
\vspace{-.2in}
\includegraphics[scale=.35]{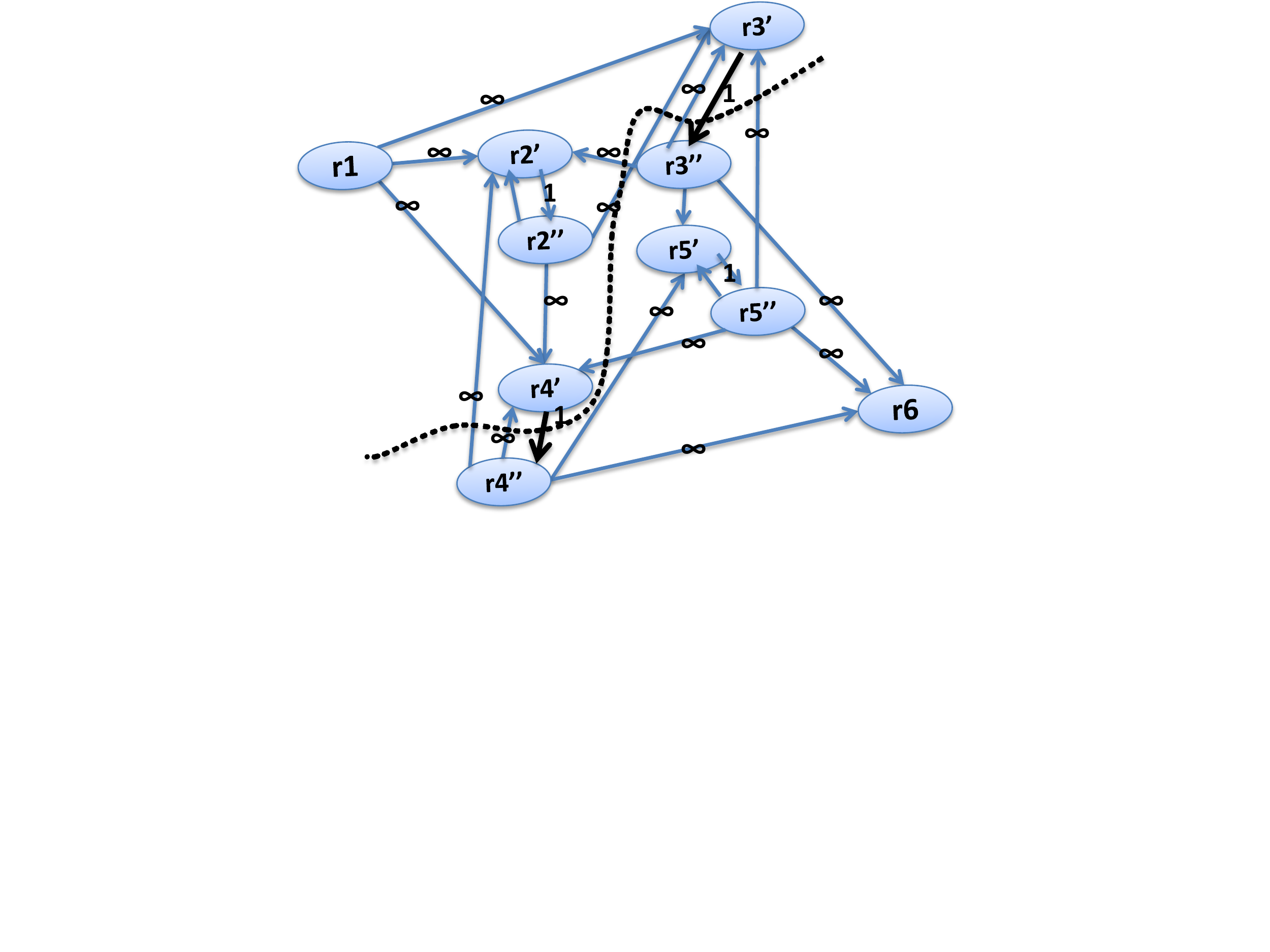}
\vspace{-1.4in} {\small\caption{Flow network for $G_2$
in Figure~\ref{fig:merge-loop}.\label{fig:maxflow}}}
\vspace{-.15in}
\end{figure}

Recall that in a $(k+1)$-connected v-union, between each pair of nodes
there are at least $k+1$ paths. Thus, if (1) $\kappa(a,b)=k+1$,
(2) $a$ and $b$ belong to different v-unions,
and (3) $a$ and $a'$ belong to the same v-union, we must have
$\kappa(a',b)\geq k+1$. We thus have the following sufficient and necessary
condition for $k$-robustness.
\begin{theorem}[Max-flow condition]
\label{thm:maxflow}
Let $G(V,E)$ be an input similarity graph. Graph $G$ is $k$-robust
if and only if for every pair of adjacent $(k+1)$-connected
v-unions $Q$ and $Q'$, there exist two nodes
$a \in Q\setminus Q'$ and $b \in Q'\setminus Q$
such that the max flow from $a$ to $b$ in the corresponding
flow network is at least $k+1$.\rbox
\end{theorem}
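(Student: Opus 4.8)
The plan is to route both directions through Menger's theorem~\cite{Bruhn:2005:MTI:1384589.1384594}, which says $G$ is $k$-robust iff $\kappa(a,b)\ge k+1$ for \emph{every} pair of nodes, together with the preceding Lemma, which lets me read $\kappa(a,b)$ off as the max flow in the associated flow network. The point of Theorem~\ref{thm:maxflow} is then to collapse the ``every pair'' quantifier down to ``one representative pair per adjacent v-union pair.'' Two earlier facts do the collapsing: every $(k+1)$-connected v-union is itself $k$-robust (Theorem~\ref{thm:union}), so within it any two nodes already enjoy $\kappa\ge k+1$; and the transfer property stated just before the theorem, that if $\kappa(a,b)\ge k+1$ with $a,b$ in different v-unions and $a,a'$ in the same v-union, then $\kappa(a',b)\ge k+1$ (the case $\kappa(a,b)>k+1$ reduces to equality by keeping only $k+1$ of the disjoint paths). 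Throughout I take $G$ connected, as holds for the subgraphs on which the condition is tested; a disconnected $G$ is trivially not $k$-robust.

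The forward direction I expect to be immediate. If $G$ is $k$-robust, Menger gives $\kappa(a,b)\ge k+1$ for all pairs, so by the Lemma every max flow is at least $k+1$. For two distinct maximal $(k+1)$-connected v-unions $Q,Q'$ (the ones screening produces), maximality forces $Q\not\subseteq Q'$ and $Q'\not\subseteq Q$, so $Q\setminus Q'$ and $Q'\setminus Q$ are nonempty and \emph{any} $a\in Q\setminus Q'$, $b\in Q'\setminus Q$ witness the condition.

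For the converse I would argue the contrapositive: if $G$ is not $k$-robust then the condition fails for some adjacent pair. So suppose there is a separator $S$, $|S|\le k$, whose removal leaves at least two components. Since each v-clique has more than $k+1$ nodes (as assumed in Theorem~\ref{thm:union}), every v-union has more than $k$ nodes, so for each v-union $Q$ the set $Q\setminus S$ is nonempty; and since $Q$ is $k$-robust, $Q\setminus S$ is connected and hence lies inside a single component of $G\setminus S$, assigning each v-union one component. Because the v-unions cover the connected graph $G$ and adjacent v-unions share nodes, their adjacency graph is connected, while at least two components are occupied; so some adjacent pair $Q,Q'$ lands in distinct components $X\ne Y$. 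Any node shared by $Q,Q'$ and missing $S$ would lie in $(Q\setminus S)\cap(Q'\setminus S)\subseteq X\cap Y=\emptyset$, hence $Q\cap Q'\subseteq S$; consequently $Q\setminus S\subseteq Q\setminus Q'$ and $Q'\setminus S\subseteq Q'\setminus Q$, both nonempty. Fixing $a_0\in Q\setminus S\subseteq X$ and $b_0\in Q'\setminus S\subseteq Y$, the separator $S$ contains neither and splits them, so $\kappa(a_0,b_0)\le k$.

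The final and, I expect, decisive step is to rule out that some \emph{other} choice of representatives rescues the condition. Suppose toward a contradiction there were $a^*\in Q\setminus Q'$, $b^*\in Q'\setminus Q$ with max flow $\kappa(a^*,b^*)\ge k+1$. Applying the transfer property inside $Q'$ (move $b^*$ to $b_0$, keep the outside node $a^*\notin Q'$) yields $\kappa(a^*,b_0)\ge k+1$, and applying it again inside $Q$ (move $a^*$ to $a_0$, keep $b_0\notin Q$, where $b_0\notin Q$ because $b_0\in Q'\setminus S$ and $Q\cap Q'\subseteq S$) yields $\kappa(a_0,b_0)\ge k+1$, contradicting $\kappa(a_0,b_0)\le k$. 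Hence no valid representatives exist and the condition fails for $(Q,Q')$. This last step is exactly the main obstacle: a priori a representative drawn from inside the separator $S$ could still exhibit high connectivity, so local failure need not be visible at any single chosen pair; the transfer property is what propagates the hypothetical high connectivity onto the genuinely low-connectivity pair $(a_0,b_0)$, and is therefore the crux of the whole argument.
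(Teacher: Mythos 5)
Your overall route—Menger's theorem plus the contrapositive via an explicit small separator $S$—is the same as the paper's, and in the converse direction you are in fact \emph{more} careful than the paper: you justify the existence of an adjacent pair $Q,Q'$ straddling the separator (via connectedness of the v-union adjacency graph), you correctly derive $Q\cap Q'\subseteq S$, and you correctly identify the real difficulty, namely that a representative $a^*\in Q\setminus Q'$ may itself lie inside $S$, in which case ``$S$ separates $a^*$ from $b^*$'' proves nothing. The paper's proof simply asserts that every $a\in Q\setminus Q'$ lies in $\bar X$ and every $b\in Q'\setminus Q$ lies in $\bar Y$, which silently assumes this case away (and is even written with the contradictory hypotheses $Q\subseteq\bar X$, $Q'\subseteq\bar Y$, $Q\cap Q'\neq\emptyset$).

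The gap is that your resolution of that case does not hold. The ``transfer property'' you lean on—if $\kappa(a,b)\ge k+1$ with $a,a'$ in the same $(k+1)$-connected v-union and $b$ outside it, then $\kappa(a',b)\ge k+1$—is stated in the paper without proof and is false in exactly the situation you need it: $k$-robustness of the v-union only shows that a small $(a',b)$-separator \emph{avoiding} $a$ would also separate $a$ from $b$; it says nothing when the separator contains $a$. Concretely, take $k=2$ and the three v-cliques $C_1=\{u,v,x_1,x_2,x_3\}$, $C_2=\{v,y_1,y_2,y_3\}$, $C_3=\{u,y_1,z_1,z_2\}$ (each of size greater than $k+1$, pairwise sharing a single node, so each is its own maximal $3$-connected v-union). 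Then $\kappa(u,y_1)\ge 4$ via the internally disjoint paths $u\,y_1$, $u\,z_1\,y_1$, $u\,z_2\,y_1$, $u\,v\,y_1$, yet $\kappa(x_1,y_1)\le 2$ because $\{u,v\}$ separates $x_1$ from $y_1$; so transferring from $u$ to $x_1$ inside the $2$-robust v-union $C_1$ fails. Worse, in this graph all three adjacent v-union pairs admit witnesses with $\kappa\ge 3$ (namely $(u,y_1)$, $(v,y_1)$, $(v,u)$) while $G$ is not $2$-robust, so the step you flagged as the crux cannot be repaired by any argument of this shape: the ``if'' direction needs the additional hypothesis, implicit in the algorithm but absent from the statement, that $G$ has already passed screening and is a $(k+1)$-overlap graph (my example is not: $C_1$ meets the rest in only $\{u,v\}$). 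Both your proof and the paper's are therefore incomplete at the same point; yours has the merit of naming it.
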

\begin{proof}
\label{proof:max-flow}
According to Menger's Theorem~\cite{Bruhn:2005:MTI:1384589.1384594}, $\kappa(a,b)$ in $G$ is equivalent to the max-flow from $a$ to $b$ in the corresponding flow network. We need to prove that graph $G$ is $k$-robust if and only if for each pair of adjacent $(k+1)$-connected v-unions $Q$ and $Q'$, there exists two nodes $a \in Q\setminus Q'$ and $b \in Q'\setminus Q$ such that $\kappa(a, b)\geq k+1$.

We first prove that if $G$ is $k$-robust, for each pair of adjacent $(k+1)$-connected v-unions $Q$ and $Q'$, there exists two nodes $a \in Q\setminus Q'$ and $b \in Q'\setminus Q$ such that $\kappa(a, b)\geq k+1$. Since $G$ is $k$-robust, for each pair of nodes $a$ and $b$ in $G$, we have $\kappa(a,b)\geq k+1$.

We next prove that if $G$ is not $k$-robust, there exists a pair of adjacent $(k+1)$-connected v-unions $Q$ and $Q'$ such that for each pair of nodes $a \in Q\setminus Q'$ and $b \in Q'\setminus Q$, we have $\kappa(a, b)< k+1$. Since $G$ is not $k$-robust, there exists a separator $\bar S$, a set of nodes in $G$ with size no greater than $k$ whose removal disconnects $G$ into two sub-graphs $\bar X$ and $\bar Y$. Suppose $Q$ and $Q'$ are two v-unions in $G$ such that $Q\subseteq \bar X, Q'\subseteq \bar Y$ and $Q\cap Q' \neq \emptyset$. For each pair of nodes $a \in Q\setminus Q'$ and $b \in Q'\setminus Q$, we have $a\in \bar X$ and $b\in \bar Y$, and removing the set of nodes in $\bar S$ disconnects $a$ and $b$; thus $\kappa(a,b)<k+1$.

The above two cases proves that graph $G$ is $k$-robust if and only if for every pair of adjacent $(k+1)$-connected v-unions $Q$ and $Q'$, there exist two nodes $a \in Q\setminus Q'$ and $b \in Q'\setminus Q$ such that $\kappa(a, b)\geq k+1$, i.e. the max flow from $a$ to $b$ in the corresponding flow network is at least $k+1$.
\end{proof}

If a graph $G$ is not $k$-robust, we shall split it into subgraphs
for further processing.
In the corresponding flow network, each edge in the minimum cut
must be between a pair of nodes derived from the same node
in $G$ (other edges have capacity $\infty$).
These nodes cannot belong to any core and we use them
as {\em separator} nodes, denoted by $\bar S$. Suppose the separator
separates $G$ into $\bar X$ and $\bar Y$
(there can be more subgraphs); we return
$\bar X \cup \bar S$ and $\bar Y \cup \bar S$.

Note that we need to include $\bar S$ in both sub-graphs
to maintain the integrity of each v-union.
To understand why, consider $G_2$ in Figure~\ref{fig:merge-loop}
where $\bar S=\{r_3,r_4\}$. According to the definition,
there is no 2-core.
If we split $G_2$ into $\{r_1-r_2\}$ and $\{r_5-r_6\}$
(without including $\bar S$),
both subgraphs are 2-robust and we would return them as 2-cores.
The problem happens because v-cliques $Q_1-Q_4$ ``disappear''
after we remove the separators $r_3$ and $r_4$.
Thus, we should split $G_2$ into $\{r_{1}-r_4\}$ and
$\{r_3-r_{6}\}$ instead
and that would further trigger splitting on both subgraphs.
Eventually we wish to exclude the separator nodes
from any core, so we mark them as ``separators'' and
exclude them from the returned cores.

Algorithm {\sc Split} (details in Algorithm~\ref{alg:split})
takes a $(k+1)$-overlap subgraph $G$ as input and
decides if $G$ is $k$-robust. If not, it splits $G$
into subgraphs on which we will then re-apply screening.
\begin{enumerate}\tightlist
  \item For each pair of adjacent $(k+1)$-connected
    v-unions $Q, Q' \in G$,
    find $a \in Q\setminus Q', b \in Q'\setminus Q$. Construct
    flow network $G'(V',E')$ and apply Ford \& Fulkerson
    Algorithm~\cite{F.F.62} to compute the max flow.
  \item Once we find nodes $a,b$ where $\kappa(a,b)\leq k$,
    use the min cut of the flow network as separator
    $\bar S$. Remove $\bar S$ and obtain several subgraphs.
    Add $\bar S$ back to each subgraph and
    mark $\bar S$ as ``separator''. Return the subgraphs for screening.
  \item Otherwise, $G$ is $k$-robust and output it as a $k$-core.
\end{enumerate}

{\small
\begin{algorithm}[t]
\caption{{\sc Split($G, \bar C, k$)}\label{alg:split}
}
\begin{algorithmic}[1]
\REQUIRE $G$: Simplified similarity graph. \\
$\bar C$: Set of cores.\\
$k$: Robustness requirement.

\ENSURE $\bar G$ Set of subgraphs in $G$.

\FOR{{\bf each} adjacent $(k+1)$-connected v-unions $Q, Q'$}

\STATE find a pair of nodes $a\in Q\setminus Q', b \in Q'\setminus Q$.

\STATE construct flow-network $G'$ and compute $\kappa(a,b)$ by Ford \& Fulkerson Algorithm.

\IF{$\kappa(a,b)\leq k$}

\STATE get separator $\bar S$ from $G'$ and remove $\bar S$ from $G$ to obtain disconnected subgraphs;
mark $\bar S$ as ``separator'' and add it to each subgraph in $G$.

\STATE return the set $\bar G$ of subgraphs.

\ENDIF

\ENDFOR

\IF{$\bar G=\phi$}

\STATE add $G$ to $\bar C$.

\ENDIF

\RETURN $\bar G$;
\end{algorithmic}
\end{algorithm}
}

\begin{example}\label{ex:split}
Continue with Example~\ref{ex:maxflow} and $k=2$. 
There are four 3-connected v-unions.
When we check $r_1 \in Q_1$ and $r_6 \in Q_3$, we find 
 $\bar S=\{r_3, r_4\}$. We then split $G_2$ into subgraphs
$\{r_{1}-r_4\}$ and $\{r_3-r_{6}\}$, marking $r_3$ and $r_4$
as ``separators''.

Now consider graph $G_1$ in Figure~\ref{fig:merge-loop}
and $k=2$. 
There are four 3-connected v-unions (actually four v-cliques) and
six pairs of adjacent v-unions. For $Q_1$ and $Q_2$,
we check nodes $r_2$ and $r_4$ and find $\kappa(r_2,r_4)=3$.
Similarly we check for every other pair of adjacent
v-unions and decide that the graph is 2-robust. \rbox
\end{example}

\begin{proposition}\label{prop:split}
Let $p$ be the total number of pairs of adjacent v-unions,
and $g$ be the number of nodes in the input graph.
Algorithm {\sc Split} runs in time $O(pg^{2.5})$. \rbox
\end{proposition}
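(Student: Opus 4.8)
The plan is to bound the cost of a single iteration of the outer loop in Algorithm~\ref{alg:split} and then multiply by the number of iterations. By construction, the loop in Step~1 ranges over pairs of adjacent $(k+1)$-connected v-unions, of which there are exactly $p$, so the total running time is $p$ times the worst-case cost of processing one such pair, plus the one-time cost of the splitting in Step~2, which I argue below is dominated.

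First I would bound the per-pair cost. For a fixed pair $Q, Q'$, selecting representatives $a \in Q \setminus Q'$ and $b \in Q' \setminus Q$ takes $O(g)$ time. Constructing the flow network $G'(V',E')$ as described before Theorem~\ref{thm:maxflow} doubles every node and replaces every edge by a constant number of directed edges, so $G'$ has $O(g)$ nodes and $O(g^2)$ edges and is built in $O(g^2)$ time. The dominant term is the max-flow computation itself: as noted in Section~\ref{sub-sec:split}, computing the max flow for a graph and a source--sink pair runs in $O(|G|^{2.5})$, and since $|G'| = O(g)$ this single computation costs $O(g^{2.5})$, which subsumes the $O(g^2)$ construction cost. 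Hence processing one pair is $O(g^{2.5})$.

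Next I would account for Step~2. When some pair yields $\kappa(a,b)\le k$, the algorithm reads the min cut off the (residual) flow network to form the separator $\bar S$, removes it, and re-attaches it to each resulting subgraph. Reading the min cut and recomputing the connected components both run in time linear in the size of $G'$, i.e.\ $O(g^2)$, which is dominated by the $O(g^{2.5})$ already charged for that pair. Multiplying the per-pair bound by the $p$ iterations of the outer loop therefore gives the claimed $O(pg^{2.5})$ total.

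The main obstacle is justifying the $O(g^{2.5})$ bound for a single max-flow call, because a naive Ford--Fulkerson analysis yields only $O(Ef)$ with $f=\kappa(a,b)$, and the edge capacities here are $n$ or $\infty$ rather than unit. The cleanest route is to observe that the node-splitting construction turns every finite-capacity edge into a node-capacity bottleneck, so for the purpose of computing $\kappa(a,b)$ the network behaves like a unit-node-capacity instance; the Even--Tarjan/Dinic bound $O(E\sqrt{V})$ for such networks then gives $O(g^2\cdot\sqrt{g})=O(g^{2.5})$, matching the figure cited earlier. I would additionally remark that each computation may be halted after $k+1$ augmenting paths, since the test in Step~2 only needs to distinguish $\kappa(a,b)\le k$ from $\kappa(a,b)\ge k+1$; this does not change the stated worst-case bound but confirms that the early-termination splitting criterion is sound.
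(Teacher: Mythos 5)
Your proposal is correct and follows essentially the same route as the paper's own proof: charge $O(g^{2.5})$ per max-flow computation and multiply by the $p$ pairs of adjacent v-unions, with all construction and cut-extraction costs dominated. The only difference is that the paper simply cites Even--Tarjan for the $O(g^{2.5})$ bound, whereas you spell out why the node-split network admits it (and correctly note that early termination after $k+1$ augmentations would suffice anyway); this is a welcome elaboration, not a different argument.
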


\begin{proof}
\label{proof:split}
Authors in~\cite{citeulike:2529515} proves that it takes in time $O(g^{2.5})$ to compute $\kappa(a, b)$ for a pair of nodes $a$ and $b$ in $G$. In the worst case {\sc Split} needs to compute $\kappa(a,b)$ for $p$ pairs of adjacent v-unions. Thus, {\sc Split} runs in time $O(pg^{2.5})$.
\end{proof}

Recall that if we solve the Max-Flow Problem directly for
each pair of sources in the original graph, the complexity
is $O(|\bar L|^{4.5})$, which would be dramatically higher.

\subsubsection{Full algorithm}\label{subsubsec:core-detection}
We are now ready to present the full algorithm,
{\sc Core} (Algorithm~\ref{alg:core}).
Initially, it initializes the
working queue $\bf Q$ with only input $G$ (Line~\ref{ln:init}).
Each time it pops a subgraph $G'$ from $\bf Q$ and invokes
{\sc Screen} (Lines~\ref{ln:pop}-\ref{ln:screen}).
If the output of {\sc Screen} is still $G'$
(so $G'$ is a $(k+1)$-overlap subgraph) (Line~\ref{ln:same}),
it removes any node with mark ``separator'' in $G'$ and puts the
new subgraph into the working queue (Line~\ref{ln:sep}),
or invokes {\sc Split} on $G'$ if there is no separator
(Line~\ref{ln:split}). Subgraphs output by {\sc Screen} and
{\sc Split} are added to the queue for further examination
(Lines~\ref{ln:queue1}, \ref{ln:queue2}) and identified cores
are added to $\bar C$, the core set. It terminates
when $\bf Q=\emptyset$.

{\small
\begin{algorithm}[t]
\caption{{\sc Core($G, k$)}\label{alg:core}
}
\begin{algorithmic}[1]
\REQUIRE $G$: Simplified similarity graph, represented by $\bf C$ and $\bar L$. \\
$k$: Robustness requirement.

\ENSURE $\bar C$ Set of cores in $G$.

\STATE Let ${\bf Q}=\{G\}$, $\bar C=\emptyset$; \label{ln:init}

\WHILE{$\bf Q\neq \phi$}

\STATE Pop $G'$ from $\bf Q$;\label{ln:pop}

\STATE Let $\bar P=$ {\sc Screen($G', k, \bar C$)};\label{ln:screen}

\IF{$\bar P=\{G'\}$}\label{ln:same}

\IF{$G'$ contains ``separator'' nodes}

\STATE Remove separators from $G'$ and add the result to
       $\bf Q$ if it is not empty;\label{ln:sep}

\ELSE

\STATE Let $\bar S=$ {\sc Split}($G', k, \bar C$);\label{ln:split}

\STATE add graphs in $\bar S$ to $\bf Q$;\label{ln:queue1}

\ENDIF

\ELSE

\STATE add graphs in $\bar P$ to $\bf Q$;\label{ln:queue2}

\ENDIF

\ENDWHILE

\RETURN $\bar C$;
\end{algorithmic}
\end{algorithm}
}

The correctness of algorithm {\sc Core} is guaranteed by the following Lemmas.

\begin{lemma}\label{lem:core-parition}
For each pair of adjacent nodes $r, r'$ in graph $G$, there exists a maximal $k$-robust partitioning such that $r, r'$ are in the same subgraph.\rbox
\end{lemma}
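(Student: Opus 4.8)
The plan is to turn adjacency into a small $k$-robust ``seed'' part containing both nodes, and then reach a \emph{maximal} partitioning by a minimum-cardinality argument that, by construction, can never split $r$ from $r'$. First I would record the only structural fact I need about adjacency: if $r$ and $r'$ are adjacent, then $\{r,r'\}$ induces a $K_2$, i.e.\ a clique, and by the stipulation in the definition of $k$-robustness a clique is $k$-robust for every $k$. (Equivalently, the edge arises because $r$ and $r'$ share a common v-clique, so they already sit together in a $k$-robust clique.) This gives a concrete witness partitioning $P_0$: take $\{r,r'\}$ as one part and every remaining node of $G$ as its own singleton. Each part of $P_0$ is either a clique or a single node, so $P_0$ satisfies Properties~1 and~2 of Definition~\ref{def:core-clustering}, and $r,r'$ lie in the common part $\{r,r'\}$.

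Next I would consider the family $\mathcal{S}$ of all partitionings of $G$ that satisfy Properties~1 and~2 (every node in exactly one part, every part $k$-robust) and that, in addition, place $r$ and $r'$ in the same part. By the first step $P_0 \in \mathcal{S}$, so $\mathcal{S}$ is nonempty, and since $G$ is finite $\mathcal{S}$ is finite. Let $P^*$ be a member of $\mathcal{S}$ with the fewest parts; such a $P^*$ exists. I claim $P^*$ is a maximal $k$-robust partitioning, i.e.\ it additionally satisfies Property~3. The observation driving the claim is that \emph{coarsening} a partitioning --- replacing a subcollection of its parts by their union --- never separates two nodes that were already in the same part, so any coarsening of a member of $\mathcal{S}$ that still satisfies Properties~1 and~2 is again a member of $\mathcal{S}$.

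I would then establish Property~3 by contradiction. Suppose it failed for $P^*$; then some subcollection $T$ of at least two parts of $P^*$ would have a $k$-robust union $U$. Form $P'$ by replacing the parts in $T$ with the single part $U$ and leaving the other parts untouched. The new part $U$ is $k$-robust by assumption, and the untouched parts retain the $k$-robustness they had in $P^*$, so $P'$ satisfies Properties~1 and~2; moreover $P'$ is a coarsening of $P^*$, hence keeps $r$ and $r'$ together, so $P' \in \mathcal{S}$. But $|P'| = |P^*| - |T| + 1 < |P^*|$, contradicting the minimality of $P^*$. Therefore no such $U$ exists, $P^*$ satisfies Property~3, and $P^*$ is a maximal $k$-robust partitioning with $r$ and $r'$ in the same subgraph, which is exactly the statement of the lemma.

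I expect the one genuinely delicate point to be the contradiction step: one must make sure that the \emph{failure} of Property~3 produces a $k$-robust \emph{coarsening} of $P^*$ with strictly fewer parts (not merely some incomparable $k$-robust partitioning), and that the parts outside $T$ keep their $k$-robustness so that $P'$ really lands back in $\mathcal{S}$. Both hold because merging parts only coarsens and leaves the other parts verbatim, but this is the place where the argument must be stated carefully, and it is also where the precise reading of Property~3 (``merging any partitions is not $k$-robust'') matters; the minimum-cardinality choice is tailored to settle it. Everything else --- nonemptiness of $\mathcal{S}$ and the preservation of the ``together'' constraint under coarsening --- is routine once the clique witness $P_0$ is in place.
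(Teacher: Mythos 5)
Your proof is correct, and it takes a genuinely different route from the paper's. The paper argues constructively: it grows a maximal clique $C'$ around the edge $\{r,r'\}$, extends it to a ``maximal'' $k$-robust subgraph by repeatedly adding nodes adjacent to at least $k+1$ nodes of the current subgraph, removes that subgraph, and recurses on the remainder, then asserts that the resulting partitioning satisfies all three conditions of Definition~\ref{def:core-clustering}. Your argument instead exhibits the trivial witness $P_0$ (the part $\{r,r'\}$ plus singletons, each part a clique or single node and hence $k$-robust by stipulation), and then obtains maximality by an extremal choice: a minimum-cardinality member $P^*$ of the family $\mathcal{S}$ of Property-1-and-2 partitionings keeping $r,r'$ together must satisfy Property~3, since any $k$-robust merge of parts would be a coarsening --- hence still in $\mathcal{S}$ --- with strictly fewer parts. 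What your approach buys is rigor at exactly the point where the paper is loosest: the paper's greedy growth guarantees only that no \emph{single} node can be added while preserving $k$-robustness, which does not by itself imply that the resulting subgraph has no proper $k$-robust supergraph, and it is that stronger maximality the paper implicitly invokes to get Property~3. Your minimum-cardinality argument sidesteps this entirely because Property~3 is precisely the statement that no coarsening stays in $\mathcal{S}$, and you have chosen $P^*$ to make any such coarsening impossible. The paper's construction, by contrast, gives a more concrete picture of what the partitioning looks like (it is anchored at a maximal clique containing the edge), which is closer in spirit to how the partitionings are used elsewhere in Section~\ref{sec:core}, but as a proof of existence your version is the cleaner one.
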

\begin{proof}
\label{proof:core-partition}
For each pair of adjacent nodes $r, r'$ in $G$, we prove the existence of such a maximal $k$-robust partitioning by constructing it.

By definition, adjacent node $r, r'$ form a v-clique $C$. Therefore, there exists a maximal v-clique $C'$ in $G$ that contains $r, r'$, i.e., $C \subseteq C'$. V-clique $C'$ can be obtained by keep adding nodes in $G$ to $C$ so that each newly-added node is adjacent to each node in current clique until no nodes in $G$ can be added to $C'$. By definition, any v-clique is $k$-robust, therefore there exists a maximal $k$-robust sub-graph $G'$ in $G$ such that $C' \subseteq G'$. Graph $G'$ can be obtained by keep adding nodes in $G$ to $C'$ so that each newly-added node is adjacent to at least $k+1$ nodes in current graph $G'$ until no nodes in $G$ can be added to $G'$. We remove $G'$ from $G$ and take $G'$ as a subgraph in the desired partitioning.

We repeat the above process to a randomly-selected pair of adjacent nodes in the remaining graph $G\setminus G'$ until it is empty. The desired partitioning satisfies Condition 1 and 2 of Definition~\ref{def:core-clustering} because the above process makes sure each subgraph is exclusive and $k$-robust; it satisfies Condition 3 of Definition~\ref{def:core-clustering} because the above process makes sure each subgraph is maximal, which means merging arbitrary number of subgraphs in the partitioning would violate Condition 2.

In summary, the desired partitioning is a maximal $k$-robust partitioning. It proves that for each pair of adjacent nodes $r$ and $r'$ in graph $G$, there exists a maximal $k$-robust partitioning such that $r$ and $r'$ are in the same subgraph.
\end{proof}
\begin{lemma}\label{lem:separator}
The set of nodes in a separator $\bar S$ of graph $G$ does not belong to any $k$-core in $G$, where $|\bar S|\leq k$.\rbox
\end{lemma}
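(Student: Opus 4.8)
The plan is to show that every separator node lies in no $k$-core by proving that its equivalence class under the relation ``two nodes are together in every maximal $k$-robust partitioning of $G$'' is a singleton; since a $k$-core must contain at least two records, this suffices. Throughout I take $\bar S$ to be a \emph{minimal} separator, i.e. the min cut produced by Algorithm~\ref{alg:split} with $|\bar S|=\kappa(a,b)\le k$, so that every node of $\bar S$ has a neighbor in each of the two sides it separates. Write $V=\bar X\sqcup\bar S\sqcup\bar Y$ with $\bar X,\bar Y$ nonempty and no edge joining $\bar X$ to $\bar Y$. The goal reduces to: for each $s\in\bar S$ and each $r\ne s$, exhibit a maximal $k$-robust partitioning in which $s$ and $r$ fall into different partitions.

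The first ingredient is a confinement fact. In any maximal $k$-robust partitioning, the partition $C$ that contains $s$ lies entirely within $\bar X\cup\bar S$ or entirely within $\bar Y\cup\bar S$. Indeed, if $C$ met both $\bar X$ and $\bar Y$, then $C\cap\bar S$, of size at most $|\bar S|\le k$, would separate the nonempty sets $C\cap\bar X$ and $C\cap\bar Y$ inside $C$, because every $\bar X$--$\bar Y$ path in $G$ passes through $\bar S$; this contradicts the $k$-robustness of $C$. The same observation shows that if one partition meets $\bar X$ and another meets $\bar Y$, their union is cut by at most $k$ nodes of $\bar S$ and hence is not $k$-robust, which is what will guarantee global maximality (Condition~3 of Definition~\ref{def:core-clustering}) of the partitionings I build.

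The second ingredient is a construction, driven by the greedy scheme of Lemma~\ref{lem:core-parition} (grow a maximal $k$-robust subgraph, remove it, repeat), steered so that $s$ and $r$ end up on opposite sides of $\bar S$. If $r\in\bar Y$ I grow $s$'s partition inside $\bar X\cup\bar S$; if $r\in\bar X$ I grow it inside $\bar Y\cup\bar S$; in either case the confinement fact forces $r$ out of $s$'s partition. If instead $r\in\bar S$, I first absorb $r$ into a partition on the $\bar Y$ side (using a neighbor of $r$ in $\bar Y$, available by minimality of $\bar S$) and then grow $s$'s partition within $\bar X\cup(\bar S\setminus\{r\})$ from a neighbor of $s$ in $\bar X$, so that $s$ and $r$ land in distinct partitions. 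Letting $r$ range over all nodes $\ne s$ then collapses $s$'s equivalence class to $\{s\}$, so $s$ belongs to no $k$-core.

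The main obstacle is precisely the intra-$\bar S$ case together with verifying maximality. The greedy procedure of Lemma~\ref{lem:core-parition} seeds each partition from a \emph{maximal} v-clique, but two adjacent separator nodes may sit inside a common maximal v-clique, so separating them can require a partition strictly smaller than any maximal v-clique: for instance, two $4$-cliques glued along an edge $s_1s_2$ with $k=2$ are separated only by the partitioning $\{\{x_1,x_2,s_1\},\{y_1,y_2,s_2\}\}$, whose blocks are not maximal v-cliques. Hence I cannot invoke Lemma~\ref{lem:core-parition} verbatim; I must construct a finer $k$-robust partition containing $s$ but not $r$ and then prove it is still globally maximal. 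The delicate point is to show that, after pushing $r$ and any other chosen separator nodes to the opposite side, no pair of partitions straddling $\bar S$ can merge into a $k$-robust subgraph — which follows from the cut argument above, but needs care to rule out merges among partitions that both touch $\bar S$. Once this is pinned down, the three cases combine to give the claim.
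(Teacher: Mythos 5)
Your argument is essentially the paper's own: for each separator node $s$ and each other node $r$, construct a maximal $k$-robust partitioning that separates them, using the observation that any $k$-robust partition (or union of partitions) meeting both sides of $\bar S$ would be disconnected by the at most $k$ nodes of its intersection with $\bar S$, with the same case split on whether $r$ lies in a side component or in $\bar S$ itself. The ``delicate point'' you flag in the intra-$\bar S$ case is genuine, but the paper's proof does not resolve it either --- it simply asserts the existence of disjoint maximal $k$-robust subgraphs $G'$ containing $r$ and $G''$ containing $r'$ built as in Lemma~\ref{lem:core-parition}, even though that greedy construction seeds each partition from a \emph{maximal} v-clique and so, exactly as in your two-glued-$4$-cliques example, can be forced to place two adjacent separator nodes in the same partition --- so your sketch supplies everything the paper's proof does, and is more candid about what remains to be verified.
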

\begin{proof}[Lemma~\ref{lem:separator}]\label{proof:separator}
Suppose the set $\bar S$ of nodes separate $G$ into $m$ disconnected sets $\bar X_i, i\in[1, m], m>0$. To prove that each node $r\in \bar S$ does not belong to any $k$-core in $G$, we prove that for a node $r' \in G, r'\neq r$, there exists a maximal $k$-robust partitioning such that $r$ and $r'$ are separated. Node $r'$ falls into the following cases: 1) $r' \in \bar X_i, i\in[1, m]$ ; 2) $r' \in \bar S$.

Consider Case 1) where $r' \in \bar X_i, i\in[1, m]$. We construct a maximal $k$-robust partitioning of $G$ where $r$ and $r'$ are in different subgraphs. We start with a maximal $k$-robust subgraph $G'$ in $G$ that contains $r$ and $r''$ where $r''$ is adjacent to $r$ and in $\bar X_j, j\neq i, j\in[1, m]$, and find other maximal $k$-robust subgraphs as in Lemma~\ref{lem:core-parition}. Since $\bar S$ separates $\bar X_i$ and $\bar X_j$, maximal $k$-robust subgraph $G'$ that contains $r$ and $r''$ does not contain any node in $\bar X_i$. It proves that there exists a maximal $k$-robust partitioning of $G$ where $r$ and $r'$ are not in the same subgraph.

Consider Case 2) where $r'\in \bar S$. We construct a maximal $k$-robust partitioning of $G$ such that $r$ and $r'$ are in different subgraphs. We create two maximal $k$-robust subgraphs $G'$ and $G''$, where $G'$ contains $r$ and an adjacent node $r_i \in \bar X_i, i\in[1, m]$, $G''$ contains $r'$ and an adjacent node $r_j \in \bar X_j, j\neq i, j\in [1, m]$. We create other subgraphs as in Lemma~\ref{lem:core-parition}. Since each path between $r_i \in \bar X_i$ and $r_j \in \bar X_j$ contains at least one node in $\bar S$ and $|\bar S|\leq k$, graph $G' \cup G''$ is not $k$-robust. Therefore, the created partitioning is a maximal $k$-robust partitioning. It proves that there exists a maximal $k$-robust partitioning of $G$ where $r$ and $r'$ are not in the same subgraph.

Given the above two cases, we have that any node in separator $\bar S$ of $G$ does not belong to any $k$-core in $G$, where $|\bar S|\leq k$.
\end{proof}

\begin{theorem}\label{pro:core}
Let $G$ be the input graph and $q$ be the number of $(k+1)$-connected
v-unions in $G$. Define $a,p,g,m,$ and $|\bar L|$
as in Proposition~\ref{prop:screen} and \ref{prop:split}.
Algorithm {\sc Core} finds correct $k$-cores of $G$ in
time $O(q((m^2+a)|\bar L|+pg^{2.5}))$ and is order independent.\rbox
\end{theorem}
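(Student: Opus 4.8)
The plan is to establish the three claims of the theorem separately---soundness and completeness of the identified cores, the running-time bound, and order independence---each leaning on the lemmas and theorems already proved. Throughout I would work on the simplified, node-merged graph, which is legitimate by Theorem~\ref{thm:simplify}.

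First, for correctness I would argue soundness and completeness. A subgraph is emitted as a $k$-core in only two places: when {\sc Screen} outputs a single node representing multiple records, and when {\sc Split} certifies that every pair of adjacent $(k+1)$-connected v-unions admits endpoints with max flow at least $k+1$. In the latter case Theorem~\ref{thm:maxflow} (the Max-flow condition) immediately yields that the emitted subgraph is $k$-robust. To upgrade ``$k$-robust'' to ``is a $k$-core'' I would show the emitted piece is also maximal: it was reached only after {\sc Screen} verified that the $(k+1)$-overlap condition fails across its boundary (Theorem~\ref{thm:necessary}) or {\sc Split} exhibited a separator of size at most $k$, so it cannot be merged with a neighbor while remaining $k$-robust. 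For completeness, the crucial fact is Lemma~\ref{lem:separator}: every node the algorithm marks as a separator---whether a low-overlap border node in {\sc Screen} or a min-cut node in {\sc Split}---provably belongs to no $k$-core, so discarding it loses nothing. Conversely, two nodes that lie together in every maximal $k$-robust partitioning are never separated, since each split cuts only along a genuine separator of size at most $k$; combined with Lemma~\ref{lem:core-parition}, which places adjacent survivors in a common partition, this shows the surviving pieces are exactly the $k$-cores.

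Second, for the running time I would bound the number of main-loop iterations by $O(q)$. Every application of {\sc Split}, or of the splitting branch of {\sc Screen}, strictly increases the number of connected components counted in $(k+1)$-connected v-unions, or else removes a separator; since each final core is a union of at least one of the $q$ v-unions and the cores are disjoint, the splitting recursion forms a tree with $O(q)$ leaves and hence $O(q)$ total nodes. Each iteration invokes {\sc Screen} once at cost $O((m^2+a)|\bar L|)$ by Proposition~\ref{prop:screen}, and at most one {\sc Split} at cost $O(pg^{2.5})$ by Proposition~\ref{prop:split}, giving the stated bound $O(q((m^2+a)|\bar L|+pg^{2.5}))$. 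Order independence then follows by composition: Proposition~\ref{prop:screen} guarantees the set of subgraphs produced by {\sc Screen} is scan-order independent, and the separator extracted by {\sc Split} is determined by the order-independent max-flow/min-cut value, so the decomposition is uniquely determined at each step.

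The main obstacle I anticipate is the soundness-to-$k$-core step---bridging the local, pairwise max-flow certificate of $k$-robustness supplied by Theorem~\ref{thm:maxflow} to the global, ``every maximal partitioning'' definition of a $k$-core, and in particular proving maximality of the emitted pieces. The delicate point is that a subgraph can be $k$-robust yet still mergeable with a neighbor into a larger $k$-robust block; ruling this out requires tracking that every boundary the algorithm cut was a true obstruction (a failed $(k+1)$-overlap test or a separator of size at most $k$), so that no admissible merge was ever foregone. I would handle this with an invariant on the working queue: at every step, the multiset consisting of (each queued subgraph minus its separators) together with the already-emitted cores realizes precisely the node partition induced by the $k$-cores of $G$.
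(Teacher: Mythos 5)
Your proposal follows essentially the same route as the paper's proof: soundness of emitted pieces via the max-flow condition (Theorem~\ref{thm:maxflow}), completeness via Lemma~\ref{lem:separator} showing that every discarded separator node lies in no $k$-core, a per-v-union accounting of one {\sc Screen} plus one {\sc Split} invocation to get the $O(q((m^2+a)|\bar L|+pg^{2.5}))$ bound, and order independence inherited from Propositions~\ref{prop:screen} and~\ref{prop:split}. Your working-queue invariant is just a cleaner packaging of the paper's inductive ``a $k$-core of the split subgraph $G_n^i$ is a $k$-core of $G_n$'' lifting argument, so the two proofs coincide in substance.
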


\begin{proof}
\label{proof:core}
We first prove that {\sc Core} correctly finds $k$-cores in $G$, that is 1) nodes not returned by {\sc Core} do not belong to any $k$-core; 2) each subgraph returned by {\sc Core} forms a $k$-core.

We prove that nodes not returned by {\sc Core} do not belong to any $k$-core in $G$. Nodes not returned by {\sc Core} belong to separators of subgraphs in $G$. Suppose $\bar S$ is a separator of graph $G_n\in \bf Q$ found in either {\sc Screen} or {\sc Split} phase, where $G_n\subseteq G, n\geq 0, G_0=G$, and $\bar S$ separates $G_n$ into $m$ sub-graphs $\bar X_n^i, i\in [1,m], m>1$. Graph $G_n^i \in \bf Q$ is a subgraph of $G_n$ such that any node $r\in \bar X_n^j, j\in[1, m], j\neq i$ does not belong to $G_n^i$. Nodes removed in $G_n^i$ by {\sc Core} belong to separator $\bar S$ in $G_n$. Given Lema~\ref{lem:separator}, such nodes do not belong to any $k$-core in $G_n$ and thus does not belong to any $k$-core in $G$.

We next prove that each subgraph returned by {\sc Core} forms a $k$-core in $G$. We prove two cases: 1) subgraph $G'$ in $G$ forms a $k$-core if there exists a separator $\bar S$ that disconnects $G'$ from $G$, where $|\bar S|\leq k$ and $G' \cup \bar S$ and $G'$ are both $k$-robust; 2) if a subgraph is a $k$-core in $G_n^i$, it is a $k$-core in graph $G_n$.

We consider Case 1) that subgraph $G'$ in $G$ forms a $k$-core if there exists a separator $\bar S$ that disconnects $G'$ from $G$, where $|\bar S|\leq k$ and $G' \cup \bar S$ and $G'$ are both $k$-robust. For a pair of nodes $r_1, r_2$ in $G'$, we prove that there exists no maximal $k$-robust partitioning where $r_1$ and $r_2$ are in different subgraphs. Suppose such a partitioning exists, and $G_1, G_2$ are subgraphs containing $r_1, r_2$ respectively. Since $G_1, G_2 \subseteq G'\cup \bar S$, we have that $G_1\cup G_2$ is $k$-robust, it violates the fact that the result of merging any two subgraphs in a maximal $k$-robust partitioning is not $k$-robust. Therefore, there exists no maximal $k$-robust partitioning where $r_1$ and $r_2$ are in different subgraphs. It proves that $G'$ is a $k$-core in $G$.

We next consider Case 2) that if a subgraph $G'$ is a $k$-core in $G_n^i$, it is a $k$-core in graph $G_n$. 
We prove that a pair of nodes $r_1, r_2\in G'$ belong to the same subgraph of all maximal $k$-robust partitioning in $G_n$. Suppose there exists such a partitioning of $G_n$ where $r_1\in G_1, r_2\in G_2$. Since $G_n^i \subseteq \bar X_n^i \cup \bar S$, we have $G_1, G_2\subseteq G_n^i$, otherwise $G_1, G_2$ are not $k$-robust. Since $r_1, r_2$ belong to the same $k$-core in $G_n^i$, we have $G_1=G_2$. It proves that if $G'$ is a $k$-core in $G_n^i$, it is a $k$-core in $G_n$.

The above two cases prove that each subgraph returned by {\sc Core} forms a $k$-core in $G$. In summary, nodes not returned by {\sc Core} do not belong to any $k$-core, and each subgraph returned by {\sc Core} forms a $k$-core in $G$. Thus, {\sc Core} correctly finds all $k$-cores in $G$. It further proves that the result of {\sc Core} is independent from the order in which we find and remove separators of graphs in $\bf Q$.

We now analyze the time complexity of {\sc Core}. For each $(k+1)$-connected v-unions in $G$, it takes in time $O(m^2+a)|\bar L|$ to proceed {\sc Screen} phase and in time $O(pg^{2.5})$ to proceed {\sc Split} phase. In total there are $q$ v-unions in $G$, thus the algorithm takes in time $O(q((m^2+a)|\bar L|+pg^{2.5}))$.
\end{proof}

\begin{table}
\scriptsize
\vspace{-.1in}
{\small\caption{\label{tbl:core}Step-by-step core identification
in Example~\ref{ex:detect}.}}
\begin{center}
{\small
\begin{tabular}{|c|c|l|}
\hline
Input & Method & \multicolumn{1}{c|}{Output} \\
\hline
\hline
$G_2$ & {\sc Screen} & $G_2$ \\
\hline
$G_2$ & {\sc Split} & $G_2^1=\{r_{1}-r_4\}, G_2^2=\{r_3-r_{6}\}$ \\
\hline
$G_2^1$ & {\sc Screen} &$G_2^3=\{r_3\}, G_2^4=\{r_4\}$ \\
\hline
$G_2^2$ & {\sc Screen} &$G_2^3=\{r_3\}, G_2^4=\{r_4\}$ \\
\hline
$G_2^3$ & {\sc Screen} & -\\
\hline
$G_2^4$ & {\sc Screen} & -\\
\hline
\hline
$G$ & {\sc Screen} & $G^1=\{r_{1/\dots/7}\}, G^2=\{r_{11}, r_{12}, r_{14/15}\},$ \\
&  & $G^3=\{r_{16/17/18}\}$ \\
\hline
$G^1$ & {\sc Screen} & Core $\{r_{1}-r_{7}\}$ \\
\hline
$G^2$ & {\sc Screen} & $G^4=\{r_{11}\}, G^5=\{r_{14/15}\}$ \\
\hline
$G^3$ & {\sc Screen} & Core $\{r_{16}-r_{18}\}$ \\
\hline
$G^4$ & {\sc Screen} & -\\
\hline
$G^5$ & {\sc Screen} & Core $\{r_{14}-r_{15}\}$ \\
\hline
\end{tabular}
}
\end{center}
\vspace{-.25in}
\end{table}
\vspace{-.1in}
\begin{example}\label{ex:detect}
First, consider graph $G_2$ in Figure~\ref{fig:merge-loop} and $k=2$.
Table~\ref{tbl:core}
shows the step-by-step core identification process.
It passes screening and is the input for {\sc split}.
{\sc Split} then splits it into $G_2^1$ and $G_2^2$,
where $r_3$ and $r_4$ are marked as ``separators''.
{\sc Screen} further splits each of them
into $\{r_3\}$ and $\{r_4\}$, both discarded as each
represents a single node (and is a separator).
So {\sc Core} does not output any core.

Next, consider the motivating example, with the input
shown in Table~\ref{tbl:inverted} and $k=1$.
Originally, ${\bf Q}=\{G\}$. After invoking {\sc Screen}
on $G$, we obtain three subgraphs $G^1, G^2,$ and $G^3$.
{\sc Screen} outputs $G^1$ and $G^3$ as 1-cores
since each contains a single node that represents
multiple records. It further splits $G^2$
into two single-node graphs $G^4$ and $G^5$,
and outputs the latter as a 1-core. Note that if we remove
the 1-robustness requirement, we would merge $r_{11}-r_{18}$
to the same core and get false positives.\rbox
\end{example}

\smallskip
\noindent
{\bf Case study:} On the data set with 18M records,
our core-identification algorithm finished in 2.2 minutes.
{\sc Screen} was invoked 114K times
and took 2 minutes (91\%) in total.
Except the original graph, an input contains 
at most 39.3K nodes; for 97\% inputs there are fewer than 10 nodes
and running {\sc Screen} was very fast.
{\sc Split} was invoked only 26 times;
an input contains at most 65 nodes (13 v-unions) and on average 7.8 (2.7 v-unions).
Recall that the simplified inverted index contains 1.5M entries,
so {\sc Screen} reduced the size of the input to {\sc Split}
by 4 orders of magnitude.

\section{Group Linkage}\label{sec:sate}
The second stage clusters the cores and the remaining records,
which we call {\em satellites}, into groups.
To avoid merging records based only on weak evidence,
we require that {\em a cluster cannot contain more than one
satellite but no core}. Comparing with clustering in traditional
record linkage, our algorithm differs in three aspects.
First, in addition to weighting each attribute, we weight the values according
to their popularity within a group such that similarity
on primary values (strong evidence) is rewarded more.
Second, we treat all values for dominant-value attributes
as a whole, we are tolerant to differences on local values
from different entities in the same group. Third, we distinguish
weights for distinct values and non-distinct values
such that similarity on distinct values is rewarded more.
This section first describes the objective
function for clustering (Section~\ref{sec:obj}) and then
proposes a greedy algorithm for clustering (Section~\ref{sec:algo}).


\subsection{Objective function}\label{sec:obj}
\noindent
{\bf SV-index:}
Ideally, we wish that each cluster is {\em cohesive} (each element,
being a core or a satellite,
is close to other elements in the same cluster) and different
clusters are {\em distinct} (each element is fairly different
from those in other clusters). Since records
in the same group may have fairly different local values,
we adopt {\em Silhouette Validation
Index (SV-index)}~\cite{SV-index} as the objective function as it is more tolerant
to diversity within a cluster. Given a clustering $\cal C$ of
elements $\bf E$, the SV-index of $\cal C$ is defined as follows.




\vspace{-.15in} {\small
\begin{eqnarray}
\label{eqn:sv}
S(\cal C)&=& Avg_{e \in {\bf E}} S(e);\\
S(e) &=& \frac{a(e)-b(e)+\alpha}{\max\{a(e), b(e)\}+\beta}.
\end{eqnarray}
} \vspace{-.1in}

\smallskip
\noindent
Here, $a(e) \in [0,1]$ denotes the similarity between element $e$ and its own cluster, $b(e) \in [0,1]$ denotes the maximum similarity between $e$ and another cluster, $\beta > \alpha > 0$ are small numbers to keep $S(e)$ finite and non-zero (we discuss in Section~\ref{sec:experiment} how we set the parameters).
A nice property of $S(e)$ is that it falls in $[-1, 1]$, where a value close to $1$ indicates that $e$ is in an appropriate cluster, a value close to $-1$ indicates that $e$ is mis-classified, and a value close to $0$ while $a(e)$ is not too small indicates that $e$ is equally similar to two clusters that should possibly be merged. Accordingly, we wish to obtain a clustering with the maximum SV-index. We next describe how we compare an element with a cluster.

\smallskip
\noindent
{\bf Similarity computation:} We consider that an element $e$ is similar to a cluster $Cl$ if they have highly similar values on common-value attributes (\eg, {\sf name}), share at least one {\em primary} value (we explain ``primary'' later) on dominant-value attributes (\eg, {\sf phone, URL}); in addition, our confidence is higher if they also share values on
multi-value attributes (\eg, {\sf category}).
Following previous work on handling multi-value
attributes~\cite{recon, LDMS11},
we compute the similarity $sim(e, Cl)$ as follows.

\vspace{-.15in} {\small
\begin{eqnarray}
\label{eqn:sim}
sim(e,Cl) &=& \min\{1, sim_s(e,Cl)+\tau w_msim_{multi}(e,Cl)\}; \\
sim_s(e,Cl) &=& \frac{w_{c}sim_{com}(e, Cl)+w_osim_{dom}(e, Cl)}{w_c+w_o};\\
\tau &=& \left\{
\begin{array}{rl}
0 & \text{if}\ sim_s(e,Cl) < \theta_{th},\\
1 & \text{otherwise}.
\end{array}\right.
\end{eqnarray}
} \vspace{-.1in}

\smallskip
\noindent Here, $sim_{com}, sim_{dom},$ and $sim_{multi}$ denote the similarity for common-, dominant-, and multi-attributes respectively.
We take the weighted sum of $sim_{com}$ and $sim_{dom}$ as strong indicator of $e$ belonging to $Cl$ (measured by $sim_s(e,Cl)$), and only reward weak indicator $sim_{multi}$ if $sim_s(e,Cl)$ is above a pre-defined threshold $\theta_{th}$; the similarity is at most 1. Weights $0<w_c, w_o, w_m<1$ indicate how much we reward value similarity or penalize value difference; we learn the weights from sampled data. We next highlight how we
leverage strong evidence from cores and
meanwhile remain tolerant to other different values in similarity computation.

First, we identify {\em primary values} (strong evidence)
as popular values within a cluster.
When we maintain the signature for a core or a cluster,
we keep all values of an attribute and assign a high {\em weight}
to a popular value. Specifically, let $\bar R$ be a set of records.
Consider value $v$ and let $\bar R(v) \subseteq \bar R$ denote the records
in $\bar R$ that contain $v$. The weight of $v$ is computed by
$w(v)={|\bar R(v)| \over |\bar R|}$.

\begin{example}
\label{ex:weight}
Consider {\sf phone} for core $\mbox{Cr}_1=\{r_{1}-r_{7}\}$ in Table~\ref{tab:motiv}. There are $7$ business listings in $\mbox{Cr}_1$, $5$ providing {\em 808} ($r_{1}-r_{5}$), one providing {\em 101} ($r_{6}$), and one providing {\em 102} ($r_{7}$).
Thus, the weight of {\em 808} is $\frac{5}{7}=.71$ and the weight for {\em 101} and {\em 102} is ${1 \over 7} = .14$, showing that {\em 808} is the primary
phone for $\mbox{Cr}_1$. \rbox
\end{example}

Second, when we compute $sim_{dom}(e,Cl)$, we consider all the dominant-value
attributes together, rewarding sharing primary values (values with a high
weight) but not penalizing different values unless there is no shared value.
Specifically, if the primary value of an element is the same as that of a cluster,
we consider them having probability $p$ to be in the same group. Since we use weights to measure whether the value is primary and allow slight difference on values, with a value $v$ from $e$ and $v'$ from $Cl$, the probability becomes $p\cdot w_e(v) \cdot w_{Cl}(v') \cdot s(v, v')$, where $w_e(v)$ measures the weight of $v$ in $e$, $w_{Cl}(v')$ measures the weight of $v'$ in $Cl$, and $s(v,v')$ measures the similarity between $v$ and $v'$. We compute $sim_{dom}(r, Cl)$ as the probability that they belong to the same group given several shared values as follows.

\vspace{-.15in} {\small
\begin{equation}
sim_{dom}(e, Cl)=1-\prod_{v \in e, v'\in ch}(1-p\cdot w_e(v) \cdot w_{Cl}(v') \cdot s(v, v')).
\end{equation}
} \vspace{-.1in}

\noindent
When there is no shared primary value, $sim_{dom}$ can be close to 0;
once there is one such value, $sim_{dom}$ can be significantly increased,
since we typically set a large $p$.

\begin{example}\label{ex:r-ch-sim}
Consider element $e=r_8$ and cluster $\mbox{Cl}_1=\{r_1-r_7\}$ in Example~\ref{ex:motivation}. 
Assume $p=.9$. Element $e$ and $\mbox{Cl}_1$ share the primary email domain, with
weight $1$ and ${5 \over 7}=.71$ respectively, but have different
phone numbers (assuming similarity of 0). We compute
$sim_{dom}(e,\mbox{Cl}_1)=1-(1-.9\cdot1\cdot.71\cdot1)\cdot (1-0) \cdot (1-0)\cdot (1-0)=.639$;
essentially, we do not penalize the difference in phone numbers.
Note however if {\em homedepot} appeared only once so was not a primary value, 
its weight would be $.14$ and accordingly $sim_{dom}(e,\mbox{Cl}_1)=.126$, 
indicating a much lower similarity. \rbox
\end{example}

Third, when we learn weights,
we learn one set of weights for distinct values (appearing in only one cluster) and one set for non-distinct values, such that distinct values, which can be considered as stronger evidence, typically contribute more to the final similarity.
In Example~\ref{ex:motivation}, sharing {\em ``Home Depot, The"} would serve
as stronger evidence than sharing {\em Taco Casa} for group similarity.

\eat{
\smallskip
\noindent
{\bf Set signature:} Recall that we wish to identify and
leverage strong evidence from a core; such evidence is in the form of
a popular value for dominant-value attributes. Accordingly,
given a set $\bar R$ of records, being it an element or a cluster,
we keep all values of an attribute, and assign a high {\em weight}
to a value if it is popular within $\bar R$ or is often
provided together with a popular value. Specifically, consider value $v$.
We denote by $\bar R(v) \subseteq \bar R$ the records
in $\bar R$ that contain $v$.
We define two weights for $v$. (1) The {\em popularity
weight} for $v$, computed by $w_p(v)={|\bar R(v)| \over |\bar R|}$,
measures how popular $v$ is. (2) The {\em association weight}
for $v$, denoted by $w_a(v)$, measures how often $v$ is provided
together with another popular value $v'$. In particular, denote
by $\bar R(v,v') \subseteq \bar R$ the records that contain both value
$v$ and $v'$. Weight $w_a(v)$ is high if 1) $v'$ is popular
(so $w_p(v')$ is high), and 2) $|\bar R(v,v')|$ is large:
if the probability that one record makes a mistake in providing
both values is $\epsilon (0<\epsilon<1)$,\footnote{\small
In practice we set $\epsilon$ high to be conservative.}
the probability that all
records in $\bar R(v,v')$ make a mistake is $\epsilon^{|\bar R(v,v')|}$,
small when $|\bar R(v,v')|$ is large.
Accordingly, the association score from value $v'$ is defined as
$w_p(v')(1-\epsilon^{|\bar R(v,v')|})$ and we take the maximum one from
different other values. Formally, we have

\vspace{-.15in} {\small
\begin{eqnarray}
\label{eqn:weight}
w(v) &=& \max\{w_p(v), w_a(v)\}; \\
w_p(v) &=& \frac{|\bar R(v)|}{|\bar R|};\\
w_a(v) &=& \max_{v'\neq v}{w_p(v')(1-\epsilon^{|\bar R(v,v')|})}.
\end{eqnarray}
} \vspace{-.1in}
\begin{example}
\label{ex:weight}
Consider {\sf phone} for core $\mbox{Cr}_1=\{r_{1}-r_{7}\}$ in Table~\ref{tab:motiv}. There are $7$ business listings in $\mbox{Cr}_1$, $5$ providing {\em 808} ($r_{1}-r_{5}$), one providing {\em 101} ($r_{6}$), and one providing {\em 102} ($r_{7}$). No record provides multiple phone numbers. Thus, the popularity weight of {\em 808} is $\frac{5}{7}=.71$ and the association weight is $0$, so the final weight is $.71$. Similarly, the weight for {\em 101} and {\em 102} is .14. \rbox
\end{example}
\begin{example}
\label{ex:weight1}
Consider a set of $1200$ records where all records provide
$v_1=${\em allstate} for {\sf URL} and 30 records in addition provide \\$v_2=${\em allstateagencies}. Assume a very high error rate $\epsilon=.99$.
For the popularity weight, $w_p(v_1)=1$ and $w_p(v_2)=.025$.
For the association weight,
$w_a(v_1)=.025\cdot(1-.99^{30})=.007$ and $w_a(v_2)=1\cdot(1-.99^{30})=.26$.
Thus, the overall weights are $w(v_1)=1$, $w(v_2)=.26$.
Value $v_1$ has a high weight because it is popular;
$v_2$ has a not-too-low weight because it is frequently provided
together with the popular value $v_1$.\rbox
\end{example}

Finally, when we compare an element $e$ with its own cluster $Cl$,
we generate the signature of $Cl$ using its elements excluding $e$.

\smallskip
\noindent
{\bf Similarity computation:} We consider that an element $e$ is similar to a cluster $Cl$ if they have highly similar values on common-value attributes (\eg, {\sf name}), share at least one primary value (value with a high weight) on dominant-value attributes (\eg, {\sf phone, domain-name}); in addition, our confidence is higher if they also share values on distinct-value attributes (\eg, {\sf location}) or multi-value attributes (\eg, {\sf category}). Note that although we assume different records in a group should have different values on distinct-value attributes, for some coarse-granularity values, such as state or region of the location of chain stores, we still often observe sharing of values (\eg, a business chain in one state, or in a few neighboring states). Formally, we compute the similarity $sim(e, Cl)$ as follows.

\vspace{-.15in} {\small
\begin{eqnarray}
\label{eqn:sim}
sim(e,Cl) &=& \min\{1, sim_s(e,Cl)+\tau sim_w(e,Cl)\}; \\
sim_s(e,Cl) &=& \frac{w_{c}sim_{com}(e, Cl)+w_osim_{dom}(e, Cl)}{w_c+w_o};\\
sim_w(e,Cl)&=&w_{d}sim_{dist}(e, Cl)+ w_{m}sim_{multi}(e,Cl);\label{eqn:sim2}\\
\tau &=& \left\{
\begin{array}{rl}
0 & \text{if} sim_s(e,Cl) < \theta_{th},\\
1 & \text{otherwise}.
\end{array}\right.
\end{eqnarray}
} \vspace{-.1in}

\smallskip
\noindent Here, $sim_{com}, sim_{dom}, sim_{dist},$ and $sim_{multi}$ denote the similarity for common-value, dominant-value, distinct-value, and multi-value attributes respectively.
We take the weighted sum of $sim_{com}$ and $sim_{dom}$ as strong evidence for $e$ belonging to $Cl$ (measured by $sim_s(e,Cl)$), and only reward weak evidence $sim_w(e,Cl)$, computed from $sim_{dist}$ and $sim_{multi}$, if $sim_s(e,Cl)$ is above a pre-defined threshold $\theta_{th}$. Weights $0<w_c, w_o, w_d, w_o<1$ indicate how much we reward value similarity or penalize value difference. We learn the weights from sampled data, and we learn one set of weights for distinct values (appearing in only one group) and one set for non-distinct values, such that distinct values can contribute more to the final similarity.

We describes details of similarity computation and weight
learning in~\cite{LDM+12}.
We next show in comparing values of dominant-value attributes,
how we reward similarity on primary values (values with high weights) but
If the primary value of an element is the same as that of a cluster on a dominant-value attribute, we consider them having probability $p$ ($p$ can be learned from data) to be in the same group. Since we use weights to measure whether the value is primary and allow slight difference on values, with a value $v$ from $e$ and $v'$ from $Cl$, the probability becomes $p\cdot w_e(v) \cdot w_{Cl}(v') \cdot s(v, v')$, where $w_e(v)$ measures the weight of $v$ in $e$, $w_{Cl}(v')$ measures the weight of $v'$ in $Cl$, and $s(v,v')$ measures the similarity between $v$ and $v'$. We compute $sim_{dom}(r, Cl)$ as the probability that they belong to the same group given several shared values as follows.

\vspace{-.15in} {\small
\begin{equation}
sim_{dom}(e.A, Cl.A)=1-\prod_{v \in e.A, v'\in ch.A}(1-p\cdot w_e(v) \cdot w_{Cl}(v') \cdot s(v, v')).
\end{equation}
} \vspace{-.1in}

\begin{example}\label{ex:r-ch-sim}
Consider element $e=r_{13}$ and cluster $\mbox{Cl}=\{r_{14}-r_{15}\}$ in Example~\ref{ex:motivation}. Assume $w_c=w_o=.5, w_d=w_m=.05, \theta_{th}=.8, p=.8$.
Since $e$ and $\mbox{Cl}$ share exactly the same value on {\sf name, category} and {\sf location}, we have $sim_{com}(e, \mbox{Cl})=sim_{dist}(e, \mbox{Cl})=sim_{multi}(e, \mbox{Cl})=1$.
For dominant-value attributes, both $e$ and $\mbox{Cl}$ provide {\em 900}
with weight $1$, and they do not share URL,
so $sim_{dom}(e,\mbox{Cl})=1-(1-.8\cdot1\cdot1\cdot1)=.8$. Thus, we have
$sim_s(e,\mbox{Cl})=\frac{.5\cdot1+.5\cdot.8}{.5+.5}=.9>\theta_{th}$, $sim_w(e,\mbox{Cl})=.05\cdot1+.05\cdot1=.1$, so $sim(e,\mbox{Cl})=\min\{1,.87+.1\}=.97$.\rbox
\end{example}

{\em Common-Value attribute:} Similarity $sim_{com}$ is computed as the average of similarities on each common-value attribute $A$.
For each $A$, $e$ and $Cl$ may each contain a set of values.
We penalize values with low similarity and apply cosine similarity
(in practice, we take into consideration various representations of the
same value):

\vspace{-.15in} {\small
\begin{equation}
sim_{com}(e.A, Cl.A)=\frac{\sum_{v\in e.A \cap ch.A}w(v)^2}{\sqrt{\sum_{v\in e.A}w(v)^2}\sqrt{\sum_{v'\in ch.A}w(v')^2}}.
\end{equation}
} \vspace{-.1in}

{\em dominant-value attribute:} Similarity $sim_{dom}$ rewards similarity on primary values (values with high weights) but meanwhile is tolerant to other different values. If the primary value of an element is the same as that of a cluster on a dominant-value attribute, we consider them having probability $p$ to be in the same group. Then, if they share $n$ such values, the probability becomes $1-(1-p)^n$. Since we use weight to measure whether the value is primary and allow slight difference on values, with a value $v$ from $e$ and $v'$ from $Cl$, we consider the probability that $e$ and $Cl$ belong to the same group as $p\cdot w_e(v) \cdot w_{Cl}(v') \cdot s(v, v')$, where $w_e(v)$ measures the weight of $v$ in $e$, $w_{Cl}(v')$ measures the weight of $v'$ in $Cl$, and $s(v,v')$ measures the similarity between $v$ and $v'$. Therefore, we compute $sim_{dom}(r.A, Cl.A)$ as follows.

\vspace{-.15in} {\small
\begin{equation}
sim_{dom}(e.A, Cl.A)=1-\prod_{v \in e.A, v'\in ch.A}(1-p\cdot w_e(v) \cdot w_{Cl}(v') \cdot s(v, v')).
\end{equation}
} \vspace{-.1in}


{\em Distinct-Value/Multi-Value attribute:} We allow diversity in such attributes and
use a variant of Jaccard distance for similarity computation.
Formally, $sim_{dist}(e,Cl)$ is computed as the sum of the similarities
for each distinct-value attribute $A$ (similar for $sim_{multi}(e,Cl)$).
For each $A$, without losing generality, assume $|e.A|\leq |Cl.A|$
and we treat two values as the same if their similarity
is deemed high (above a threshold); we have

\vspace{-.15in} {\small
\begin{eqnarray}
\nonumber
&&sim_{dist}(e.A, Cl.A)\\
&=&\frac{\sum_{v\in e.A}\max_{v'\in Cl.A, s(v, v')>\theta}s(v, v')\max\{w_e(v),w_{Cl}(v')\}}{\sum_{v\in e.A}\max_{v'\in Cl.A, s(v, v')>\theta}\max\{w_e(v),w_{Cl}(v')\}}.
\end{eqnarray}
} \vspace{-.1in}

{\em Attribute weights:} We apply attribute weights in order to reward attribute value consistency and penalize value difference. Accordingly, for a specific attribute, we define two types of weights: \emph{agreement weight} $w^{agr}$ and \emph{disagreement weight} $w^{dis}$. Disagreement weight is defined as the probability of two records belonging to different groups given that they disagree on the attribute values. Agreement weight is defined as the probability of two records belonging to the same group given that they agree on the attribute value. We distinguish between ambiguous and unambiguous values, i.e., attribute values shared by multiple groups (\eg, {\sf name}) are considered ambiguous and not strong indicator of two records belonging to the same group, thus have a lower weight; on the other hand, attribute values owned by a single group are considered unambiguous and are associated with a high weight. We learn both agreement and disagreement weights from labeled data.

As aforementioned, we consider both agreement and disagreement weights for common-value and dominant-value attributes. For distinct-value and multi-value attributes, we only apply agreement weights to reward weak evidence, and down-weight the weights to make sure the overall similarity is within $[0, 1]$. How we apply agreement and disagreement weights to compute overall similarity can be found in previous work~\cite{DBLP:journals/pvldb/LiDMS11}.

\begin{example}\label{ex:r-ch-sim}
Consider element $e=r_{13}$ and cluster $\mbox{Cl}=\{r_{14}-r_{15}\}$ in Example~\ref{ex:motivation}. Assume $w^{agr}_c=.5, w^{dis}_c=1, w^{agr}_s=1, w^{dis}_s=.5, w_d=w_m=.05, \theta_{th}=.8, p=.8$.
Since $e$ and $\mbox{Cl}$ share exactly the same value on {\sf name, category} and {\sf location}, we have $sim_{com}(e, \mbox{Cl})=sim_{dist}(e, \mbox{Cl})=sim_{multi}(e, \mbox{Cl})=1$ and use agreement weight on {\sf name}.
For dominant-value attributes, both $e$ and $\mbox{Cl}$ provide {\em 900}
with weight $1$, and they do not share URL;
thus, $sim_{dom}(e,\mbox{Cl})=1-(1-.8\cdot1\cdot1\cdot1)=.8$, and $w_o=.8\cdot1+(1-.8)\cdot.5=.9$.
Thus, we have
$sim_s(e,\mbox{Cl})=\frac{.5\cdot1+.9\cdot.8}{.5+.9}=.87(>\theta_{th})$, $sim_w(e,\mbox{Cl})=.05\cdot1+.05\cdot1=.1$, and so $sim(e,\mbox{Cl})=\min\{1,.87+.1\}=.97$.\rbox
\end{example}
}

\subsection{Clustering algorithm}\label{sec:algo}
In most cases, clustering is intractable~\cite{Gonzalez82,clusternp}. We maximize the SV-index in a greedy fashion. Our algorithm starts with an initial clustering and then iteratively examines if we can improve the current clustering (increase SV-index) by merging clusters or moving elements between clusters. According to the definition of SV-index, in both initialization and adjusting, we always assign an element to the cluster with which it has the highest similarity.

\smallskip
\noindent {\bf Initialization}: Initially, we (1) assign each core to its own cluster and (2) assign a satellite $r$ to the cluster with the highest similarity if the similarity is above threshold $\theta_{ini}$ and create a new cluster for $r$ otherwise. We update the signature of each core along the way.
Note that initialization is sensitive in the order we consider
the records. Although designing an algorithm independent of the
ordering is possible, such an algorithm is more expensive and
our experiments show that the iterative adjusting can smooth out the difference.

\begin{example}\label{ex:initial}
Continue with the motivating example in Table~\ref{tab:motiv}.
First, consider records $r_{1}-r_{10}$, where $\mbox{Cr}_1=\{r_{1}-r_{7}\}$ is a core.
We first create a cluster $\mbox{Cl}_1$ for $\mbox{Cr}_1$. We then merge records $r_{8}-r_{10}$
to $\mbox{Cl}_1$ one by one, as they share similar names, and either
primary phone number or primary URL.

Now consider records $r_{11}-r_{20}$; recall that there are 2 cores and 5 satellites after core identification.
Figure~\ref{fig:clustering} shows the initialization result ${\cal C}_a$.
Initially we create two clusters $\mbox{Cl}_2, \mbox{Cl}_3$ for cores $\mbox{Cr}_2, \mbox{Cr}_3$.
Records $r_{11}, r_{19}-r_{20}$ do not share any primary value
on dominant-value attributes with $\mbox{Cl}_2$ or $\mbox{Cl}_3$,
so have a low similarity with them;
we create a new cluster for each of them.
Records $r_{12}$ and $r_{13}$ share the primary phone with $\mbox{Cr}_2$ so
have a high similarity; we link them to $\mbox{Cl}_2$. \rbox
\end{example}

\begin{figure}[t]
\centering
\includegraphics[scale=.35]{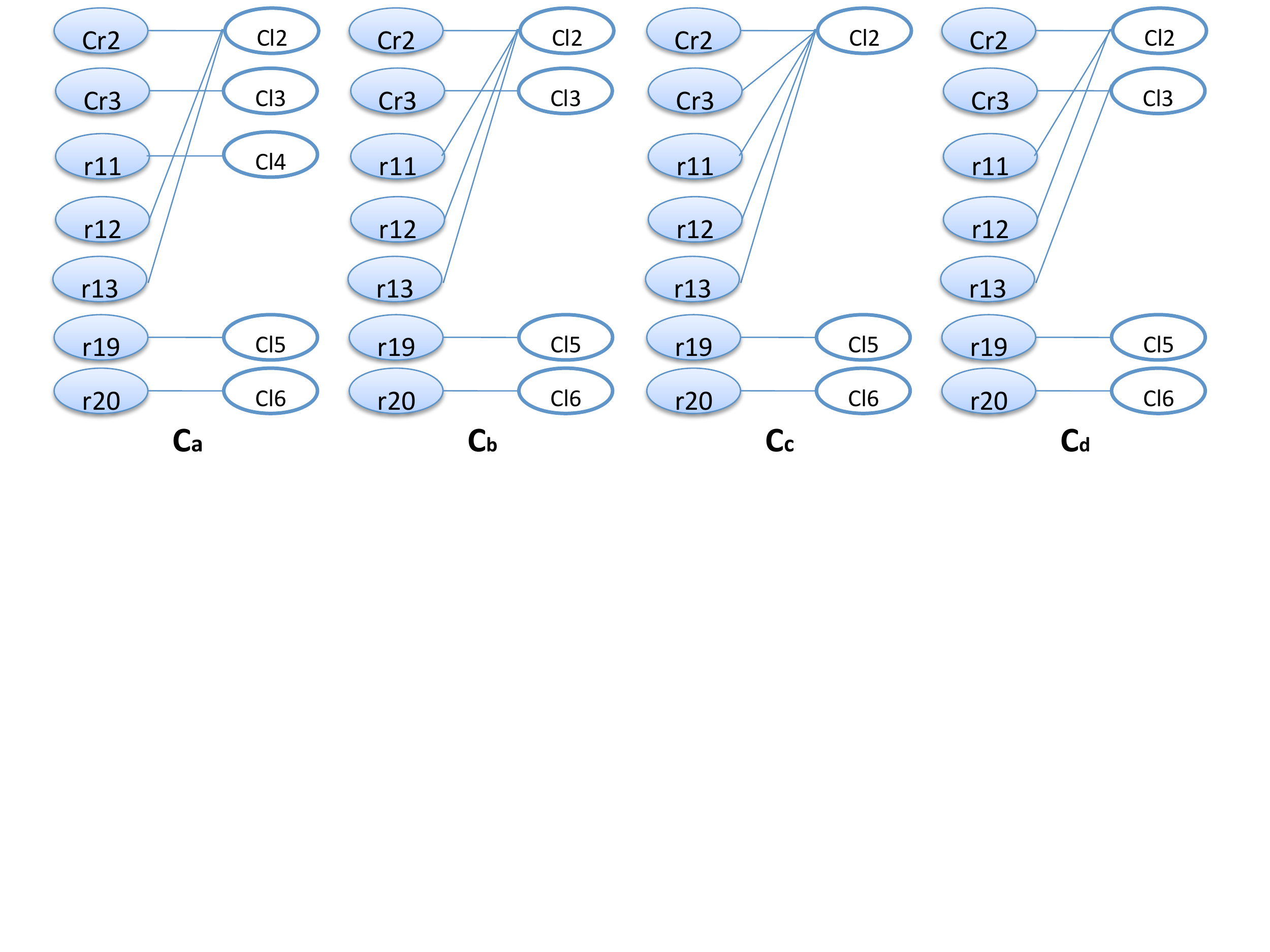}
\vspace{-1.65in} {\small\caption{Clustering of $r_{11}-r_{20}$ in Table~\ref{tab:motiv}.\label{fig:clustering}}}
\vspace{-.1in}
\end{figure}

\begin{table}
\vspace{-.1in}
 \scriptsize
  \centering
\caption{Element-cluster similarity and SV-index for clusterings in Figure~\ref{fig:clustering}.
Similarity between an element and its own cluster is in bold and
the second-to-highest similarity
is in italic. Low $S(e)$ scores are in italic.}
  \label{tbl:benefit}
  \begin{tabular}{|c|c|c|c|c|c||c|}
  \hline
  & $\mbox{Cl}_2$ & $\mbox{Cl}_3$ & $\mbox{Cl}_4$ & $\mbox{Cl}_5$& $\mbox{Cl}_6$& $S(e)$\\
  \hline
  $\mbox{Cr}_{2}$ & \textbf{.9} & \emph{.5} & \emph{.5}& \emph{.5} & \emph{.5} & .44\\
  $\mbox{Cr}_3$ & \emph{.6} & \textbf{1} & .5& .5 & .5 & .4\\
  $r_{11}$ & \emph{.7} & .5 & \textbf{1}& .5 & .5 & {\em .3}\\
  $r_{12}$ & \textbf{.99} & .5 & \emph{.95}& .5 & .5 & {\em .05}\\
  $r_{13}$ & \textbf{1} & .9 & \emph{.95}& .5 & .5 & {\em .05}\\
  $r_{19}$ & \emph{.5}& \emph{.5} & \emph{.5} & \textbf{1} & \emph{.5} & .5\\
  $r_{20}$ & \emph{.5}& \emph{.5} & \emph{.5} & \emph{.5} & \textbf{1} & .5\\
  \hline
\end{tabular}

(a) Cluster ${\cal C}_a$.

  \begin{tabular}{|c|c|c|c|c||c|}
  \hline
  & $\mbox{Cl}_2$ & $\mbox{Cl}_3$ & $\mbox{Cl}_5$ & $\mbox{Cl}_6$& $S(r)$\\
  \hline
  $\mbox{Cr}_2$ & \textbf{.87} & \emph{.5} & \emph{.5} & \emph{.5} & .43\\
  $\mbox{Cr}_3$ & \emph{.58}& \textbf{1} & .5 & .5 & .42\\
  $r_{11}$ & \textbf{.79}& \emph{.5} & \emph{.5} & \emph{.5} & .37\\
  $r_{12}$ & \textbf{.96}& \emph{.5} & \emph{.5} & \emph{.5} & .48\\
  $r_{13}$ & \textbf{.97}& \emph{.9} & .5 & .5 & {\em .07}\\
  $r_{19}$ & \emph{.5} & \emph{.5} & \textbf{1} & \emph{.5} & .5\\
  $r_{20}$ & \emph{.5} & \emph{.5} & \emph{.5} & \textbf{1} & .5\\
  \hline
\end{tabular}

(b) Cluster ${\cal C}_b$.
\vspace{-.15in}
\end{table}

\noindent {\bf Cluster adjusting:}
Although we always assign an element $e$ to the cluster with the highest
similarity so $S(e)>0$, the result clustering may still be improved by merging
some clusters or moving a {\em subset} of elements from one cluster
to another. Recall that when $S(e)$ is close to 0 and $a(e)$
is not too small, it indicates that
a pair of clusters might be similar and is a candidate for merging.
Thus, in cluster adjusting, we find such candidate pairs, iteratively
adjust them by merging them or moving a subset of elements between
them, and choose the new clustering if it increases the SV-index.

We first describe how we find candidate pairs. Consider element $e$
and assume it is closest to clusters $Cl$ and $Cl'$.
If $S(e)\leq \theta_s$, where $\theta_s$ is a threshold
for considering merging, 
 we call it a {\em border} element of $Cl$ and $Cl'$
and consider $(Cl, Cl')$ as a candidate pair.
We rank the candidates according to
(1) how many border elements they have
and (2) for each border element $e$, how close $S(e)$ is to 0.
Accordingly, we define the {\em benefit} of merging $Cl$ and $Cl'$
as $b(Cl,Cl')=\sum_{e\ is\ a\ border\ of\ Cl\ and\ Cl'}(1-S(e))$, and
rank the candidate pairs in decreasing order of the benefit.

We next describe how we re-cluster elements in a candidate
pair $(Cl, Cl')$. We adjust by merging the two
clusters, or moving the border elements between the clusters, or moving
out the border elements and merging them. Figure~\ref{fig:re-cluster}
shows the four re-clustering plans for a candidate pair.
Among them, we consider those that are valid
(\ie, a cluster cannot contain more than one satellite but no core)
and choose the one with the highest
SV-index. When we compute SV-index, we consider only elements
in $Cl, Cl'$ and those that are second-to-closest to
$Cl$ or $Cl'$ (their $a(e)$ or $b(e)$ can be changed)
such that we can reduce the computation cost.
After the adjusting, we need to re-compute $S(e)$ for these
elements and update the candidate-pair list accordingly.

\begin{figure}[t]
\centering
\includegraphics[scale=.35]{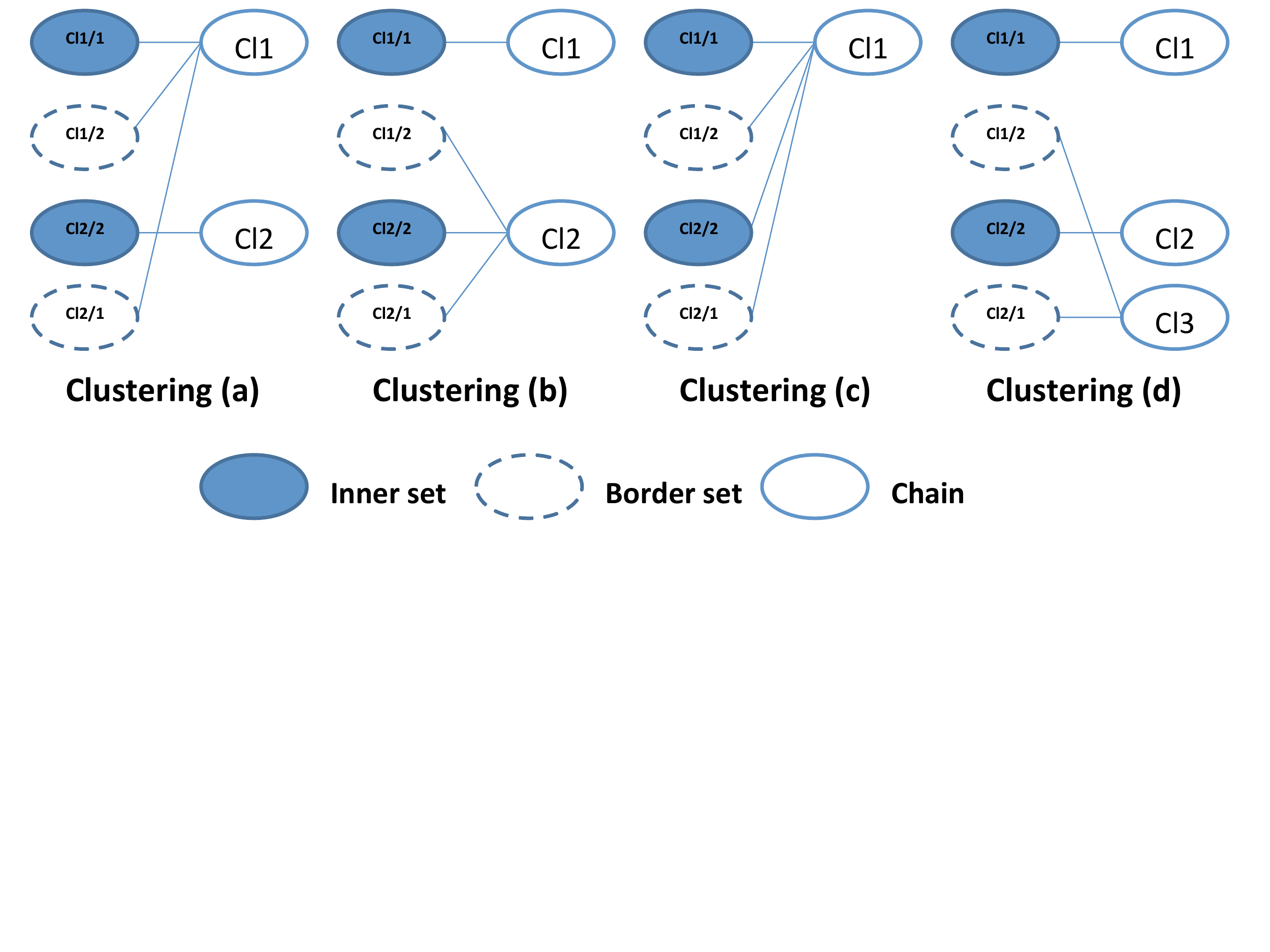}
\vspace{-1.45in} {\small\caption{Reclustering plans for $Cl_1$ and $Cl_2$.
\label{fig:re-cluster}}}
\vspace{-.1in}
\end{figure}

\begin{example}\label{ex:valid-clustering}
Consider adjusting cluster ${\cal C}_a$ in Figure~\ref{fig:clustering}.
Table~\ref{tbl:benefit}(a) shows similarity of each element-cluster
pair and SV-index of each element. Thus, the SV-index is .32.

Suppose $\theta_{s}=.3$. Then, $r_{11}-r_{13}$ are border elements
of $\mbox{Cl}_2$ and $\mbox{Cl}_4$, where $b(\mbox{Cl}_2,\mbox{Cl}_4)=.7+.95+.95=2.6$
(there is a single candidate so we do not need to compare
the benefit). For the candidate, we have two
re-clustering plans, $\{\{r_{11}-r_{13}, \mbox{Cr}_2\}\}$, $\{\{r_{11}-r_{13}\}, \{\mbox{Cr}_2\}\}$,
while the latter is invalid. For the former (${\cal C}_b$ in Figure~\ref{fig:clustering}),
we need to update $S(e)$ for every element and the new
SV-index is $.4$ (Table~\ref{tbl:benefit}(b)),
higher than the original one. \rbox
%
\end{example}

The full clustering algorithm {\sc Cluster}
 (details in Algorithm~\ref{alg:sate})
goes as follows.
\begin{enumerate}\tightlist
\item Initialize a clustering $\cal C$ and a list $Que$ of candidate
pairs ranked in decreasing order of merging benefit.
(Lines~\ref{ln:init1}-\ref{ln:init2}).
\item For each candidate pair $(Cl, Cl')$ in $Que$ do the following.

(a) Examine each valid adjusting plan
    and compute SV-index for it, and choose the one with
    the highest SV-index.  (Line~\ref{ln:compare}).

(b) Change the clustering if the new plan
    has a higher SV-index than the original clustering.
    Recompute $S(e)$ for each relevant element $e$ and move
    $e$ to a new cluster if appropriate.
    Update $Que$ accordingly.
    (Lines~\ref{ln:adjust1}-\ref{ln:adjust2}).
\item Repeat Step 2 until $Que=\emptyset$.
\end{enumerate}

{\small
\begin{algorithm}[t]
\caption{{\sc Cluster($\bf E, \theta_s$)}\label{alg:sate}
}
\begin{algorithmic}[1]
\REQUIRE $\bf E$: A set of cores and satellites for clustering.\\
\ \ $\theta_s$: Pre-defined threshold for considering merging.

\ENSURE $\cal C$: A clustering of elements in $\bf E$.

\STATE Initialize $\cal C$ according to $\bf E$;\label{ln:init1}

\STATE Compute $S({\cal C})$ and generate a list $Que$ of candidate pairs;\label{ln:init2}

\FOR{{\bf each} candidate pair $(Cl, Cl')\in Que$}

\STATE compute SV-index for its valid re-clustering plans and choose the clustering $\cal C$$_{max}$ with the highest SV-index;\label{ln:compare}

\IF{$S(\cal C$$)<S(\cal C$$_{max})$}

\STATE let ${\cal C}={\cal C}_{max}$, $change=true$;\label{ln:adjust1}

\WHILE{$change$}

\STATE $change=false$;

\FOR{{\bf each} relevant element $e$}

\STATE recompute $S(e)$;

\STATE When appropriate, move $e$ to a new cluster and set $change=true$;

\IF{$S(e)<\theta_s$ in the previous or current $\cal C$}

\STATE update the merging benefit of the related candidate pair and add it to $Que$ or remove it from $Que$ when appropriate;

\ENDIF

\ENDFOR

\ENDWHILE\label{ln:adjust2}

\ENDIF

\ENDFOR

\RETURN $\cal C$;
\end{algorithmic}
\end{algorithm}
}

\begin{proposition}\label{pro:cluster}
Let $l$ be the number of distinct candidate pairs ever in
$Que$ and $|{\bf E}|$ be the number of input elements.
Algorithm {\sc Cluster} takes time $O(l\cdot|{\bf E}|^2)$.\rbox
\end{proposition}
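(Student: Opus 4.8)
The plan is to bound the running time by the standard product decomposition: (i) the cost of a single execution of the main loop body of {\sc Cluster} (Algorithm~\ref{alg:sate}) for one candidate pair, and (ii) the total number of times that loop body executes. I would show that (i) is $O(|{\bf E}|^2)$ and that (ii) is $O(l)$, so that their product gives the claimed $O(l\cdot|{\bf E}|^2)$. The initialization in Lines~\ref{ln:init1}--\ref{ln:init2} is dominated by this loop (computing $S({\cal C})$ from scratch is itself $O(|{\bf E}|^2)$), so it need not be handled separately.

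For the per-pair cost, I would first invoke the description accompanying Figure~\ref{fig:re-cluster}: there is only a constant number (four) of candidate re-clustering plans, so it suffices to bound the cost of evaluating one plan, i.e., computing its SV-index (Line~\ref{ln:compare}). The key observation is that a plan only changes the membership of $Cl$ and $Cl'$, so by the remark in Section~\ref{sec:algo} only the \emph{relevant} elements — those in $Cl\cup Cl'$ together with those whose second-closest cluster is $Cl$ or $Cl'$ — need their $S(e)$ recomputed via Equation~\ref{eqn:sv}, and there are at most $|{\bf E}|$ of them. For each such $e$, recomputing $a(e)$ and $b(e)$ requires comparing $e$ against its own cluster and every other cluster; since the number of clusters is at most $|{\bf E}|$ and a single element--cluster similarity $sim(e,Cl)$ touches only the bounded number of attribute values carried by $e$ (hence $O(1)$ under the standard assumption that each record has boundedly many values), each $S(e)$ costs $O(|{\bf E}|)$. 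Summing over the $O(|{\bf E}|)$ relevant elements and the constant number of plans yields $O(|{\bf E}|^2)$ for evaluating and selecting the best plan for one candidate pair.

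To bound the number of loop-body executions, I would charge each execution to a distinct candidate pair drawn from $Que$, and use the hypothesis that only $l$ distinct pairs ever enter $Que$. The crux is to show each such pair is (re-)processed $O(1)$ times despite the dynamic updates in Lines~\ref{ln:adjust1}--\ref{ln:adjust2}. Here I would exploit that {\sc Cluster} accepts a re-clustering only when it \emph{strictly} increases $S({\cal C})$, so the global SV-index is monotonically increasing across accepted adjustments; a pair is re-added to $Que$ only after some relevant $S(e)$ crosses the threshold $\theta_s$, and these events can be amortized against the bounded pool of $l$ distinct pairs. Folding in the inner \textbf{while} loop, each of its passes over the relevant elements stays within the same $O(|{\bf E}|^2)$ budget, and the element moves it performs mutate $Que$ only by touching candidate pairs already counted in $l$.

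I expect this last counting step to be the main obstacle. Making it rigorous requires arguing carefully that the interaction between the monotonic increase of $S({\cal C})$ and the queue dynamics does not introduce an extra multiplicative factor: specifically, that re-insertions triggered by element moves inside the \textbf{while} loop are absorbed into the $l$ distinct-pair count rather than generating a fresh, unbounded stream of processings. The similarity and SV-index evaluation costs are routine once the bounded-values assumption is made explicit; it is this bookkeeping of how often a given pair can return to $Que$ that carries the real weight of the proof.
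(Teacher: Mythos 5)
Your proposal takes essentially the same route as the paper's proof: a product decomposition with $O(|{\bf E}|^2)$ per candidate pair (dominated by recomputing $S(e)$ for the $O(|{\bf E}|)$ relevant elements at $O(|{\bf E}|)$ each) times the $l$ distinct pairs ever in $Que$, with initialization absorbed into the same bound. The counting subtlety you flag as the main obstacle is not actually resolved in the paper either --- its proof simply charges each of the $l$ distinct pairs a single $O(|{\bf E}|^2)$ processing cost without arguing that re-insertions contribute only constantly many reprocessings per pair, so your more cautious framing is, if anything, an honest acknowledgment of a gap the paper glosses over.
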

\begin{proof}\label{proof:cluster}
It takes time $O(|{\bf E}|^2)$ to initialize clustering $\cal C$ and list $Que$. It takes $|{\bf E}|^2$ to check each distinct candidate pair in $Que$, where it takes $O(|{\bf E}|)$ to examine all valid clustering plans and select the one with highest SV-index (Step 2(a)), and it takes $O(|{\bf E}|^2)$ to recompute SV-index for all relevant elements and update $Que$ (Step 2(b)). In total there are $l$ distinct candidate pairs ever in $Que$, thus {\sc Cluster} takes time $O(l\cdot|{\bf E}|^2)$.
\end{proof}

Note that we first block records according to name similarity
and take each block as an input, so typically $|{\bf E}|$ is quite small.
Also, in practice we need to consider only a few
candidate pairs for adjusting in each input, so $l$ is also small.

\begin{example}\label{ex:cluster}
Continue with Example~\ref{ex:valid-clustering} and
consider adjusting ${\cal C}_b$.
Now there is one candidate pair $(\mbox{Cl}_2, \mbox{Cl}_3)$,
with border $r_{13}$.
We consider clusterings ${\cal C}_c$ and ${\cal C}_d$.
Since $S(\cal C$$_c)=.37<.40$ and $S(\cal C$$_d)=.32<.40$,
we keep $\cal C$$_b$ and return it as the result.
We do not merge records $\mbox{Cl}_2=\{r_{11}-r_{15}\}$
with $\mbox{Cl}_3=\{r_{16}-r_{18}\}$,
because they share neither phone nor the
primary URL. {\sc Cluster} returns the correct chains.\rbox
\end{example}

\section{Experimental Evaluation}\label{sec:experiment}
This section describes experimental results on two real-world
data sets, showing high scalability of our techniques, and advantages
of our algorithm over rule-based or traditional machine-learning
methods on accuracy. 

\begin{table}[t]
\vspace{-.1in}
 \scriptsize
  \centering
\caption{Statistics of the experimental data sets.}
  \label{tbl:data}
  \begin{tabular}{|c|c|c|c|c|}
  \hline
  & & $\#$Groups & & \#Singletons\\
  & \raisebox{1.5ex}[0pt]{$\#$Records} & (size $>1$) & \raisebox{1.5ex}[0pt]{Group size} & (size $=1$)  \\
  \hline
  {\em Random} & 2062 &30 &[2, 308] & 503 \\
  \hline
  {\em AI} & 2446 & 1 & 2446 & 0 \\
  \hline
  {\em UB} &322 & 9 & [2, 275] & 5 \\
  \hline
  {\em FBIns} &1149 & 14 & [33, 269] & 0 \\
  \hline
  {\em SIGMOD} & 590& 71 & [2, 41] & 162 \\
  \hline
\end{tabular}
\vspace{-.1in}
\end{table}
\subsection{Experiment settings}\label{sec:setting}
\noindent {\bf Data and gold standard:}
We experimented on two real-world data sets. {\em Biz} contains
18M US business listings 
and each listing has attributes {\sf name, phone, URL, location}
and {\sf category}; we decide which listings belong to the
same business chain. {\em SIGMOD} contains records about
590 attendees of SIGMOD'98 and each record has attributes
{\sf name, affiliation, address, phone, fax} and {\sf email};
we decide which attendees belong to the same institute.

We experimented on the whole {\em Biz} data set to study scalability
of our techniques. We evaluated accuracy of our techniques on
five subsets of data. The first four are from {\em Biz}.
(1) {\em Random} contains 2062 listings from {\em Biz},
where 1559 belong to 30 randomly selected business chains,
and 503 do not belong to any chain; among the 503 listings, 86 are highly
similar in {\sf name} to listings in the business chains and the
rest are randomly selected. (2) {\em AI} contains
2446 listings for the same business chain {\em Allstate Insurance}.
These listings have the same name, but
1499 provide URL {\em ``allstate.com''},
854 provide another URL {\em ``allstateagencies.com''},
while 130 provide both, and 227 listings do not
provide any value for {\sf phone} or {\sf URL}.
(3) {\em UB} contains 322 listings with exactly the same
name {\em Union Bank} and highly similar category values;
317 of them belong to
9 different chains while 5 do not belong to any chain.
(4) {\em FBIns data set} contains 1149 listings with similar names
and highly similar category
values; they belong to 14 different chains. Among the listings,
708 provide the same wrong name {\em Texas Farm Bureau Insurance}
and meanwhile provide a wrong {\em URL}
{\em farmbureauinsurance-mi.com}.
Among these four subsets, the latter three are hard cases;
for each data set, we manually verified all the chains by checking
store locations provided by the business-chain websites
and used it as the gold standard.
The last ``subset'' is actually the whole {\em SIGMOD} data set.
It has very few wrong values, but the same affiliation
can be represented in various ways and some affiliation
names can be very similar (e.g., {\em UCSC} vs. {\em UCSD}).
We manually identified 71 institutes that have multiple
attendees and there are 162 attendees who do not
belong to these institutes.
Table~\ref{tbl:data} shows statistics of the five subsets.

\smallskip
\noindent {\bf Measure:} We considered each group as
a cluster and compared pairwise linking decisions with the
gold standard. We measured the quality of the results
by \emph{precision} ($P$), \emph{recall} ($R$), and \emph{F-measure} ($F$).
If we denote the set of true-positive pairs by $TP$,
the set of false-positive pairs by $FP$,
and the set of false-negative pairs by $FN$,
then, $P=\frac{|TP|}{|TP|+|FP|}$, $R=\frac{|TP|}{|TP|+|FN|}$, $F=\frac{2PR}{P+R}$.
In addition, we reported execution time.

\smallskip
\noindent {\bf Implementation:}
We implemented the technique we proposed in this paper,
and call it {\sc Group}. In core generation, for {\em Biz}
we considered two records are similar
if (1) their name similarity is above .95; and (2) they share
at least one phone or URL domain name. For {\em SIGMOD}
we require (1) affiliation similarity is above .95; and
(2) they share at least one of phone prefix (3-digit),
fax prefix (3-digit), email server, or the addresses have
a similarity above .9. We required 2-robustness
for cores. In clustering, (1) for blocking,
we put records whose name similarity
is above .8 in the same block; (2) for similarity computation,
we computed string similarity by Jaro-Winkler distance~\cite{crf03},
we set $\alpha = .01, \beta = .02,
\theta_{th}=.6, p = .8$, and we learned other weights from 1000 records
randomly selected from {\em Random} data for {\em Biz},
and 300 records randomly selected from {\em SIGMOD}.
We discuss later the effect of these choices.

For comparison, we also implemented the following baselines:
\begin{itemize}\tightlist
\item  {\sc SameName} groups {\em Biz} records with highly
similar names and groups {\em SIGMOD} records with highly
similar affiliations (similarity above .95);
\item  {\sc ConnectedGraph} generates the similarity graph as\\
{\sc Group} but considers each connected subgraph as a group;
\item One-stage machine-learning linkage methods include 
{\sc Partition}, {\sc Center} and {\sc Merge}~\cite{Hassanzadeh09frameworkfor};
each method computes record similarity by
Eq.(\ref{eqn:sim}) with learned weights.
\item Two-stage method
{\sc Yoshida}~\cite{Yoshida:2010:PND:1835449.1835454} generates cores by
agglomerative clustering with threshold .9 in the first stage,
uses TF/IDF weights for features and applies linear algebra to
assign each record to a group in the second stage.
\end{itemize}

We implemented the algorithms in Java. We used a Linux
machine with Intel Xeon X5550 processor (2.66GHz,
cache 8MB, 6.4GT/s QPI).
We used MySQL to store the data sets and stored the index
as a database table. Note that after blocking, we can fit each block of 
nodes or elements into main memory, which is typically the case
with a good blocking strategy.

\eat{
We implemented three strategies to construct the similarity graphs, including:
\begin{itemize}\tightlist
\item {\sc OneSemi} that creates an edge between two nodes in the graph if they share the same or highly similar names and at least one semi-centralized attribute value;
\item {\sc TwoSemi} that creates an edge between two nodes in the graph if they share the same or highly similar names and at least two semi-centralized attribute values;
\item {\sc Sim} that creates an edge between two nodes in the graph if their overall similarity score is high.
\end{itemize}

 of listings : i) listings of 7 business chains with the same name Union Bank (\emph{UB data set}); ii) listings of a single business chain Allstate Insurance (\emph{AI data set}); iii) listings of 14 business chains related to Farm Bureau Insurance (\emph{FBI data set}); and iv) listings of 44 randomly-selected business chains (\emph{random data set}). For each listing, we manually verified the real-world business chain it belongs to by checking store locations provided in business chain web-sites. In FBI data-set, there are 708 listings from 14 business chains with both erroneous {\sf name} (Texas Farm Bureau Insurance) and {\sf URL domain name} (farmbureauinsurance-mi.com); without any data cleansing, these 708 business listings will be merged as a single chain regardless of any record linkage techniques. To show the effectiveness of our technique and remain the hardness of the data, we manually corrected a large part of {\sf URL domain name} errors and set names of all business listings as Farm Bureau Insurance in FBI data-set.
}

\begin{figure}[t]
\vspace{-.1in}
\centering
\includegraphics[scale=.35]{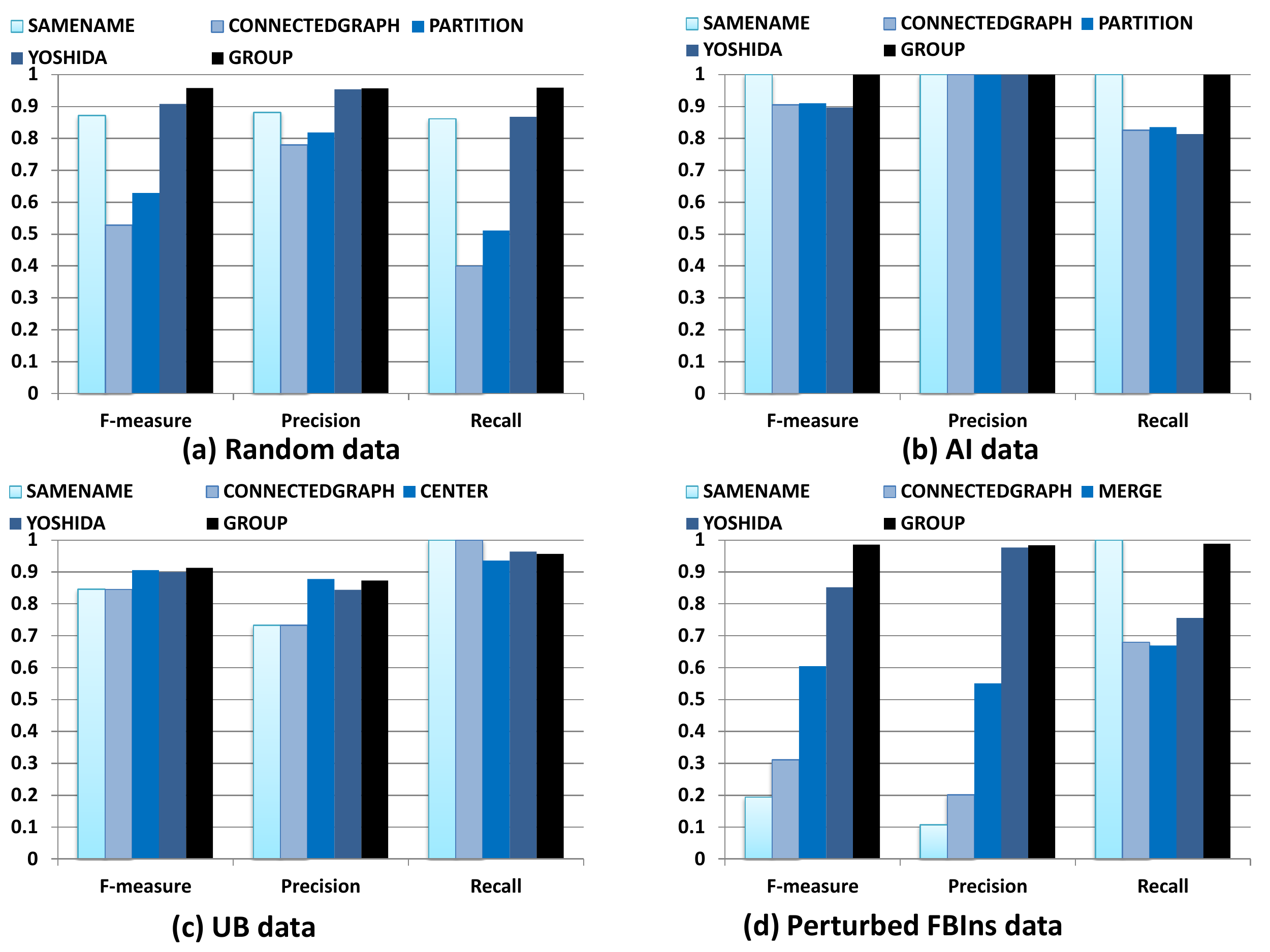}
\vspace{-0.24in}
{\small\caption{Overall results on {\em Biz} data set.\label{fig:overall}}}
\vspace{-.1in}
\end{figure}

\begin{figure}[t]
\centering
\includegraphics[scale=.35]{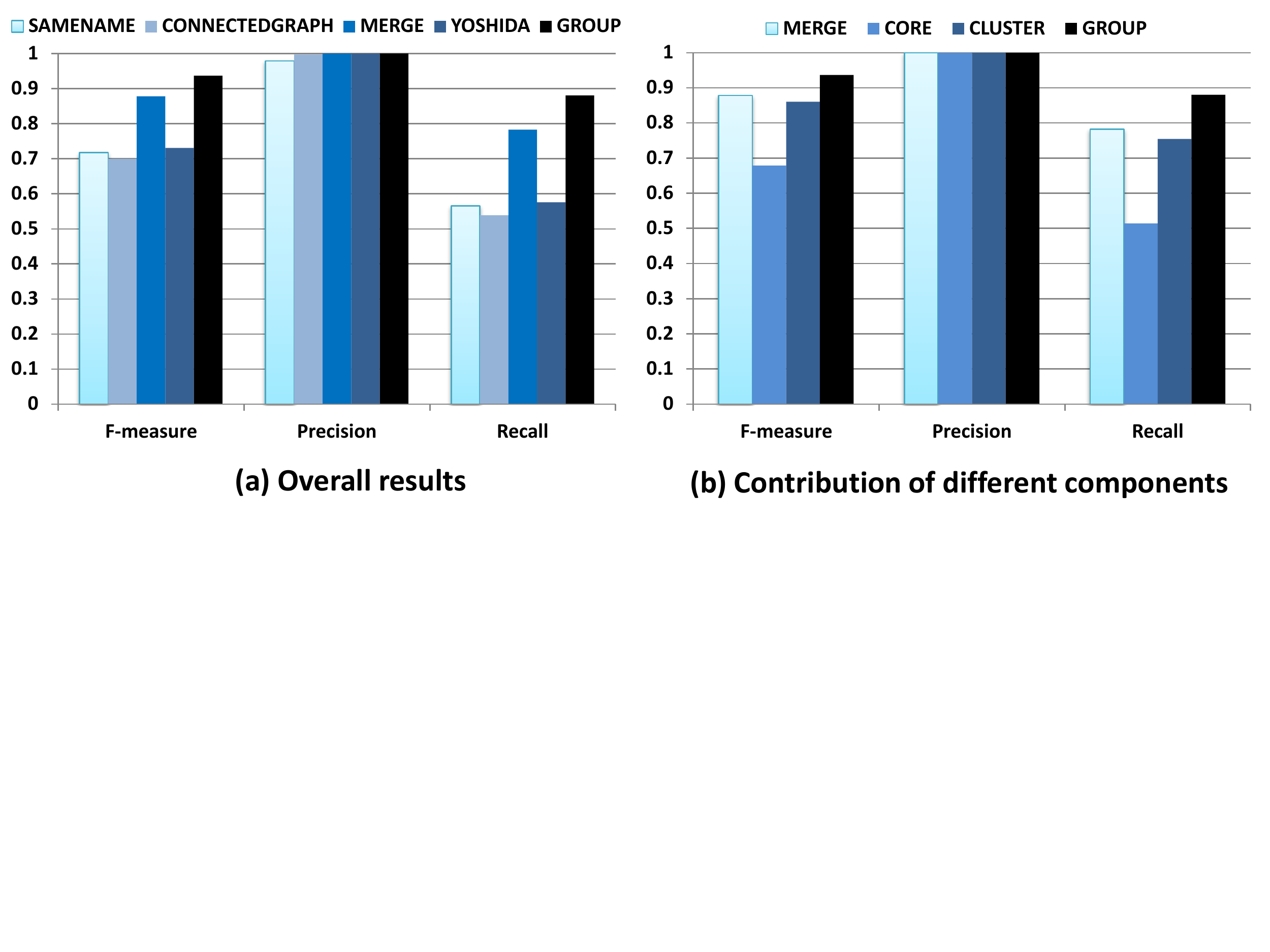}
\vspace{-1.5in}
{\small\caption{Results on {\em SIGMOD} data.\label{fig:sigmod_overall}}}
\vspace{-.2in}
\end{figure}

\subsection{Evaluating effectiveness}
\label{sec:effe}
We first evaluate effectiveness of our algorithms.
Figure~\ref{fig:overall} and Figure~\ref{fig:sigmod_overall}(a) compare {\sc Group} with the baseline methods, where for the three one-stage linkage methods
we plot only the best results. On {\em FBIns}, all methods put
all records in the same chain because a large number (708) of listings
have both a wrong name and a wrong URL. We manually perturbed the data
as follows: (1) among the 708 listings with wrong URLs,
408 provide a single (wrong) URL and we fixed it;
(2) for all records we set {\sf name} to {\em ``Farm Bureau
Insurance''}, so removed hints from business names.
Even after perturbing, this data set remains the hardest
and we use it hereafter instead of the original one for other experiments.

We have the following observations. (1) {\sc Group}
obtains the highest F-measure (above .9) on each data set.
It has the highest precision most of the time
as it applies core identification
and leverages the strong evidence collected from resulting cores.
It also has a very high recall (mostly above .95) on each subset
because the clustering phase is tolerant to diversity of
values within chains.
(2) The F-measure of {\sc SameName} is 7-80\% lower than {\sc Group}.
It can have false positives
when listings of highly similar names belong to different chains
and can also have false negatives when some listings in a chain
have fairly different names from other listings. It only
performs well in {\em AI}, where it happens that all listings
have the same name and belong to the same chain.
(3) The F-measure of {\sc ConnectedGraph} is 2-39.4\% lower than {\sc SameName}.
It requires in addition sharing at least
one value for dominant-value attributes. As a result, it has a lower recall
than {\sc SameName}; it has fewer false positives than
{\sc SameName}, but because it has fewer true positives,
its precision can appear to be lower too. (4) The highest F-measure of 
one-stage linkage methods is 1-94.7\% higher than {\sc ConnectedGraph}.
As they require high record similarity, it has similar number of
false positives to {\sc ConnectedGraph} but often has much more true positives;
thus, it often has a higher recall and also a higher precision.
However, the highest F-measure is still 1-38.7\% lower than {\sc Group}.
(5) {\sc Yoshida} has comparable precision to {\sc Group}
since its first stage is conservative too, which makes it often improve over the best of one-stage linkage methods on \emph{Biz} dataset where reducing false positives is a big challenge; on the other hand,
its first stage is often too conservative (requiring high
record similarity) so the recall is 10-34.6\% lower than {\sc Group}, which also makes it perform worse than one-stage linkage methods on \emph{Sigmod} dataset where reducing false negatives is challenging. 

\eat{
On the {\em FBIns} data set,
data set and Figure~\ref{fig:overall}(d) shows the results.
We observe that almost baseline methods have very low F-measure
(below .6) while {\sc Group} still obtains a F-measure
as high as .98. {\sc 2-Stage RecordLinkage} does not perform as well as {\sc Group} as in the other {\em business} data sets.

The {\em SIGMOD} data set has fewer wrong values in each attribute, therefore we observe that all algorithms have similar high precision (above .97). The most common problem we found in the data is multiple representations of the same value. We observe that all baseline algorithms are not as tolerant with such heterogeneity as {\sc Group} thus have lower recalls.
}

\begin{figure}[t]
\vspace{-.1in}
\centering
\includegraphics[scale=.35]{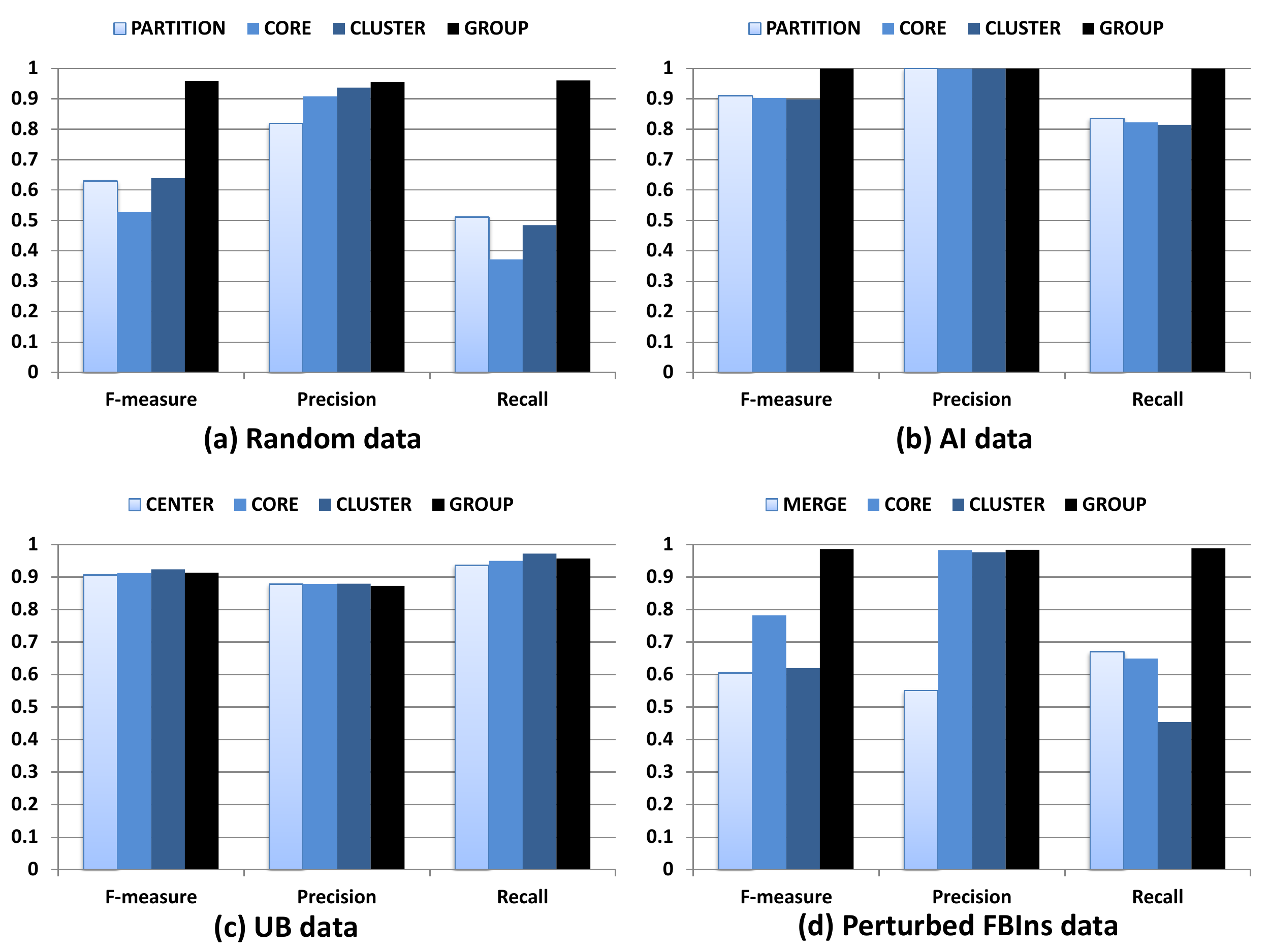}
\vspace{-0.2in}
{\small\caption{Contribution of different components on {\em Biz}.\label{fig:component}}}
\vspace{-.12in}
\end{figure}

\smallskip
\noindent
{\bf Contribution of different components:}
We compared {\sc Group} with (1) {\sc Core}, which
applies Algorithm {\sc CoreIdentification} but does not
apply clustering, and (2) {\sc Cluster}, which considers
each individual record as a core and applies Algorithm {\sc Cluster}
(in the spirit of~\cite{Larsen:1999:FET:312129.312186, Clustering07ricochet:a}).
Figure~\ref{fig:component} and Figure~\ref{fig:sigmod_overall}(b) show the results. First, we observe that {\sc Core} improves over one-stage linkage
methods on precision by .1-78.6\% but has a lower recall (1.5-34.3\% lower) most of the time,
because it sets a high requirement for merging records into groups.
Note however that its goal is indeed to obtain a high precision
such that the strong evidence collected from the cores are
trustworthy for the clustering phase.
Second, {\sc Cluster} often has higher precision (by 1.6-77.3\%) but lower
recall (by 2.5-32.2\%) than the best one-stage linkage methods; their F-measures
are comparable on each data set.
On some data sets ({\em Random, FBIns}) it can obtain an even
higher precision than {\sc Core}, because
{\sc Core} can make mistakes when too many records have
erroneous values, but {\sc Cluster} may avoid some of these
mistakes by considering also
similarity on {\sf state} and {\sf category}.
However, applying clustering on the results
of {\sc Cluster} would not change the results,
but applying clustering on the
results of {\sc Core} can obtain a much higher F-measure,
especially a higher recall (98\% higher than {\sc Cluster}
on {\em Random}). This is because the result of {\sc Cluster}
lacks the strong evidence collected from high-quality
cores so the final results would be less tolerant to diversity of values,
showing the importance of core identification.
Finally, we observe that {\sc Group} obtains the best
results in most of the data sets.

We next evaluate various choices in the two stages.
Unless specified otherwise, we observed similar patterns
on each data set from {\em Biz} and {\em Sigmod},
and report the results on {\em Random}
or perturbed {\em FBIns} data, whichever has more distinguishable results.

\subsubsection{Core identification}
\label{sec:exp-core}
\begin{figure}[t]
\centering
\includegraphics[scale=.35]{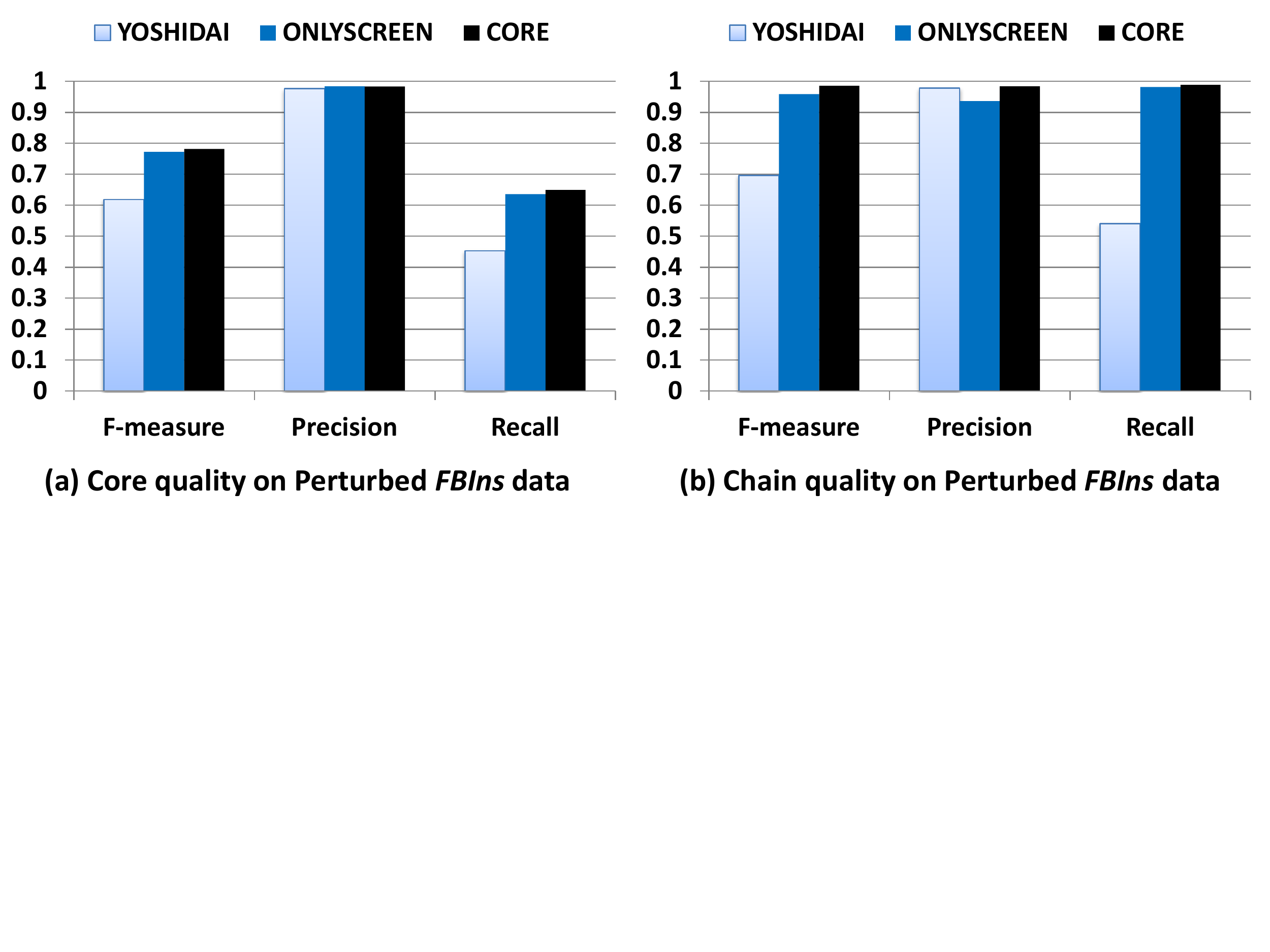}
\vspace{-1.5in}
{\small\caption{Core identification on perturbed {\em FBIns} data.\label{fig:fbi_red}}}
\vspace{-.1in}
\end{figure}
\noindent{\bf Core identification:} We first compared three core-generation
strategies: {\sc Core} iteratively invokes {\sc Screen}
and {\sc Split}, {\sc OnlyScreen} only iteratively invokes {\sc Screen},
and {\sc YoshidaI} generates cores by agglomerative clustering~\cite{Yoshida:2010:PND:1835449.1835454}.
Recall that by default we apply {\sc Core}.
Figure~\ref{fig:fbi_red} compares them on the perturbed {\em FBIns} data.
First, we observe similar results of {\sc OnlyScreen} and {\sc Core} on all data sets since most inputs to {\sc Split} pass the $k$-robustness test. Thus, although {\sc Screen} in itself cannot guarantee soundness of the resulting cores (k-robustness), it already does well in practice.
Second, {\sc YoshidaI} has lower recall in both core and clustering results,
since it has stricter criteria in core generation.

\begin{figure}[t]
\centering
\includegraphics[scale=.35]{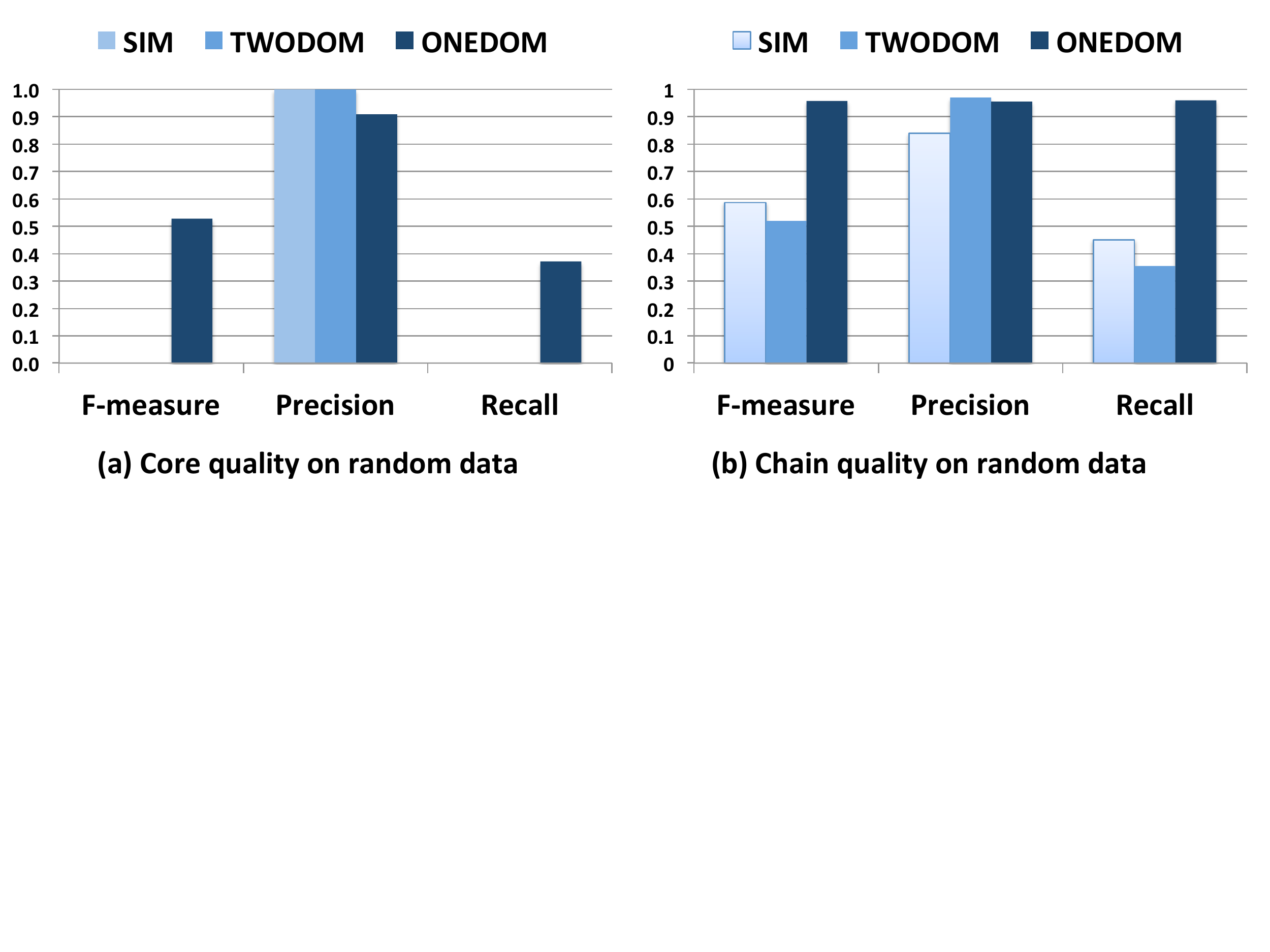}
\vspace{-1.5in}
{\small\caption{Effect of graph generation on {\em Random} data.\label{fig:cores}}}
\vspace{-.11in}
\end{figure}
\smallskip
\noindent{\bf Graph generation:} We compared three edge-adding
strategies for similarity graphs: {\sc Sim} takes weighted similarity
on each attribute except {\sf location} and requires
a similarity of over .8;
{\sc TwoDom} requires sharing {\sf name} and at least two
values on dominant-value attributes;
{\sc OneDom} requires
sharing {\sf name} and one value on dominant-value attributes.
Recall that by default we applied {\sc OneDom}.
Figure~\ref{fig:cores} compares these three strategies.
We observe that
(1) {\sc Sim} requires similar records so has a high precision,
with a big sacrifice on recall for the cores (0.00025);
as a result, the F-measure of the chains is very low (.59);
(2) {\sc TwoDom} has the highest requirements
and so even lower recall than {\sc Sim} for the cores (.00002),
and in turn it has the lowest F-measure for the chains (.52).
This shows that only requiring high precision for cores with
big sacrifice on recall can also lead to low F-measure for the
chains.

We also varied the similarity requirement for names
and observed very similar results (varying by .04\%)
when we varied the threshold from .8 to .95.

\begin{figure}[t]
\centering
\includegraphics[scale=.35]{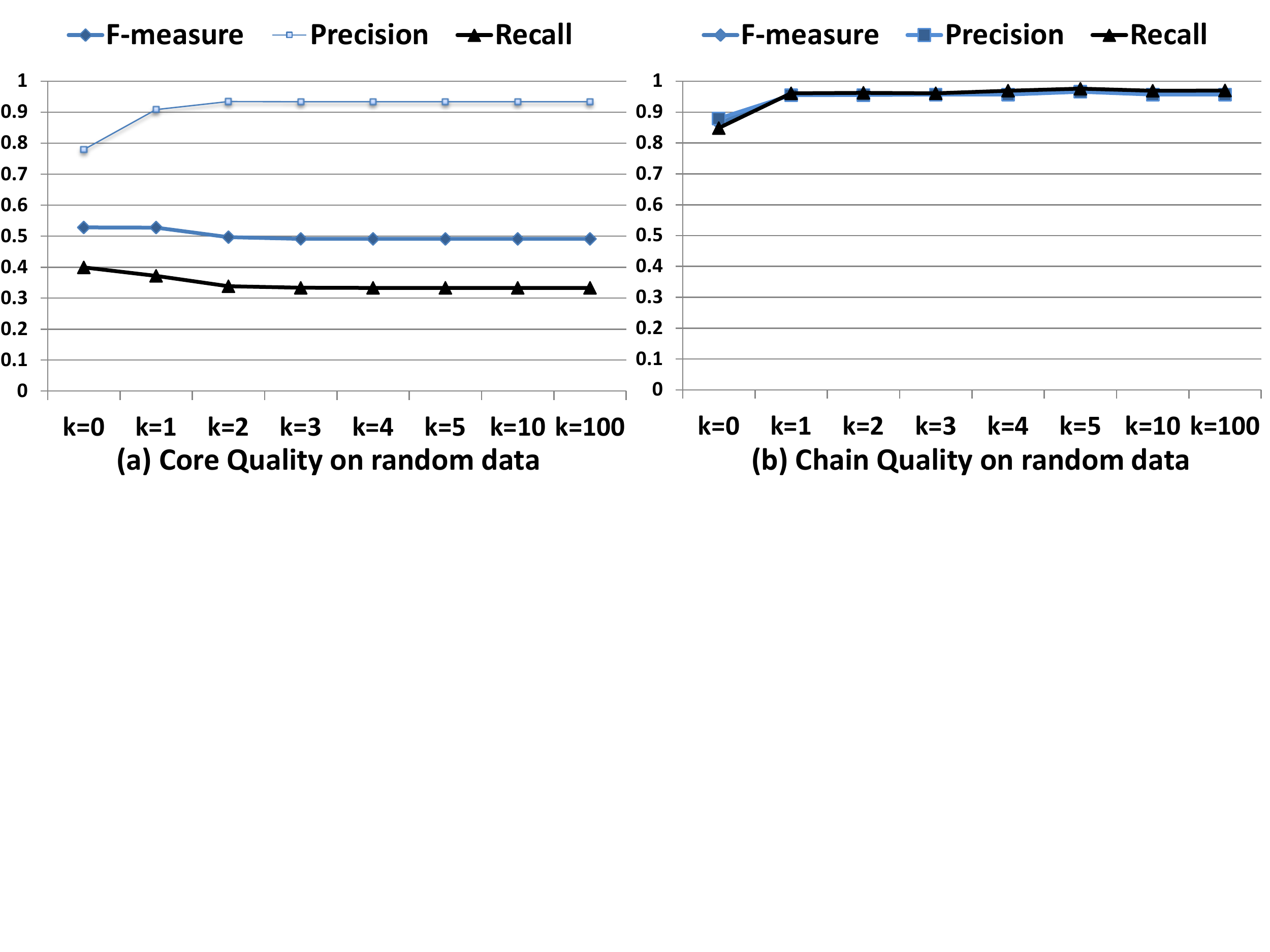}
\vspace{-1.5in}
{\small\caption{Effect of robustness requirement on {\em Random} data.\label{fig:fbi_robust}}}
\vspace{-.1in}
\end{figure}
\smallskip
\noindent
{\bf Robustness requirement:} We next studied how the robustness
requirement can affect the results (Figure~\ref{fig:fbi_robust}).
We have three observations.
(1) When $k=0$, we essentially take every connected subgraph
as a core, so the generated cores can have a much
lower precision; those false positives cause both a low precision
and a low recall for the resulting chains because we do not collect
high-quality strong evidence.
(2) When we vary $k$ from 1 to 4, the number of false positives
decreases while that of false negatives increases for the cores,
and the F-measure of the chains increases but only very slightly.
(3) When we continue increasing k, the results of cores and clusters remain stable. This is because setting k=4 already splits the graph into subgraphs, each containing a single v-clique, so further increasing k would not change the cores. This shows that considering $k$-robustness is
important, but $k$ does not need to be too high.

%

\subsubsection{Clustering}


%

\eat{
\begin{figure}[t]
\begin{center}
\begin{minipage}{.49\linewidth}
\begin{center}
\includegraphics[width=8cm]{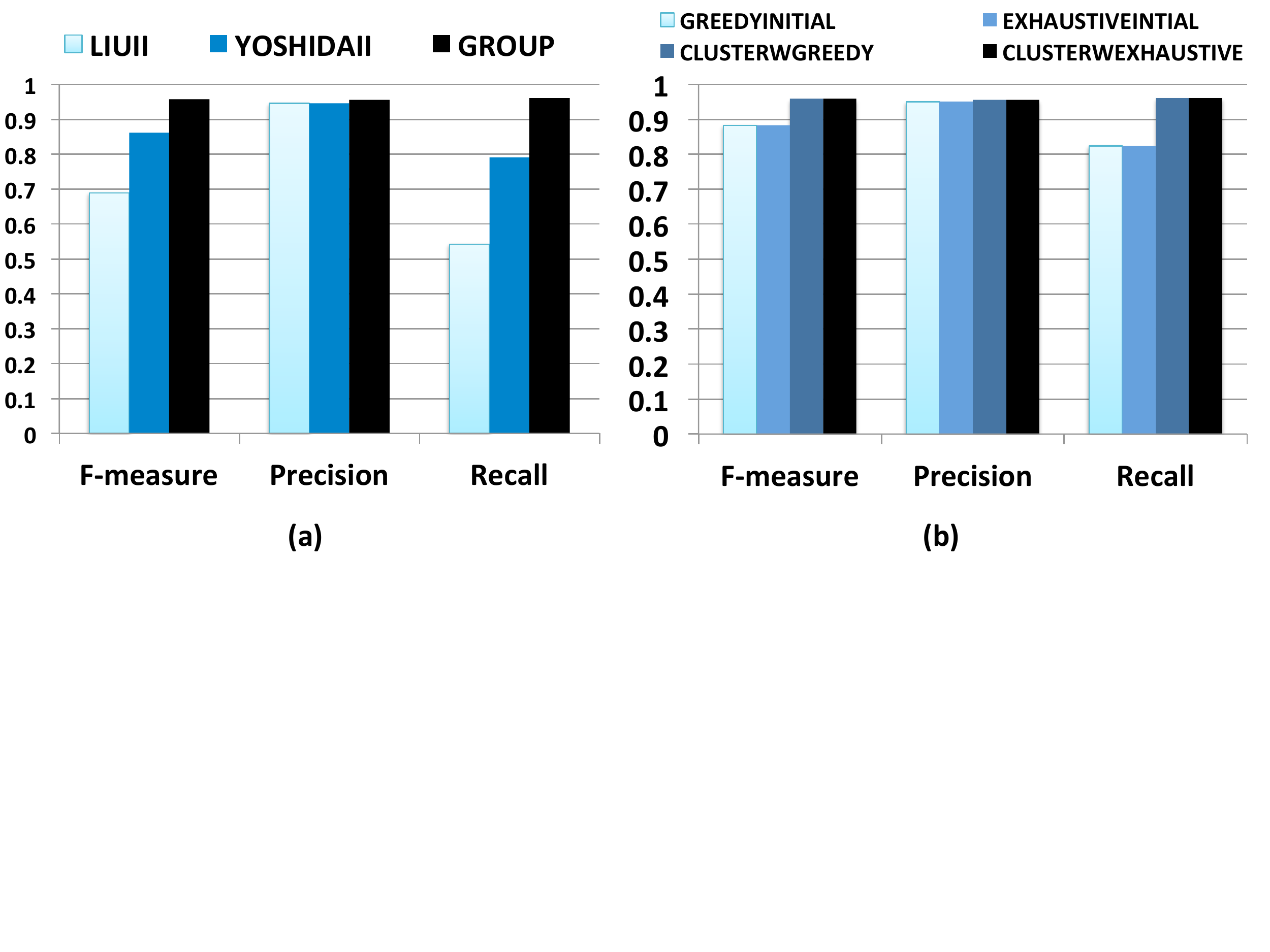}
\vspace{-1.5in}
{\small\caption{Clustering strategies on {\em Random} data.\label{fig:chain_comp}}}
\end{center}
\end{minipage}
\hfill
\begin{minipage}{.49\linewidth}
\begin{center}
\includegraphics[width=8cm]{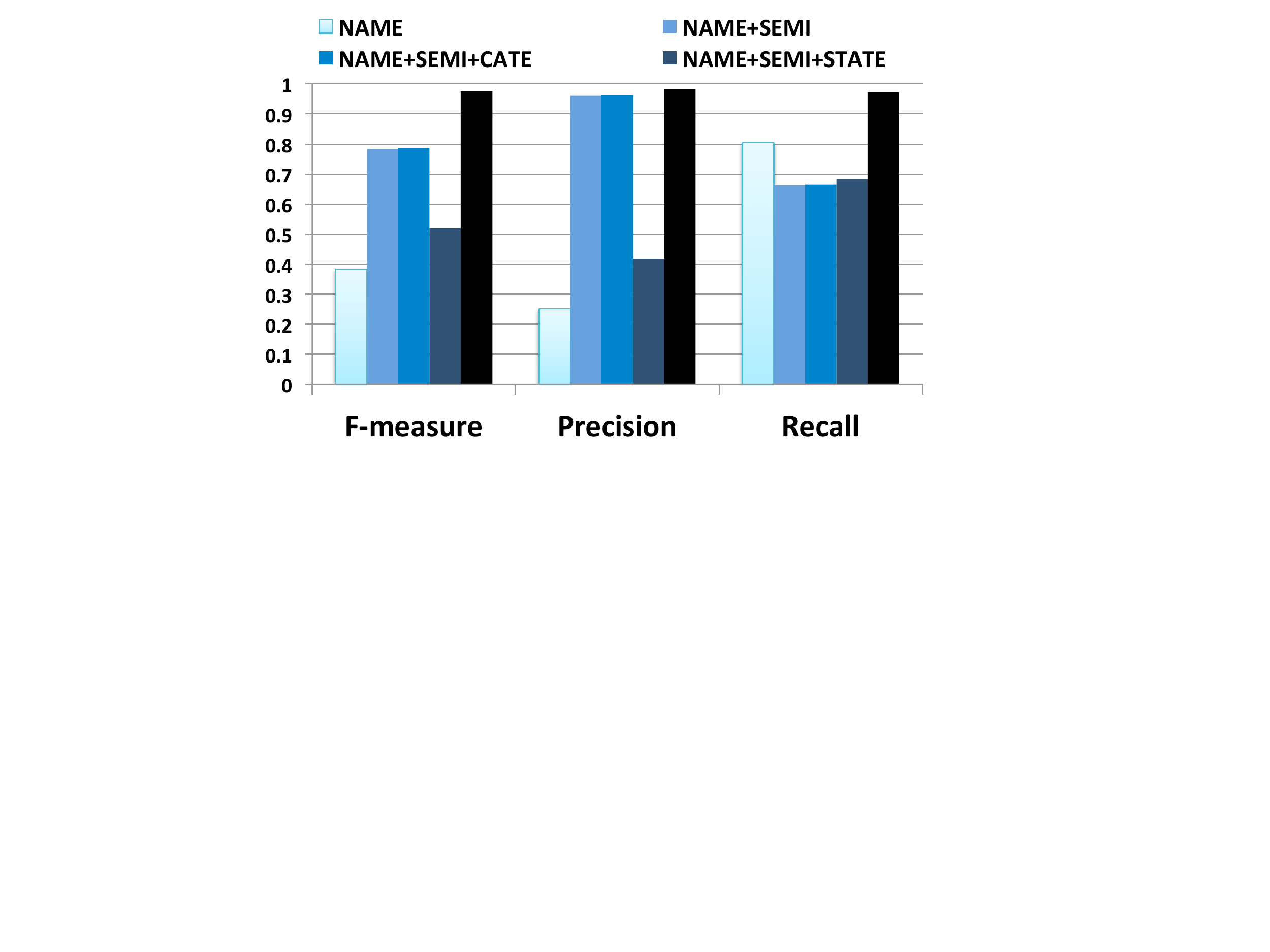}
\vspace{-1.5in}
{\small\caption{Attribute contribution on perturbed {\em FBIns}.\label{fig:attr-cntr}}}
\end{center}
\end{minipage}
\end{center}
\vspace{-.3in}
\end{figure}
}

\begin{figure}[t]
\begin{center}
\includegraphics[width=8cm]{cluster_comp}

\vspace{-1.1in}
{\small\caption{Clustering strategies on {\em Random} data.\label{fig:chain_comp}}}
\end{center}
\vspace{-.25in}
\end{figure}
\noindent{\bf Clustering strategy}: We first compared our clustering
algorithm with two algorithms proposed for the second stage of
two-stage clustering: {\sc LiuII}~\cite{Liu:2002:DCC:564376.564411}
iteratively applies majority voting to assign each record to
a cluster and collects a set of representative features for each cluster using a threshold
(we set it to 5, which leads to the best results);
{\sc YoshidaII}~\cite{Yoshida:2010:PND:1835449.1835454} is
the second stage of {\sc Yoshida}.
Figure~\ref{fig:chain_comp}(a) compares their results.
We observe that our clustering method improves the recall by
39\% over {\sc LiuII} and by 11\% over {\sc YoshidaII}.
{\sc LiuII} may filter strong evidence by the threshold;
{\sc YoshidaII} cannot handle records whose dominant-value
attributes have null values well.

\eat{(1) starting from the result of {\sc Core}, our clustering algorithm improves F-measure over baseline methods by up to 39\%, (2) applying cluster adjusting can improve
the F-measure a lot (by 8.6\%), and (2) exhaustive initialization
does not significantly improve over greedy initialization, if at all.
This shows effectiveness of the current algorithm {\sc Cluster}.
}

We also compared four clustering
algorithms: {\sc GreedyInitial} performs only initialization
as we described in Section~\ref{sec:sate};
{\sc ExhaustiveInitial} also performs only initialization, but
by iteratively conducting matching and merging
until no record can be merged
to any core; {\sc ClusterWGreedy} applies cluster adjusting
on the results of {\sc GreedyInitial}, and
{\sc ClusterWExhaustive} applies cluster adjusting
on the results of {\sc ExhaustiveInitial}.
Recall that by default we apply {\sc ClusterWGreedy}.
Figure~\ref{fig:chain_comp}(b) compares their results.
We observe that (1) applying cluster adjusting can improve
the F-measure a lot (by 8.6\%), and (2) exhaustive initialization
does not significantly improve over greedy initialization, if at all.
This shows effectiveness of the current algorithm {\sc Cluster}.


\begin{figure}[t]
\begin{center}
\begin{minipage}{.49\linewidth}
\begin{center}
\includegraphics[width=8cm]{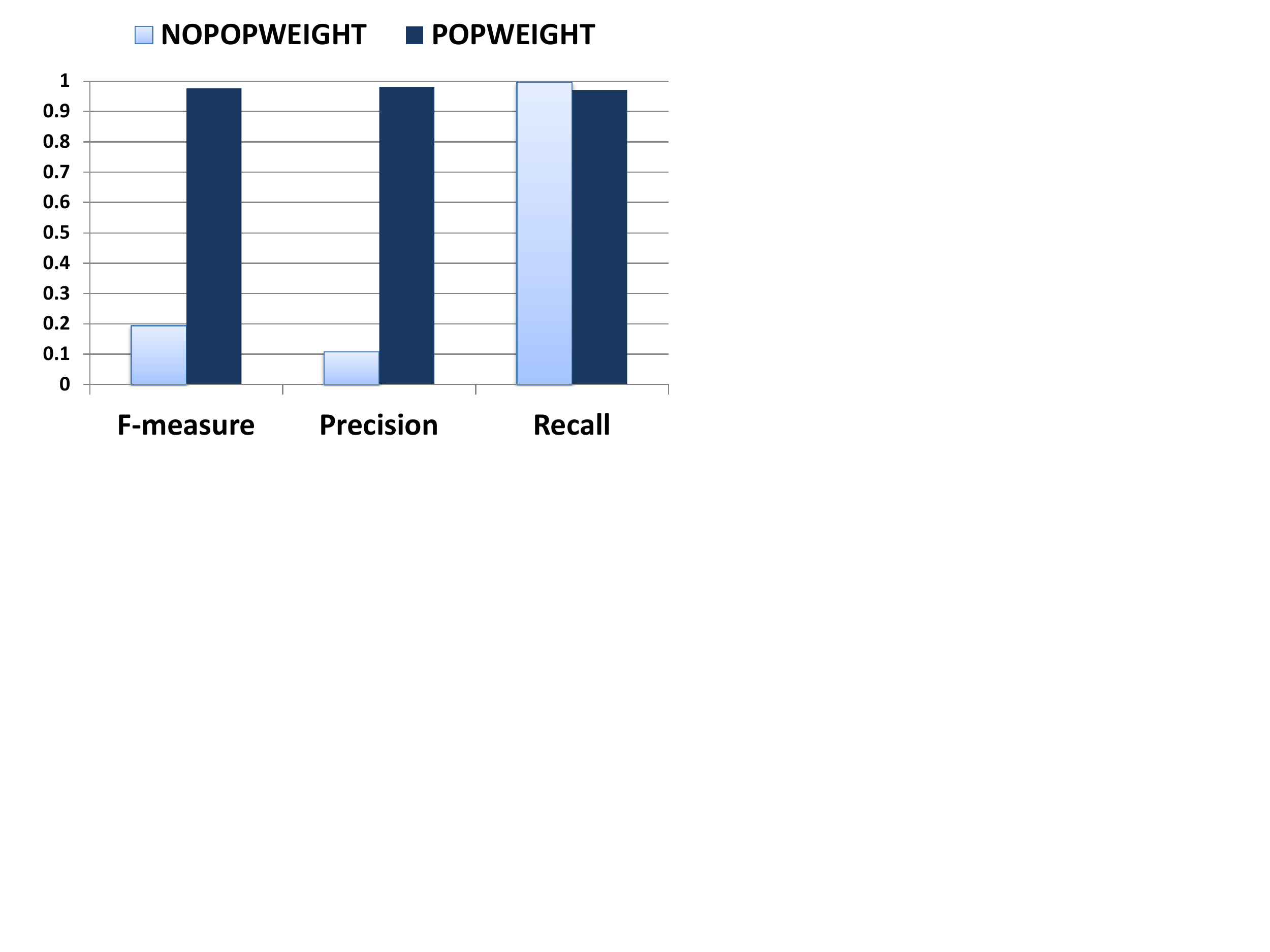}
\vspace{-1.5in}
{\small\caption{Value weights on perturbed {\em FBIns} data.\label{fig:popularity}}}
\end{center}
\end{minipage}
\hfill
\begin{minipage}{.49\linewidth}
\begin{center}
\includegraphics[width=8cm]{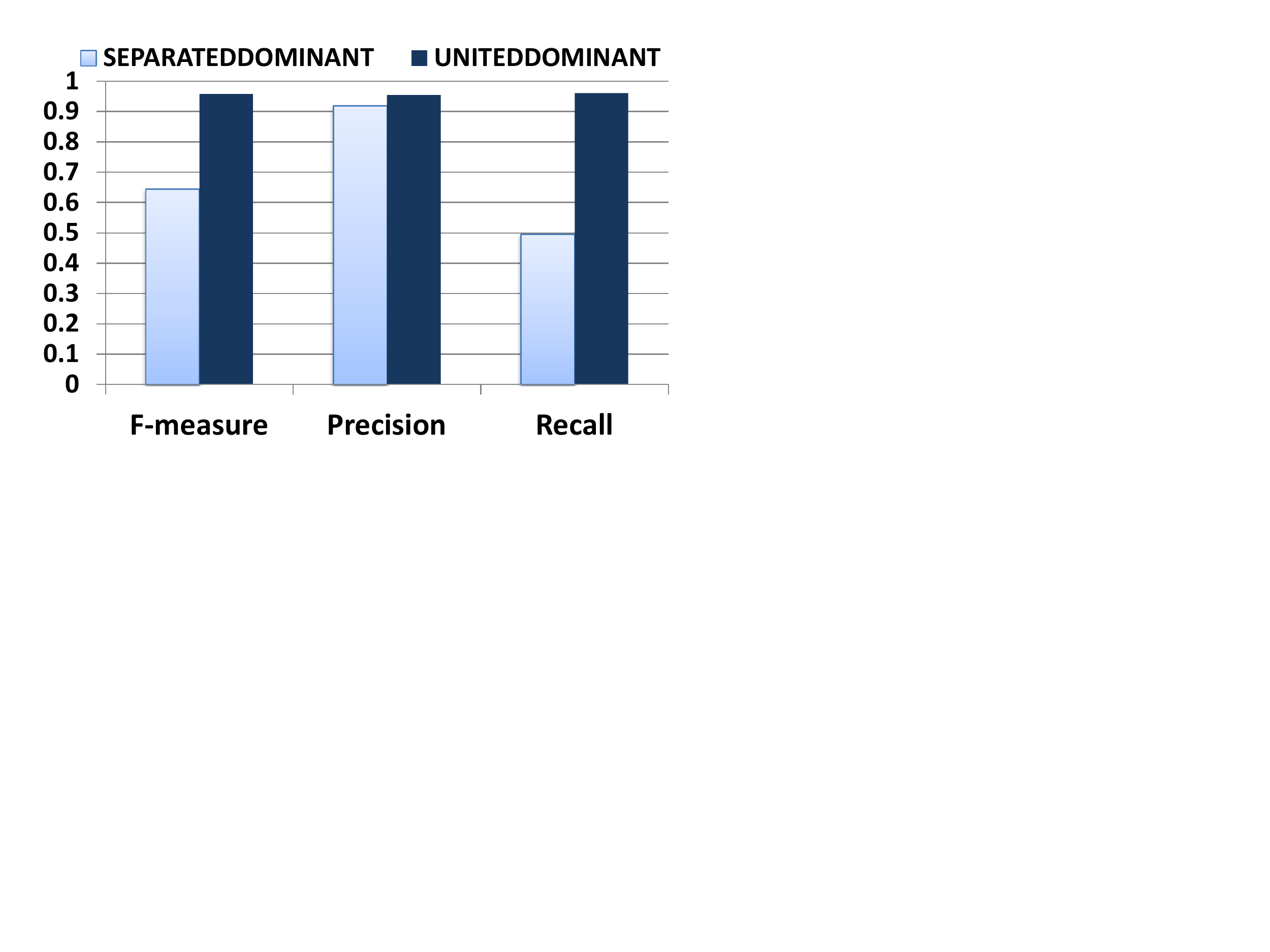}
\vspace{-1.5in}
{\small\caption{Dominant-value attributes on {\em Random}.\label{fig:dmnt}}}
\end{center}
\end{minipage}
\end{center}
\vspace{-.2in}
\end{figure}
\smallskip
\noindent{\bf Value weight:} We then compared the results with and without
setting popularity weights for values.
Figure~\ref{fig:popularity} compares the results with and without
setting popularity weights on perturbed {\em FBIns} data.
We observe that setting the popularity weight helps
distinguish primary values from unpopular values,
thus can improve the precision. Indeed, on perturbed {\em FBIns} data it improves
the precision from .11 to .98, and improves the F-measure by 403\%.
\eat{ and so improve F-measure.
We also observe an improvement of 1\% on {\em Random}
and 5\% on {\em UB} for F-measure. 
}

\begin{figure}[t]
\begin{center}
\begin{minipage}{.49\linewidth}
\begin{center}
\includegraphics[width=8cm]{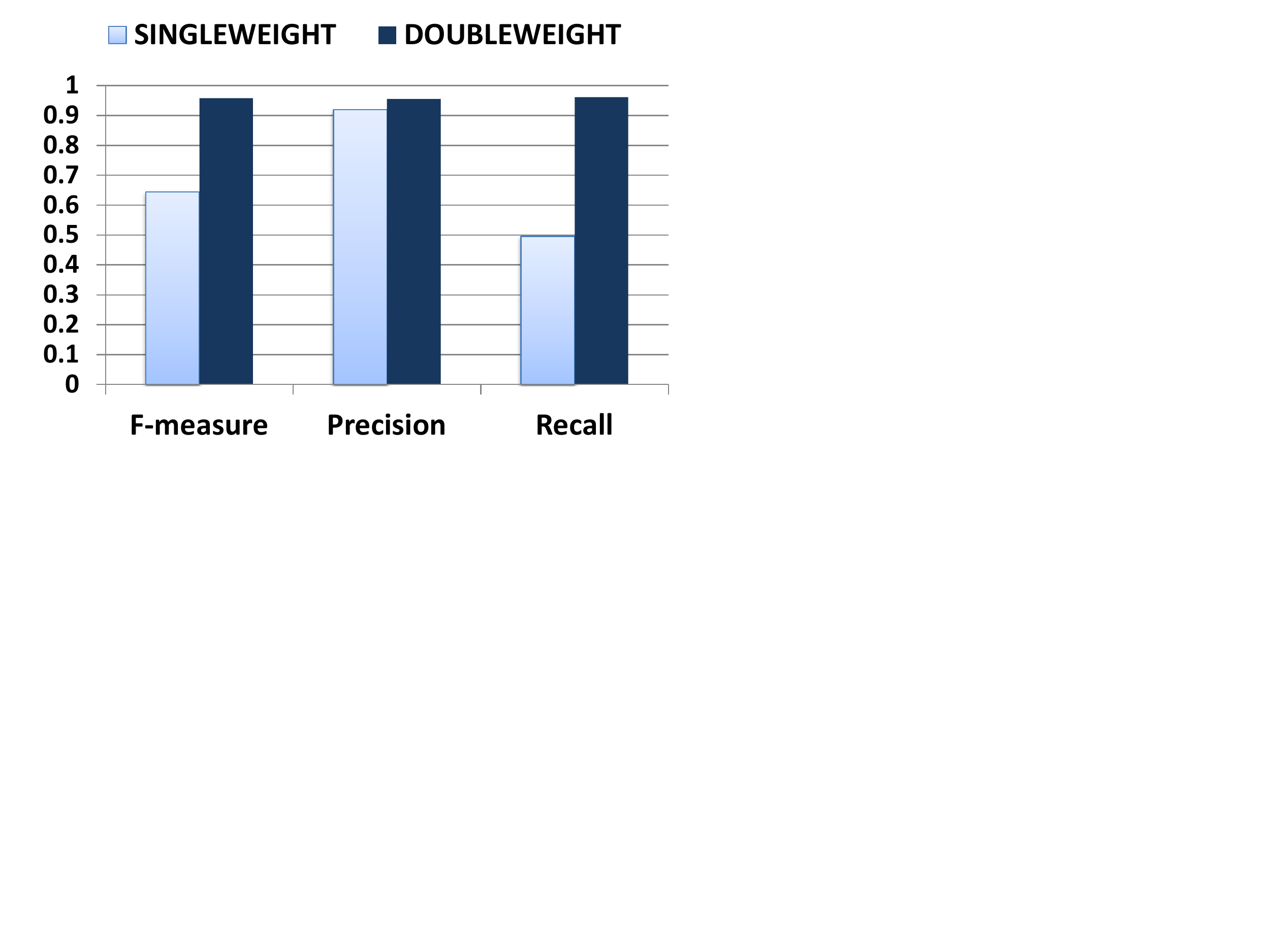}
\vspace{-1.5in}
{\small\caption{Distinct values on \emph{Random} data.\label{fig:agree}}}
\end{center}
\end{minipage}
\hfill
\begin{minipage}{.49\linewidth}
\begin{center}
\includegraphics[width=8cm]{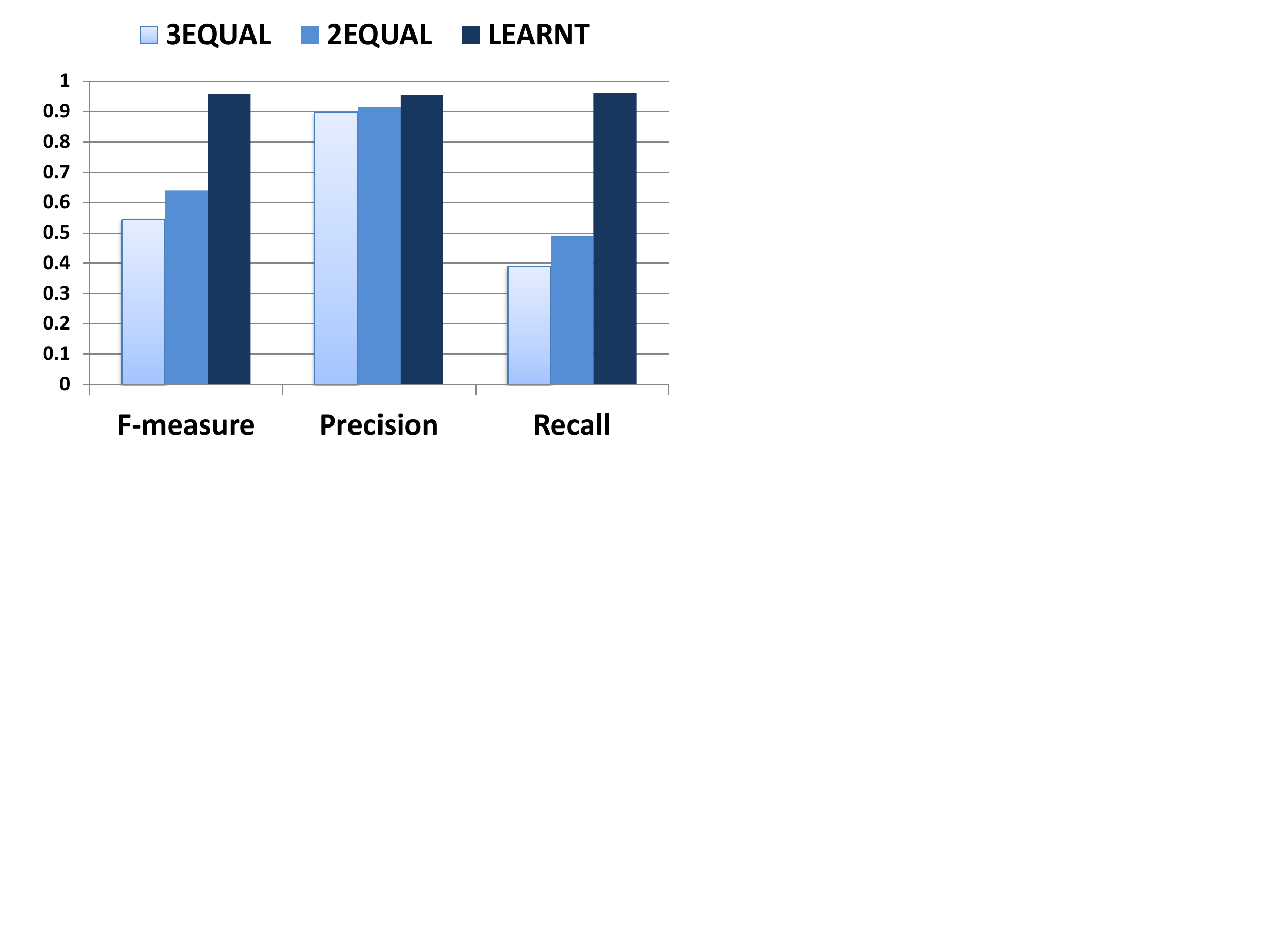}
\vspace{-1.5in}
{\small\caption{Attribute weights on {\em Random} data.\label{fig:equal}}}
\end{center}
\end{minipage}
\end{center}
\vspace{-.2in}
\end{figure}

\smallskip
\noindent{\bf Attribute weight:} We next considered our weight
learning strategy. We first compared {\sc SeparatedDominant},
which learns separated
weights for different dominant-value attributes, and
{\sc UnitedDominant} (our default), which considers all such attributes
as a whole and learns one single weight for them.
Figure~\ref{fig:dmnt} shows that on {\em Random} the latter improves
over the former by 95.4\% on recall and obtains slightly higher
precision, because it penalizes only if neither phone nor URL
is shared and so is more tolerant to different values
for dominant-value attributes. This shows importance of
being tolerant to value variety on dominant-value attributes.

Next, we compared {\sc SingleWeight}, which learns a single weight
for each attribute, and {\sc DoubleWeight} (our default), which learns different
weights for distinct values and non-distinct values for each attribute.
Figure~\ref{fig:agree} shows that
{\sc DoubleWeight} significantly improves the recall
(by 94\% on {\em Random}) since it rewards
sharing of distinct values, and so can link some satellite records
with null values on dominant-value attributes to the chains they
should belong to. This shows importance of distinguishing
distinct and non-distinct values.

We also compared three weight-setting
strategies: (1) {\sc 3Equal} considers common-value attributes,
dominant-value attributes, and multi-value attributes,
and sets the same weight for each of them;
(2) {\sc 2Equal} sets equal weight of .5 for
common-value attributes and dominant-value attributes,
and weight of .1 for each multi-value attribute;
(3) {\sc Learned} applies weights learned from labeled data.
Recall that by default we applied {\sc Learned}.
Figure~\ref{fig:equal} compares their results.
We observe that (1) {\sc 2Equal} obtains higher F-measure than
{\sc 3Equal} (.64 vs. .54), since it distinguishes between strong and weak indicators for record similarity; (2) {\sc Learned} significantly outperforms the other two strategies (by 50\% over {\sc 2Equal} and by 76\% over {\sc 3Equal}), showing effectiveness of weight learning. This shows importance of weight learning.



\begin{figure}[t]
\begin{center}
\includegraphics[width=8cm]{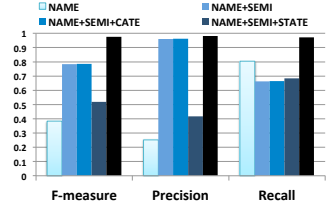}
\vspace{-1.4in}
{\small\caption{Attribute contribution on perturbed {\em FBIns}.\label{fig:attr-cntr}}}
\end{center}
\vspace{-.25in}
\end{figure}

\smallskip
\noindent{\bf Attribute contributions:} We then consider the
contribution of each attribute for chain classification.
Figure~\ref{fig:attr-cntr} shows the results
on the perturbed {\em FBIns} data
and we have four observations.
(1) Considering only {\sf name} but not any other attribute obtains a
high recall but a very low precision, since all listings on this
data set have the same name. (2) Considering dominant-value
attributes in addition to {\sf name} can improve the precision
significantly and improve the F-measure by 104\%.
(3) Considering {\sf category} in addition does not further
improve the results while considering {\sf state} in addition
even drops the precision significantly, since three chains
in this data set contain the same wrong value on {\sf state}.
(4) Considering both {\sf category} and {\sf state} improves
the recall by 46\% and obtains the highest F-measure.

\smallskip
\noindent{\bf Robustness w.r.t. parameters:} We also ran experiments to test robustness against parameter setting. We observed very similar
results when we ranged $p$ from .8 to 1 and $\theta_{th}$ from .5
to .7. 

\subsection{Evaluating efficiency}\label{subsec:effi}
Our algorithm finished in 8.3 hours
on the whole {\em Biz} data set with 18M listings;
this is reasonable given that it is an offline process
and we used a single machine. Note that simpler methods (we describe shortly) 
took over 10 hours even for the first stage on fragments of the {\em Biz} data set. 
Also note that using the Hadoop infrastructure can reduce execution time
for graph construction from 1.9 hours to 37 minutes; 
we skip the details as it is not the focus of the paper.

\eat{
A lot of components of
the algorithm are good candidates for parallelization (\eg, core generation,
clustering), so we can imagine much shorter execution time
in Hadoop infrastructure and would leave this for future work.
}

\smallskip
\noindent
{\bf Stage I:}
It spent 1.9 hours for graph construction and 2.2 minutes
for core generation. To test scalability and understand
importance of our choices for core generation, we randomly divided the whole data set into five subsets of the same size; we started with one subset and gradually added more. 
We compared five core generation methods:
{\sc Naive} applies {\sc Split}
on the original graph; {\sc Index} optimizes {\sc Naive}
by using an inverted index; {\sc SIndex} simplifies
the inverted list by Theorem~\ref{thm:simplify};
{\sc Union} in addition merges v-cliques into v-unions
by Theorem~\ref{thm:union}; {\sc Core} (Algorithm 1)
in addition splits the input graph by
Theorem~\ref{thm:necessary}. Figure~\ref{fig:scalability}(a) shows the results
and we have five observations.
(1) {\sc Naive} was very slow. Even though it applies {\sc Split}
rather than finding the max flow for every pair of nodes,
so already optimizes by Theorem~\ref{thm:maxflow},
it took 6.8 hours on only 20\% data and took more than 10 hours
on 40\% data. (2) {\sc Index} improved {\sc Naive} by
two orders of magnitude just because the index simplifies
finding neighborhood
v-cliques; however, it still took more than 10 hours on 80\% data.
(3) {\sc SIndex} improved {\sc Index} by 41\% on 60\% data
as it reduces the size of the inverted index by 64\%.
(4) {\sc Union} improved {\sc SIndex} by 47\% on
60\% data; however, it also took more than 10 hours on 80\% data.
(5) {\sc Core} improved {\sc Union} significantly; it
finished in 2.2 minutes on the whole data set so further reduced execution
time by at least three orders of magnitude, showing importance of splitting.
Finally, for graph construction, Figure~\ref{fig:scalability}(b) shows 
the linear growth of the execution time.

\smallskip
\noindent
{\bf Stage II:}
After core identification we have .7M cores and 17.3M
satellites. It spent 6.4 hours for clustering:
1.7 hours for blocking and 4.7 hours for clustering.
The long time for clustering is because of the huge number
of blocks. There are 1.4M blocks with multiple elements
(a core is counted as one element),
with a maximum size of 22.5K and an average of 4.2.
On only 35 blocks clustering took more than 1 minute
and the maximum is 2.5 minutes, but for 99.6\% blocks
the size is less than 100 and {\sc Cluster} took less
than 60 ms. The average time spent on each block
is only 9.6 ms. 


\eat{ (1) the execution time of {\sc CliqueOnly} grows exponentially as the data size increases, indicating that pruning the graph by Theorem~\ref{thm:necessary} dramatically reduces time complexity (from more than 24 hours to 2.2 minutes); (2) the execution times of {\sc UnionOnly} and {\sc Core} are comparable, since more than 90\% input graphs are v-cliques, which leaves little space for Theorem~\ref{thm:union} to take effect, (3) {\sc FullIndex} performs almost the same as {\sc Core}, showing that using inverted list without simplification already improves efficiency. In addition, {\sc Naive} (not shown in Figure~\ref{fig:scalability}) even takes 5.9 hours on one partition of the data (20\%), indicating the necessity of the above optimization.}

\begin{figure}[t]
\vspace{-.1in}
\centering
\includegraphics[scale=.35]{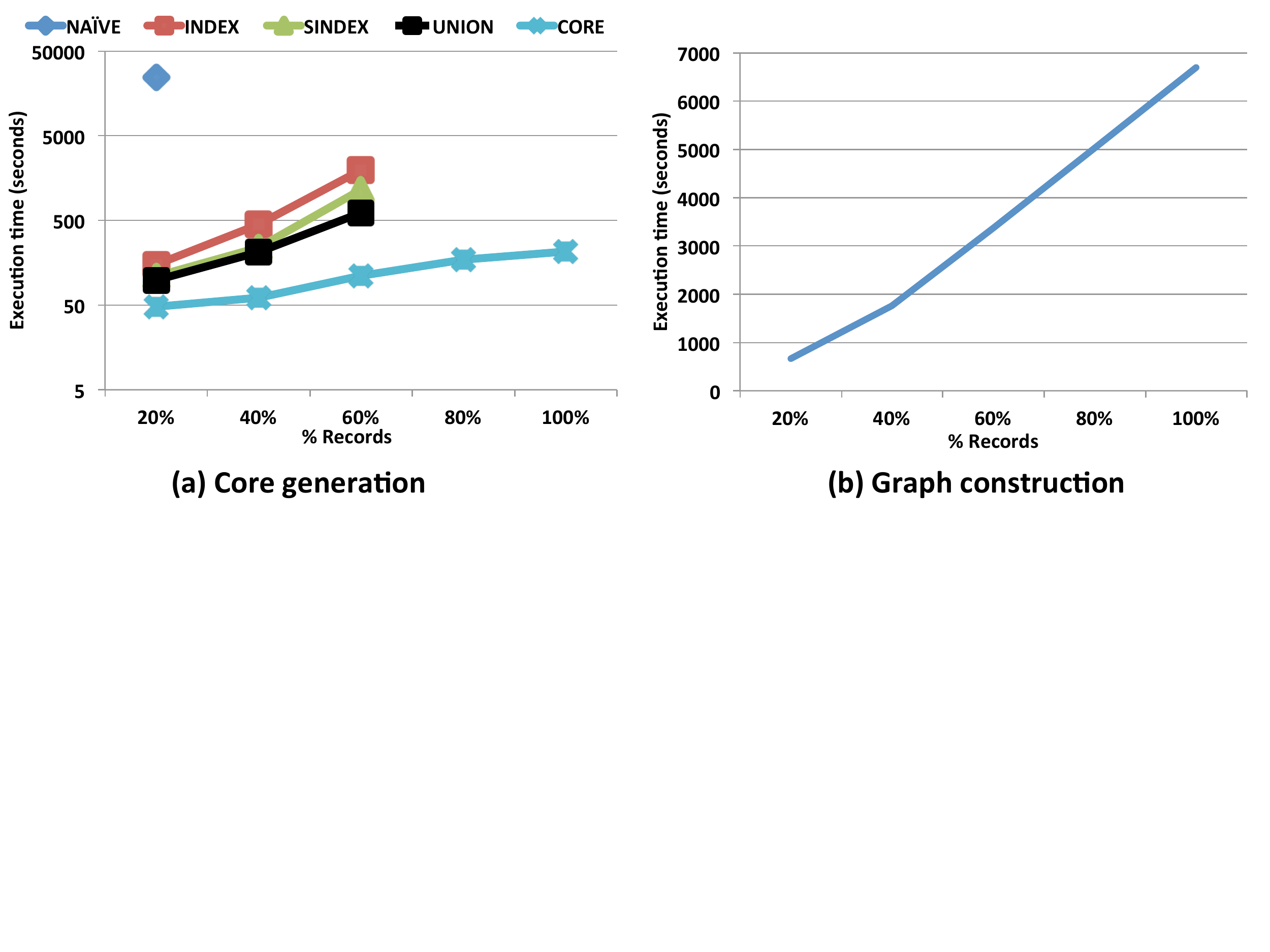}
\vspace{-1.5in}
{\small\caption{Execution time (we plot only those below 10 hours).\label{fig:scalability}}}
\vspace{-.2in}
\end{figure}

\eat{
\begin{figure}[t]
\begin{center}
\begin{minipage}{.48\linewidth}
\begin{center}
\includegraphics[width=7cm]{scalability}
\vspace{-1.4in}
{\small\caption{Algo. scalability.\label{fig:scalability}}}
\end{center}
\end{minipage}
\hfill
\begin{minipage}{.48\linewidth}
\begin{center}
\includegraphics[width=7cm]{cluster_dist}
\vspace{-1.4in}
{\small\caption{Exe. time distr. of {\sc CLUSTER}.\label{fig:cluster-dist}}}
\end{center}
\end{minipage}
\end{center}
\vspace{-.3in}
\end{figure}
}

\subsection{Summary and discussions}\label{sec:sum}
\smallskip
\noindent
{\textbf{Summary}}: We summarize our observations as follows.
\vspace{-.1in}
\begin{enumerate}\tightlist
\item Identifying cores and leveraging evidence learned from
the cores is crucial in group linkage.
\item There are often erroneous values in real data
and it is important to be robust against them;
applying {\sc OneDom} and requiring $k \in [1, 5]$ already performs
well on most data sets that have reasonable number of errors.
\item Distinguishing the weights for distinct and non-distinct
values, and setting weights of values according to their popularity
are critical for obtaining good clustering results.
\item Our algorithm is robust on reasonable
parameter settings.
\item Our algorithm is efficient and scalable.
\end{enumerate}

\smallskip
\noindent
{\textbf{Discussion}}: In the paper, we present single-machine algorithms to identify groups. Performing such date-intensive tasks on powerful distributed hardwares and service infrastructures has become popular, in particular with the emerging of widely advisable MapReduce programming model~\cite{Shvachko:2010:HDF:1913798.1914427, Chambers:2010:FEE:1806596.1806638, FerreiraCordeiro:2011:CVL:2020408.2020516}. We next discuss possible parallalized solutions of our algorithms in Hadoop infrastructure. 

For graph construction, we can proceed in two steps: (1) to create all cliques where nodes sharing the same common-value and a particular dominant-valued attribute are in the same clique, and (2) to find all maximal cliques. In step (1), we first distribute records and map a record $r$ to one or more $<key, value>$ pairs where $key$ is a value on a particular dominant-value attribute of $r$ and $value$ is the value for common-value attribute of $r$ (Mapper). We then find cliques in each block with a particular $key$, and meanwhile keep an inverted list for each block (Reducer). Step (2) takes the output inverted lists and cliques in Step (1) as input. It first uses each entry in the inverted lists as a $<key, value>$ pair to map cliques, so that all cliques that a record $r$ belongs to are mapped into the same block (Mapper). We then find all maximal cliques within each block (Reducer).

To detect cores in the similarity graphs, the algorithm proceeds iteratively. We can use Spark~\cite{Zaharia:2010:SCC:1863103.1863113}, a cluster computing framework to support iterative jobs while retaining the scalability and fault tolerance of MapReduce. For each iteration, we first partition the input graphs into blocks so that each block contains all records of the same maximal connected component (Mapper), and proceed {\sc Core} within each block in parallel (Reducer). Note that the MapReduce solution may not denominate our single-machine solution that takes only 2.2 minutes, because of the additional overhead of the MapReduce program. 

In similar ways, we identify groups as follows. We first partition the input elements (satellites and cores) into blocks so that each block contains elements that may potentially belong to the same group (Mapper), and proceed {\sc Cluster} within each block in parallel (Reducer).

\section{Conclusions}\label{sec:conclude}
In this paper we studied how to link records to identify groups. 
We proposed a two-stage algorithm that is shown to be empirically 
scalable and accurate over two real-world data sets. 
Future work includes studying the best way to combine record linkage
and group linkage, extending our techniques for finding overlapping 
groups, 
and applying the two-stage framework in other contexts where
tolerance to value diversity is critical.

\bibliographystyle{abbrv}
\bibliography{base}

%
\end{document}